\documentclass[
a4paper
]{article}

\usepackage[utf8]{inputenc}

\usepackage{xspace}
\usepackage{algorithm}
\usepackage[noend]{algpseudocode}
\usepackage{amsmath}
\usepackage{amssymb}
\usepackage{amsthm}

\usepackage{MnSymbol}
\usepackage{mathtools}
\usepackage{textcomp}
\usepackage{fullpage}

\usepackage{varwidth}
\usepackage{graphicx}

\theoremstyle{plain}
\newtheorem{theorem}{Theorem}[section]
\newtheorem*{theorem*}{Theorem}
\newtheorem{proposition}[theorem]{Proposition}
\newtheorem*{proposition*}{Proposition}
\newtheorem{lemma}[theorem]{Lemma}
\newtheorem*{lemma*}{Lemma}
\newtheorem{corollary}[theorem]{Corollary}
\newtheorem*{corollary*}{Corollary}

\newtheorem*{claim*}{Claim}

\theoremstyle{definition}

\newtheorem*{definition*}{Definition}

\theoremstyle{remark}

\newtheorem*{observation*}{Observation}
\newtheorem{example}[theorem]{Example}
\newtheorem*{example*}{Example}

\makeatletter
\@fleqnfalse
\@mathmargin\@centering
\makeatother

\usepackage{tikz}
\usetikzlibrary{automata, graphs,positioning,chains,arrows,snakes,decorations.pathmorphing}

\tikzset{
	defaultstyle/.style={
		>=stealth, 
		semithick, 
		auto,
		initial text= {},
		initial distance= {3mm},
		accepting distance= {3mm}}}

\DeclareMathOperator{\eword}{\varepsilon}

\newcommand{\firstExampleEx}{E_1}
\newcommand{\firstExampleTrans}{T_1}
\newcommand{\secondExampleEx}{E_2}
\newcommand{\secondExampleTrans}{T_2}

\newcommand\Vtextvisiblespace[1][.3em]{%
	\mbox{\kern.06em\vrule height.3ex}%
	\vbox{\hrule width#1}%
	\hbox{\vrule height.3ex}}

\DeclareMathOperator{\spans}{\textsf{Spans}}
\newcommand{\varset}{\mbox{\ensuremath{\mathcal{X}}}}
\newcommand{\varsetSmallFont}{\ensuremath{\mathcal{X}}}
\DeclareMathOperator{\openmark}{\vdash}
\DeclareMathOperator{\closemark}{\dashv}
\newcommand{\doc}{\ensuremath{\mathbf{D}}}
\newcommand{\spann}[2]{\ensuremath{[#1,#2\rangle}}
\newcommand{\open}[1]{\prescript{}{#1}{\openmark}}
\newcommand{\close}[1]{\closemark_{#1}}

\newcommand{\getDoc}[1]{\doc_{#1}}
\newcommand{\getSpanTuple}[1]{\textsf{t}_{#1}}

\newcommand{\ta}{\ensuremath{\mathtt{a}}}
\newcommand{\tb}{\ensuremath{\mathtt{b}}}

\newcommand{\varsx}{\ensuremath{\textsf{x}}}
\newcommand{\varsy}{\ensuremath{\textsf{y}}}
\newcommand{\varsz}{\ensuremath{\textsf{z}}}

\newcommand{\lang}[1]{\ensuremath{\mathcal{L}(#1)}}

\DeclareMathOperator{\altop}{+}

\newcommand{\nrefw}[1]{\ensuremath{\mathtt{enc}(#1)}}

\newcommand{\dom}{\operatorname{dom}}
\newcommand{\pmap}{\rightharpoonup}

\newcommand{\mult}[2]{\operatorname{mult}_{#1}(#2)}

\newcommand{\Reg}{\mathcal{R}}
\newcommand{\Asg}{\operatorname{Asg}}
\newcommand{\Asgs}{\operatorname{Asg}^*}
\newcommand{\Val}{\operatorname{Val}}

\newcommand{\ndetbf}[1]{\textsf{ndet}(#1)}

\newcommand{\nsst}{\mathcal{T}}
\newcommand{\sst}{\mathcal{T}}
\newcommand{\out}{\operatorname{out}}

\DeclareMathOperator{\oout}{%
	\ooalign{\scalebox{1.0}[1.0]{$o$}\cr\hidewidth\rotatebox[origin=c]{-30}{\raisebox{.4ex}{\scalebox{0.65}{$\  \boldsymbol{\smile}$}}}\hidewidth}}%
	
\newcommand{\Runs}{\operatorname{Runs}}
\newcommand{\reg}{\operatorname{reg}}

\newcommand{\indx}{\operatorname{index}}
\newcommand{\indxsigma}{\sigma_{\indx}}
\newcommand{\indxnu}{\nu_{\indx}}
\newcommand{\indxw}{w_{\indx}}

\newcommand{\D}{\mathcal{D}}
\newcommand{\n}{\textsf{n}}
\newcommand{\m}{\textsf{m}}
\newcommand{\s}{\textsf{s}}
\newcommand{\unode}{\textsf{u}}
\newcommand{\N}{\textsf{N}}
\newcommand{\enumECSA}{\textsc{EnumECSA}}
\newcommand{\odepth}{\operatorname{odepth}}
\newcommand{\dadd}{\operatorname{add}}
\newcommand{\dunion}{\operatorname{union}}
\newcommand{\dextend}{\operatorname{extend}}

\newcommand{\hasnext}{\operatorname{hasnext}}
\newcommand{\bnodelabel}{\m}

\newcommand{\uit}{\iota}
\newcommand{\uid}{\operatorname{id}}

\newcommand{\ustack}{\operatorname{St}}

\algdef{SE}[DOWHILE]{do}{doWhile}{\algorithmicdo}[1]{\algorithmicwhile\ #1}%
\algnewcommand{\LineComment}[1]{\State \(\triangleright\) #1}

\newcommand{\Told}{T_{\operatorname{old}}}
\newcommand{\Tnew}{T_{\operatorname{new}}}

\newcommand{\cO}{\mathcal{O}}

\newcommand{\bbN}{\mathbb{N}}

\newcommand{\sem}[1]{{\lsem{}{#1}\rsem}}

\newcommand{\multiset}[1]{\{\!\!\{#1\}\!\!\}}

\DeclareMathOperator{\nondef}{\bot}

\renewcommand{\operatorname}[1]{\textsf{#1}}
\DeclareMathOperator{\bigo}{O}

\algrenewcommand\algorithmicindent{1em}%
\algnewcommand\algorithmicforeach{\textbf{for each}}
\algdef{S}[FOR]{ForEach}[1]{\algorithmicforeach\ #1\ \algorithmicdo}

\newcommand*{\set}[1]{\ensuremath{\{ #1 \}}}

\usepackage{authblk}

\title{A framework for extraction and transformation of documents}

\author[1]{Cristian Riveros}
\author[2]{Markus L.\ Schmid}
\author[2]{Nicole Schweikardt}
\affil[1]{Pontificia Universidad Católica de Chile, \texttt{cristian.riveros@uc.cl}}
\affil[2]{Humboldt-Universität zu Berlin, Germany, \texttt{MLSchmid@MLSchmid.de}, \texttt{schweikn@informatik.hu-berlin.de}}

\date{}
 
\begin{document}

\maketitle

\begin{abstract}
We present a theoretical framework for the extraction and transformation of text documents. We propose to use a two-phase process where the first phase extracts span-tuples from a document, and the second phase maps the content of the span-tuples into new documents. We base the extraction phase on the framework of document spanners and the transformation phase on the theory of polyregular functions, the class of regular string-to-string functions with polynomial growth.

For supporting practical extract-transform scenarios, we propose an extension of document spanners described by regex formulas from span-tuples to so-called multispan-tuples, where variables are mapped to sets of spans. We prove that this extension, called regex multispanners, has the same desirable properties as standard spanners described by regex formulas. 
In our framework, an Extract-Transform (ET) program is given by a regex multispanner followed by a polyregular~function. 

In this paper, we study the expressibility and evaluation problem of ET programs when the transformation function is linear, called linear ET programs. We show that linear ET programs are equally expressive as non-deterministic streaming string transducers under bag semantics.
Moreover, we show that linear ET programs are closed under composition.
Finally, we present an enumeration algorithm for evaluating every linear ET program over a document with linear time preprocessing and constant delay.  
\end{abstract}

\section{Introduction}\label{sec:introduction}
	
Information extraction (IE) of text documents found its theoretical foundations in the framework of document spanners~\cite{fagin2015document}. Introduced by Fagin, Kimelfeld, Reiss, and Vansummeren one decade ago~\cite{FaginKRV13}, this framework sets the grounds for rule-based IE through the notion of spanners and how to combine them by using relational operators. Moreover, it inspired a lot of research on languages~\cite{Freydenberger2019,Peterfreund21}, expressiveness~\cite{freydenberger2018document,maturana2018document,PeterfreundCFK19,DoleschalEtAl2019,SchmidSchweikardt_ICDT21},
evaluation~\cite{florenzano2020efficient,amarilli2021constant,doleschal2019split,BourhisEtAl2021,FreydenbergerEtAl2018,PeterfreundEtAl2019,FreydenbergerThompson2022}, provenance~\cite{doleschal2022weight,DoleschalKM23}, and compressed evaluation~\cite{schmid2021spanner,SchmidSchweikardtPODS2022,MunozR23} for IE. Initially conceived to understand SystemT, IBM's rule-based IE system, it has found some recent promising implementations in~\cite{RiverosJV23}. See~\cite{SchmidSchweikardt_Gems,DBLP:journals/sigmod/AmarilliBMN20} for surveys of the area.

Although IE is a crucial task for data management, document extraction is usually followed by a transformation into new documents. Indeed, data transformation is essential for communicating middleware systems, where data does not conform to the required format of a third-party system and needs to be converted. This forms the main task of the so-called ETL technologies~\cite{vassiliadis2009survey} (for Extract-Transform-Load), which are crucial parts of today's data management workflows. Even the search-and-replace operation over text, a ubiquitous task in all text systems, can be conceived as an extraction (i.e., search) followed by a transformation (i.e., replace). Hence, although document spanners provided the formal ground of ruled-based IE, these grounds are incomplete without understanding the following transformation~process.

In this paper, we study the extraction and transformation of text documents through the lens of document spanners. We propose to use a two-phase process where the first phase extracts span-tuples from a document, and the second phase maps the content of the span-tuples into new documents. The main idea here is to use the framework of document spanners for the extraction phase, completing it with an transformation phase using the theory of polyregular functions. 

Let us now explain our proposed framework and respective results with several examples. We consider the following document $\doc$ over alphabet $\Sigma$ that lists famous English-speaking singers in the format $\#$ [person] $\#$ [person] $\#$ \dots $\#$ [person] $\#$ with [person] $=$ [last name]; [name]; [birthplace]; [opt]; [opt];\dots, where [opt] are optional attributes like, e.\,g., age, nickname, honorary titles etc, which follow no fixed scheme. This means that `$\#$' and `$;$' are used as separators, and for convenience, we set $\widehat{\Sigma} = \Sigma \setminus \{;, \#\}$. 

\medskip

\hspace{-0.2cm}
\scalebox{1}{
\begin{minipage}[t]{0.93\textwidth}
$\doc = $ \# \texttt{Holiday}; \texttt{Billie}; \texttt{USA} \# \texttt{Bush}; \texttt{Kate}; \texttt{England} \# \texttt{Young}; \texttt{Neil}; \texttt{Canada}; 78; \texttt{``Godfather of Grunge''}  \# \texttt{King}; \texttt{Carole}; \texttt{USA}; 81 \# \texttt{McCartney}; \texttt{Paul}; \texttt{England}; \texttt{Sir}; \texttt{CH}; \texttt{MBE} \# \texttt{Mitchell}; \texttt{Joni}; \texttt{Canada}; \texttt{painter}  \# \texttt{Franklin}; \texttt{Aretha}; \texttt{USA}; \texttt{``Queen of Soul''} \# \texttt{O’Riordan}; \texttt{Dolores};  \texttt{Ireland}; \texttt{\textdagger 01/15/2018}  \# \texttt{Bowie}; \texttt{David}; \texttt{England} \# \texttt{Dylan}; \texttt{Bob}; \texttt{USA}; 82 \# \texttt{Young}; \texttt{Neil}; \texttt{USA}  \# \texttt{Gallagher}; \texttt{Rory}; \texttt{Ireland} \#
\end{minipage}
}

\medskip

Our general approach is to first perform information extraction on $\doc$ with (an extension of) \emph{regex formulas} (see~\cite{fagin2015document}). For example, $\firstExampleEx \ := \ \ \Sigma^* \# \, \varsx\{\widehat{\Sigma}^*\}; \varsy\{\widehat{\Sigma}^*\}; \Sigma^*$
is a regex formula that uses a variable $\varsx$ to extract just any factor over $\widehat{\Sigma}$ that occurs between separators `$\#$' and `$;$', and a variable $\varsy$ to extract the following factor over $\widehat{\Sigma}$ between the next two occurrences of `$;$'. The construct $\varsx\{r\}$ creates a \emph{span} $\spann{i}{j}$ pointing to factor $\doc[i] \doc[i+1] \ldots \doc[j-1]$, satisfying the subexpression $r$. 
The regex formula $\firstExampleEx$ specifies a spanner that, on our example document $\doc$
produces the set $\{(\spann{2}{9}, \spann{10}{16}),$ $(\spann{21}{25},\spann{26}{30}),$ $(\spann{39}{44}, \spann{45}{49}), \ldots\}$
of \emph{span-tuples}. Every span-tuple $t$ represents an annotated version of our document: the factor corresponding to $t(\varsx)$ is annotated by (or allocated to) $\varsx$ and the factor corresponding to $t(\varsy)$ is annotated by $\varsy$. And each such annotation represents another finding of relevant information in our document.

As the second step of our model, we map each annotated version of our document to another document, which therefore describes a transformation of our document. This mapping will explicitly use the information represented by the annotation from the extraction phase. Formally, we use a string-to-string function for the transformation phase that gets as argument a single string that represents the document and the information given by a span-tuple.
The transformation function is applied to each of the annotated documents. Since the transformation function can be non-injective, we may obtain the same output string several times. In our model, we keep such duplicates (by using bag semantics), since they represent the situation that the same output string can be constructed from completely different annotations.
Considering our example, the annotated documents are as follows
(we use $\textbar$ as separator between strings for better readability):
\begin{align*}
&\scalebox{1}{$\# \open{\varsx}\texttt{Holiday}\close{\varsx}; \open{\varsy}\texttt{Billie}\close{\varsy}; \texttt{USA} \# \texttt{Bush}; \texttt{Kate};  \texttt{England} \# \texttt{Young}; \texttt{Neil}; \ldots \hspace{2pt}\textbar$}\\
&\scalebox{1}{$\# \texttt{Holiday}; \texttt{Billie}; \texttt{USA} \# \open{\varsx}\texttt{Bush}\close{\varsx}; \open{\varsy}\texttt{Kate}\close{\varsy}; \texttt{England} \# \texttt{Young}; \texttt{Neil}; \ldots \hspace{2pt}\textbar$}\\
&\scalebox{1}{$\# \texttt{Holiday}; \texttt{Billie}; \texttt{USA} \# \texttt{Bush}; \texttt{Kate}; \texttt{England} \# \open{\varsx} \texttt{Young} \close{\varsx}; \open{\varsy} \texttt{Neil} \close{\varsy}; \ldots \hspace{2pt}\textbar \hspace{2pt} \ldots$}
\end{align*}
They are obtained by simply representing the span for 
each variable $\varsz\in\set{\varsx,\varsy}$ by a pair of parentheses~$\open{\varsz} \ldots \close{\varsz}$.
If we couple the spanner $\firstExampleEx$ with the string-to-string function:\label{T1Function}
\[
\firstExampleTrans \ := \ \ u \open{\varsx} v_{\varsx} \close{\varsx}; \open{\varsy} v_{\varsy} \close{\varsy} u' \:\: \mapsto \:\: v_{\varsy} \Vtextvisiblespace \, v_{\varsx}\,
\]
where $u, u' \in \Sigma^*$ and $v_{\varsx}, v_{\varsy} \in \widehat{\Sigma}^*$,
then we obtain what we call an \emph{extract-transform program} (\emph{ET program}). This ET program $(\firstExampleEx,\firstExampleTrans)$ maps our document $\doc$ to the following collection of strings:
\begin{align*}
&\scalebox{1}{\texttt{Billie}\:\texttt{Holiday} \hspace{2pt}\textbar\hspace{2pt} \texttt{Kate}\:\texttt{Bush} \hspace{2pt}\textbar\hspace{2pt} \texttt{Neil}\:\texttt{Young} \hspace{2pt}\textbar\hspace{2pt} \texttt{Carole}\:\texttt{King} \hspace{2pt}\textbar\hspace{2pt} \texttt{Paul}\:\texttt{McCartney} \hspace{2pt}\textbar}\\
&\scalebox{1}{\texttt{Joni}\:\texttt{Mitchell} \hspace{2pt}\textbar\hspace{2pt} \texttt{Aretha}\:\texttt{Franklin} \hspace{2pt}\textbar\hspace{2pt} \texttt{Dolores}\:\texttt{O’Riordan} \hspace{2pt}\textbar\hspace{2pt} \texttt{David}\:\texttt{Bowie} \hspace{2pt}\textbar}\\
&\scalebox{1}{\texttt{Bob}\:\texttt{Dylan} \hspace{2pt}\textbar\hspace{2pt} \texttt{Neil}\:\texttt{Young}  \hspace{2pt}\textbar\hspace{2pt} \texttt{Rory}\:\texttt{Gallagher}}
\end{align*}
Note that these are the names of all singers from the document, but in the format ``[first name] [last name]'' (i.\,e., the order of the names is swapped). Moreover, ``\texttt{Neil Young}'' constitutes a duplicate, since it is the image under \emph{different} extractions of the regex formula $E_1$.
This is an intended feature of our model, since every distinct way of information extraction should translate into a distinct output document of the transformation phase.

This example already points out that classical \emph{one-way transducers}~\cite{berstel2013transductions} 
are not suitable transformation functions, since they cannot change the order of the extracted factors.
We propose, instead, to
use the class of \emph{polyregular functions}~\cite{Bojanczyk22,engelfriet2002two}, which is a robust class of regular string-to-string functions with
(at most)
polynomial growth. The function from above is a polyregular function that even has linear growth; a subclass of special interest for this work. We will discuss polyregular functions (and representations for them) later in this paper. In the examples here, we describe them sufficiently formally but on an intuitive level. 

Let us move on to a more complicated example. We want to define an ET program which maps $\doc$ to the following strings:
\begin{align*}
&\scalebox{1}{\texttt{Billie}\:\texttt{Holiday}: \texttt{USA} \# \hspace{2pt}\textbar \hspace{2pt} \texttt{Kate}\:\texttt{Bush}: \texttt{England} \# \hspace{2pt}\textbar\hspace{2pt} \texttt{Neil}\:\texttt{Young}: \texttt{Canada} \hspace{2pt}\#}\\
&\scalebox{1}{\texttt{Neil}\:\texttt{Young}: \texttt{78} \# \texttt{Neil}\:\texttt{Young}: \texttt{``Godfather of Grunge''} \# \hspace{2pt}\textbar\hspace{2pt} \texttt{Carole}\:\texttt{King}: \texttt{USA} \#}\\
& \scalebox{1}{\texttt{Carole}\:\texttt{King}: \texttt{81} \# \hspace{2pt}\textbar \hspace{2pt} \texttt{Paul}\:\texttt{McCartney}: \texttt{England} \# \texttt{Paul}\:\texttt{McCartney}: \texttt{Sir} \hspace{2pt}\#}\\
&\scalebox{1}{\texttt{Paul}\:\texttt{McCartney}: \texttt{CH} \# \texttt{Paul}\:\texttt{McCartney}: \texttt{MBE} \# \hspace{2pt}\textbar  \hspace{2pt} \ldots}
\end{align*}
Specifically, we want to 
map each singer entry [last name]; [name]; [birthplace]; [opt-1]; \dots; [opt-k] to the string [name] [last name]: [birthplace] \# [name] [last name]: [opt-1] \# \ldots \# [name] [last name]: [opt-k] \#,
which is something like the cross-product between [name] [last name] and the list [birthplace], [opt-1], \dots, [opt-k]. For example, the substring \# \texttt{Young}; \texttt{Neil}; \texttt{Canada}; 78; \texttt{``Godfather of Grunge''} \# is mapped to 
\texttt{Neil} \texttt{Young}: \texttt{Canada} \# \texttt{Neil} \texttt{Young}: \texttt{78} \# \texttt{Neil} \texttt{Young}: \texttt{``Godfather of Crunge''} \#.

Intuitively, we would have to extract each element of the unbounded list [birthplace], [opt-1], \dots, [opt-k] in its own variable, which goes beyond the capability of document spanners.
As a remedy,
in this paper we propose to extend the classical spanner framework to \emph{multispanners}, which can extract in each variable a \emph{set of spans} instead of only a single span. Now assume that we apply to the document the expression:
\[
	\secondExampleEx \ := \ \ \Sigma^* \# \, \varsx\{\widehat{\Sigma}^*\}; \varsy\{\widehat{\Sigma}^*\} ; \varsz\{ \widehat{\Sigma}^* \} (; \varsz\{ \widehat{\Sigma}^* \})^*\, \#\, \Sigma^*\,.
\]

Then for the substring
\texttt{\#}\texttt{Young}; \texttt{Neil}; \texttt{Canada}; 78; \texttt{``Godfather of Grunge''}\texttt{\#}, we will extract \emph{each} attribute as a single span of variable $\varsz$, so the corresponding \emph{multispan} tuple and the corresponding string representation would be: 
\begin{align*}
	&\scalebox{1}{$\big(\underbrace{\spann{39}{44}}_{\varsx},\underbrace{\spann{45}{49}}_{\varsy}, \underbrace{\{\spann{50}{56}, \spann{57}{59}, \spann{60}{81}\}}_{\varsz}\big) \text{ and}$}\\
	&\scalebox{1}{$\ldots \open{\varsx} \texttt{Young} \close{\varsx};\open{\varsy}\texttt{Neil}\close{\varsy};\open{\varsz}\texttt{Canada}\close{\varsz}; \open{\varsz}\texttt{78}\close{\varsz};\open{\varsz}\texttt{``Godfather of Grunge''}\close{\varsz}\ldots$}
\end{align*}
Thus, by defining a suitable transformation function $\secondExampleTrans$, we get an ET program $(\secondExampleEx, \secondExampleTrans)$ for our task. This is again possible by a polyregular function, but not with linear growth.

These examples show that typical string transduction tasks can often be naturally divided into the job of locating the relevant parts of our input, followed by their manipulation (e.\,g., reordering, deleting, copying, etc.), and our framework is particularly tailored to this situation. Moreover, we can easily construct new ET programs by combining specifications for an extract task (i.\,e., a multispanner) with specifications for a transform task (a string-to-string function) of existing ET programs. For example, the same spanner that finds relevant text blocks, like names, email addresses, URLs, etc. can be combined with different string-to-string functions that pick and arrange some of these text blocks in a certain format. Or, for cleaning textual data, we use a string-to-string function that deletes all the annotated factors, and we combine this with different spanners for finding the factors to be deleted.  

As also pointed out by our examples, it is a desired feature that the output of an ET program can contain duplicates. Let us motivate this by a more practical example. Assume that the extraction phase marks the addresses in a list of customer orders, while the transformation phase then transforms these addresses into a format suitable for printing on the parcels to be shipped to the customers. In the likely situation that there are different orders by the same person, the extraction phase produces different markings that will all be transformed into the same address label. Such duplicates, however, are important, since we actually want to obtain an address label for each distinct order (i.\,e., distinct marking), even if these addresses are the same (although the orders are not). This means that the output sets of our ET programs should actually be bags.

Our main conceptual contribution is a sound formalisation of our ET framework, which includes the extension of the classical spanner framework to multispanners (which we believe are interesting in their own right). Our main algorithmic result is about \emph{linear ET programs}, i.\,e., the subclass where the multispanner of the extraction can be described by a regular expression (similar to the ones used above) and the transformation function is a polyregular function with linear growth (e.g., the program $(\firstExampleEx, \firstExampleTrans)$). We show that we can enumerate the output documents of linear ET programs with linear preprocessing and output-linear delay (in data complexity). Furthermore, we show that ET programs are closed under composition.
This implies
that we can enumerate the output of a composition of programs efficiently. 
As an important tool for our enumeration algorithm, we show that linear ET programs have the same expressive power as nondeterministic streaming string transducers (NSST). This also gives an important insight with respect to the expressive power of linear ET programs.

Specifically, the contributions are as follows.
\begin{itemize}
	\item We introduce the formalism of multispanners, for extracting an unbounded number of spans in a single  tuple (Section~\ref{sec:multispanners}).
	\item By combining multispanners and polyregular functions, we introduce ET programs as a framework for extraction and transformation of documents (Section~\ref{sec:framework}).
	\item We restrict our analysis to the linear case, and show that linear ET programs are equally expressive as NSSTs under bag-semantics (Section~\ref{sec:properties}).
	\item We present an enumeration algorithm for evaluating linear ET programs with linear-time preprocessing and output-linear delay (Section~\ref{sec:enumeration}).
	\item
          We show
          that linear ET programs are closed under composition and, therefore, we can evaluate the composition of linear ET programs efficiently (Section~\ref{sec:composition}).
\end{itemize}
We conclude in Section~\ref{sec:conclusions} with some remarks
on future work.
Further details are provided in an appendix.

\paragraph{Further related work}
To the best of our knowledge, this is the first work on extending document spanners with transformation. 

String transductions~\cite{berstel2013transductions,MuschollP19,Bojanczyk22} --
a classical model 
in computer science -- have recently gained renewed
attention with the theory of polyregular functions~\cite{dagstuhlreport}. Although we can see an ET program as a single transduction (cf., Section~\ref{sec:properties} and the equivalence with NSSTs), this work has several novel contributions compared to
classical string transductions. Firstly, our framework models the process by two declarative phases (which is natural from a data management perspective), contrary to string transductions that model the task as a single process. Secondly, our work uses bag semantics, which requires revisiting some results on transduction (see Section~\ref{sec:composition}). Moreover, efficient enumeration algorithms for (non-deterministic) linear polyregular functions have not been studied before.

There are systems for transforming documents into documents (e.g.,~\cite{FASTUS}). For example, they used a combination of regular expressions with replace operators~\cite{karttunen1997replace} or parsing followed by a transduction over the parsing tree~\cite{FASTUS}. Indeed, in practice, regular expressions
support some special commands for transforming data (also called substitutions). Our study has a theoretical emphasis on information extraction and transformation, starting from the framework of document spanners. To our knowledge, previous systems neither use the expressive power of document spanners nor polyregular functions.   
 
\section{Multispanners}\label{sec:multispanners}
	
Let us first give some standard notations. By $\powerset(A)$ we denote the power set of a set $A$. Let $\mathbb{N} = \{1, 2, 3, \ldots\}$ and $[n] = \{1, 2, \ldots, n\}$ for $n \in \mathbb{N}$. For a finite alphabet $A$, let $A^+$ denote the set of non-empty words over $A$, and $A^* = A^+ \cup \{\eword\}$, where $\eword$ is the empty word. For a word $w \in A^*$, $|w|$ denotes its length (in particular, $|\eword| = 0$). A word $v \in A^+$ is a \emph{factor} (or \emph{subword}) of $w$ if there are $u_1, u_2 \in A^*$ with $w = u_1 v u_2$; $v$ is a \emph{prefix} or \emph{suffix} of $w$, if $u_1 = \eword$ or $u_2 = \eword$, respectively. For every $i \in [|w|]$, let $w[i]$ denote the symbol at position $i$ of $w$. We use DFAs and NFAs (deterministic and nondeterministic finite automata, resp.) as commonly~defined.

\paragraph{Multispans and multispanners} For a document $\doc \in \Sigma^*$ and for every $i, j \in [|\doc| {+} 1]$ with $i\leq j$, $\spann{i}{j}$ is a \emph{span of $\doc$} and its \emph{value}, denoted by $\doc\spann{i}{j}$, is the substring of $\doc$ from symbol $i$ to symbol $j{-}1$. In particular, $\doc\spann{i}{i} = \eword$ (called an \emph{empty span}) and $\doc\spann{1}{|\doc|{+}1} = \doc$. By $\spans(\doc)$, we denote the set of spans of $\doc$, and by $\spans$ we denote the set of all spans $\{\spann{i}{j} \mid i,j \in \mathbb{N},\ i\leq j\}$. 
Two spans $\spann{i}{j}$ and $\spann{i'}{j'}$ are \emph{disjoint} if $j\leq i'$ or $j'\leq i$.
A \emph{multispan} is a (possibly empty) set of pairwise disjoint spans.

Let $\varset$ be a finite set of variables. A \emph{span-tuple} (over a document $\doc$ and variables $\varset$)~\cite{fagin2015document} is a function $t\colon \varset \to \spans(\doc)$. We define a \emph{multispan-tuple} as a function $t\colon \varset \to \powerset(\spans(\doc))$ such that, for every $\varsx \in \varset$, $t(\varsx)$ is a multispan.
Note that every span-tuple $t$ can be considered as a special case of a multispan-tuple where $t(x)
= \{\spann{i}{j}\}$. 
For simplicity, we usually denote multispan-tuples in tuple-notation, for which we assume an order $<$ on $\varset$. For example, if $\varset = \{\varsx_1, \varsx_2, \varsx_3\}$ with 
$\varsx_1< \varsx_2< \varsx_3$, then multispan $t = (\, \{\spann{1}{6}\}, \, \emptyset, \, \{\spann{2}{3}, \spann{5}{7}\} \, )$
maps $\varsx_1$ to $\{\spann{1}{6}\}$, $\varsx_2$ to $\emptyset$ and $\varsx_3$ to $\{\spann{2}{3}, \spann{5}{7}\}$. 
Note that $\spann{2}{3}$ and $\spann{5}{7}$ are disjoint, while $\spann{1}{6}$ and $\spann{5}{7}$ are not, which is allowed, since they are spans of different multispans. 

A \emph{multispan-relation} (over a document $\doc$ and variables $\varset$) is a possibly empty set of multispan-tuples over $\doc$ and $\varset$.
Given a finite alphabet $\Sigma$, 
a \emph{multispanner} (over $\Sigma$ and $\varset$) is a function that maps every document  $\doc \in \Sigma^*$ to a multispan-relation over $\doc$ and $\varset$. Note that the empty relation $\emptyset$ is also a valid image of a multi\-spanner. 

\begin{example}\label{multiSpannerExample}
Let $S$ be a multispanner over alphabet $\{\ta, \tb\}$ and variables $\{\varsx, \varsy\}$ that maps every document $\doc \in \{\ta, \tb\}^*$ to the set of all multispan-tuples $t$ such that $t(\varsx) = \{\spann{i}{j}\}$ where $\doc\spann{i}{j}$ is a factor that starts and ends with $\tb$, and is not directly preceded or followed by another $\tb$,
and $t(\varsy)$ is the multispan that contains a span for each maximal unary
(i.e., of the form $\ta^+$ or $\tb^+$)
factor of $\doc\spann{i}{j}$. For example, $t \in S(\ta \ta \tb \ta \tb \tb \tb \ta \ta \tb)$ with:
\begin{equation*}
t(\varsx) = \{\spann{3}{11}\} \text{ and } t(\varsy) = \{\spann{3}{4}, \spann{4}{5}, \spann{5}{8}, \spann{8}{10}, \spann{10}{11}\}\,.
\end{equation*}
\end{example}

Similarly
to the classical framework of document spanners~\cite{fagin2015document}, it is convenient to represent multispanners over $\Sigma$ and $\varset$ by formal languages over the alphabet $\Sigma \cup \{\open{\varsx}, \close{\varsx} \mid \varsx \in \varset\}$. This allows us to represent multispanners by formal language descriptors, e.\,g., regular expressions or finite automata.

\paragraph{Representing multispans by multiref-words} In this section, we adapt the concept of ref-words (commonly used for classical document spanners; 
see \cite{FreydenbergerThompson2020, Freydenberger2019, DoleschalEtAl2019, SchmidSchweikardt_Gems})
to multispanners.

For any set $\varset$ of variables, we shall use the set $\Gamma_{\varset} = \{\open{\varsx}, \close{\varsx} \mid \varsx \in \varset\}$ as an alphabet of meta-symbols. In particular, for every $\varsx \in \varset$, we interpret the pair of symbols $\open{\varsx}$ and $\close{\varsx}$ as a pair of opening and closing parentheses.
A \emph{multiref-word} (\emph{over terminal alphabet $\Sigma$ and variables $\varset$}) is a word $w \in (\Sigma \cup \Gamma_{\varset})^*$ such that, for every $\varsx \in \varset$, the subsequence of the occurrences of $\open{\varsx}$ and of $\close{\varsx}$ is well-balanced and unnested, namely, has the form $(\open{\varsx} \close{\varsx})^k$ for some $k \geq 0$. 

Intuitively, any multiref-word $w$ over $\Sigma$ and $\varset$
uniquely describes a document $\getDoc{w} \in \Sigma^*$ and a
multispan-tuple $\getSpanTuple{w}$ as follows. First, let $\getDoc{w}$
be obtained from $w$
by erasing all symbols from $\Gamma_{\varset}$. We note that, for every $\varsx \in \varset$, every matching pair $\open{\varsx}$ and $\close{\varsx}$ in $w$ (i.\,e., every occurrence of $\open{\varsx}$ and the following occurrence of $\close{\varsx}$) uniquely describes a span of $\getDoc{w}$: ignoring all other occurrences of symbols from $\Gamma_{\varset}$, this pair encloses a factor of $\getDoc{w}$.  Consequently, we simply define that, for every $\varsx \in \varset$, $\getSpanTuple{w}(\varsx)$ contains all spans defined by matching pairs $\open{\varsx}$ and $\close{\varsx}$ of $w$. The property that the subsequence of $w$ of the occurrences $\open{\varsx}$ and $\close{\varsx}$ has the form $(\open{\varsx} \close{\varsx})^k$ for some $k \geq 0$ implies that all spans of $\getSpanTuple{w}(\varsx)$ are pairwise disjoint.

\begin{example}
Let us consider the following multiref-word over the finite alphabet $\Sigma$ and variables $\varset = \{\varsx, \varsy\}$: $w \ = \ \  \ta \ta \open{\varsx} \open{\varsy} \tb \close{\varsy} \open{\varsy} \ta \close{\varsy} \open{\varsy} \tb \tb \tb \close{\varsy} \open{\varsy} \ta \ta \close{\varsy} \open{\varsy} \tb \close{\varsy} \close{\varsx}$. By definition, $w$ represents the document $\getDoc{w} = \ta \ta \tb \ta \tb \tb \tb \ta \ta \tb$ and the multispan-tuple (from Example~\ref{multiSpannerExample}):
\begin{equation*}
\getSpanTuple{w} = (\underbrace{\{\spann{3}{11}\}}_{\varsx}, \underbrace{\{\spann{3}{4}, \spann{4}{5}, \spann{5}{8}, \spann{8}{10}, \spann{10}{11}\}}_{\varsy})
\end{equation*}
\end{example}

The advantage of the notion of
multiref-words is that it allows to easily
describe both,
a document and some multispan-tuple
over this document.
Therefore,
one can use any set of multiref-words to define a multispanner as follows.
A \emph{multiref-language} (\emph{over terminal alphabet $\Sigma$ and variables $\varset$}) is a set of multiref-words over $\Sigma$ and~$\varset$.
Any multiref-language $L$
describes the multispanner $\sem{L}$ over $\Sigma$ and $\varset$ defined as follows. For every $\doc \in \Sigma^*$: $\sem{L}(\doc) \ = \ \  \{\getSpanTuple{w} \mid w \in L \text{ \ and \ } \getDoc{w} = \doc\}$.

Analogously to classical spanners
(cf.\ \cite{SchmidSchweikardt_ICDT21,FreydenbergerThompson2020, Freydenberger2019, DoleschalEtAl2019, SchmidSchweikardt_Gems}),
we 
define the class of \emph{regular multispanners} as those multispanners $S$ with $S = \sem{L}$ for some regular multiref-language $L$. Moreover, as
done in~\cite{fagin2015document} for classical spanners, we
will
use a class of regular expressions to define a subclass of regular multispanners that shall play a central role for our extraction and transformation framework.

\paragraph{Regex multispanners} Let $\Sigma$ be a finite alphabet and let $\varset$ be a finite set of variables. We now define 
\emph{multispanner-expressions} (\emph{over $\Sigma$ and $\varset$}). Roughly speaking, these expresssions are a
particular class of regular expressions for defining
sets of
multiref-words and therefore multispanners. 
More
specifically,
a multispanner-expression $R$ (over $\Sigma$ and $\varset$) satisfies the~syntax:
\[
\begin{array}{rcl}
	R & := & \eword \ \mid \ a \in \Sigma \mid \ (R \cdot R) \ \mid \ (R + R) \ \mid \ R^* \mid \ \varsx\{R\}
\end{array}
\]
for every $\varsx \in \varset$ such that $\varsx$ does not appear in $R$.
Such
a multispanner-expression $R$ naturally defines a set of multiref-words $\lang{R}$ as follows: $\lang{\eword} = \{\eword\}$, $\lang{a} = \{a\}$, $\lang{R \cdot R'} = \lang{R}\cdot \lang{R'}$, $\lang{R + R'} = \lang{R} \cup \lang{R'}$, $\lang{R^*} = \lang{R}^*$, and $\lang{\varsx\{R\}} \ = \ \ \{\open{\varsx}\} \cdot \: \lang{R} \cdot \{\close{\varsx}\}$,
where, for every $L, L' \subseteq \Sigma^*$, $L \cdot L' = \{uv \mid u \in L \wedge v \in L'\}$, $L^0 = \{\eword\}$, $L^i = L \cdot L^{i-1}$ for every $i \geq 1$, and $L^* = \bigcup_{i=0}^{\infty} L^i$. As usual, we use $R^+$ as a shorthand for $(R\cdot R^*)$.

One can easily prove (e.g., by induction over the size of the expressions) that any multispanner-expression defines a multiref-language, since we do not allow expressions of the form $\varsx\{R\}$ whenever~$R$ mentions $\varsx$. 
Thus, we can
define the \emph{multispanner $\sem{R}$ specified by~$R$} as $\sem{R} = \sem{\lang{R}}$.
Furthermore, we say that a multispanner~$S$ is a \emph{regex multispanner} if $S = \sem{R}$ for some multispanner-expression $R$. 

\begin{example}
$R \ := \ \ (\eword + \Sigma^* \ta^+) \cdot \varsx\big\{ (\varsy\{ \tb^+ \} \cdot \varsy\{ \ta^+ \})^* \cdot \varsy\{ \tb^+ \} \big\} \cdot (\ta^+ \Sigma^* + \eword)$ is a multispanner-expression with $\sem{R}$ being the multispanner $S$ from Example~\ref{multiSpannerExample}; thus, $S$ is a regex multispanner.
\end{example}

\paragraph{Comparison with classical spanners} Multispanners are designed to naturally extend the classical model of spanners from~\cite{fagin2015document} to the setting where variables are mapped to sets of spans instead of single spans. Let us discuss a few particularities of our definitions.

We first note that since classical span-tuples, span-relations and spanners (in the sense of~\cite{fagin2015document}) can be interpreted as special multi\-span-tuples, multispan-relations and multispanners, respectively, our framework properly extends the classical spanner framework. 

A multispan-tuple $t$ allows $\spann{i}{j} \in t(\varsx)$ and $\spann{k}{l} \in t(\varsy)$ with $i \leq k < j \leq l$ and $\varsx \neq \varsy$ (and this is also the case for classical span-tuples). However, 
$\spann{i}{j}, \spann{k}{l} \in t(\varsx)$ with $i \leq k < j \leq l$ is not possible for multispan-tuples, since then representing $\spann{i}{j}$ and $\spann{k}{l}$ by parentheses $\open{\varsx} \ldots \close{\varsx}$ in the document cannot be distinguished from representing $\spann{i}{l}$ and $\spann{k}{j}$. Furthermore, 
for distinct $s, s' \in t(\varsx)$ we require
that $s$ and $s'$ are disjoint, which is motivated by the fact that without this restriction, the subsequence of all $\open{\varsx}$ and $\close{\varsx}$ occurrences could be an arbitrary well-formed parenthesised expression (instead of a sequence $(\open{\varsx} \close{\varsx})^k$); thus, recognising whether a given string over $\Sigma \cup \Gamma_{\varset}$ 
is a proper multiref-word cannot be done by an NFA, as is the case for classical spanners.

Our main motivation for multispanners is that they can express information extraction tasks that are of interest in the context of our extract and transform framework (and that cannot be represented by classical spanners in a convenient way). However, there are other interesting properties of multispanners not related to their application in our extract and transform framework, which deserve further investigation. For example, if $L$ and $L'$ are multiref-languages (i.\,e., $\sem{L}$ and $\sem{L'}$ are multispanners), then $L \cup L$, $L \cdot L'$ and $L^*$ are also multi\-ref-languages (and therefore $\sem{L \cup L'}$, $\sem{L \cdot L}$ and $\sem{L^*}$ are also multispanners). For classical spanners, this is only true for the union. Consequently, multispanners show some robustness not provided by classical spanners.

\section{Extract transform  framework}\label{sec:framework}
	
Let us first present some more notation. As usual, $f: A \to B$ denotes a function from $A$ to $B$. When $f$ is partial, we write $f: A \pmap B$ and use $f(a) = \nondef$ to indicate that $f$ is undefined for element $a \in A$. Moreover, $\dom(f) = \{a \in A \mid f(a) \neq \bot\}$ denotes the domain of~$f$. Every partial function $T : A^* \pmap B^*$ is called a \emph{string-to-string function} with input alphabet $A$ and output alphabet $B$.

The main idea
of our extraction and transformation framework is to use
a document spanner
to extract information
and then use a string-to-string function to convert this information into new documents.
For this purpose, it is necessary to convert the output of a spanner (i.e., span-tuples) into the input of a
string-to-string function
(i.e., strings). In the following, we present
a unique way
to encode multispan-tuples into multiref-words, which will serve as our intermediate object between extraction and transformation.

\paragraph{A unique multiref-word representation} The representation of documents and multispan-tuples by multiref-words allows us to define multispanners by multiref-languages.
But this
representation is not unique, which is inconvenient if we want to use it for
encoding
multispan-tuples as strings.
As a remedy, we adopt the following approach: We
represent a document $\doc \in \Sigma^*$ and multispan-tuple $t$ as a
multiref-word $w$ such that factors of $w$ 
that belong to $\Gamma_{\varset}^*$ (i.e., factors between two letters of $\doc$)
have the form:
\begin{equation*}
	(\close{\varsx} \altop \eword) (\open{\varsx} \close{\varsx} \altop \eword) (\open{\varsx} \altop \eword) (\close{\varsy} \altop \eword) (\open{\varsy} \close{\varsy} \altop \eword) (\open{\varsy} \altop \eword) \ldots\,.
\end{equation*}
Specifically, let $\doc \in \Sigma^*$ be a document and $t$ a multispan-tuple over $\doc$ and variables $\varset$. For a variable $\varsx \in \varset$ and $i \in [|\doc| +1]$, define $t[i,x] \ = \ \ (\close{\varsx})^c (\open{\varsx} \close{\varsx})^e (\open{\varsx})^o$,
where $c, e, o \in \{0,1\}$
with $c = 1$ iff $\spann{j}{i} \in t(\varsx)$ for some $j \neq i$; $e = 1$ iff $\spann{i}{i} \in t(\varsx)$; and $o = 1$ iff $\spann{i}{j} \in t(\varsx)$ for some $j \neq i$. 
E.\,g., for the tuple $t$ in Example~\ref{multiSpannerExample}, we have that $t[3,\varsx] = \open{\varsx}$ and $t[4,\varsy] = \close{\varsy} \open{\varsy}$. 

Let $\varset = \{\varsx_1, \varsx_2, \ldots, \varsx_m\}$ with $\varsx_1 \preceq \varsx_2 \preceq \ldots \preceq \varsx_m$, where $\preceq$ is some fixed linear order on $\varset$. 
For every $i \in [|\doc| + 1]$, we define $t[i] \ := \ \ t[i, \varsx_1] \cdot t[i, \varsx_2] \cdot \ldots \cdot t[i, \varsx_m]$.
We can then define the encoding of $t$ and $\doc$ as the multiref-word:
\begin{equation*}
	\nrefw{t, \doc} \ = \ \ t[1] \cdot \doc[1] \cdot t[2] \cdot \doc[2] \cdot \ldots \cdot \doc[|\doc|] \cdot t[|\doc|{+}1].
\end{equation*}
Note that $\nrefw{t, \doc}$ is
a multiref-word, $\getDoc{\nrefw{t, \doc}} = \doc$ and $\getSpanTuple{\nrefw{t, \doc}} = t$.
Thus, $\nrefw{t, \doc}$ is a correct and unique encoding for $t$ and $\doc$ as a multiref-word.
Next, we use this encoding to define \emph{extract-transform~programs}. 

\paragraph{Extract-transform programs} 
An \emph{extract-transform program}
(for short: \emph{ET program})
is a pair $(E, T)$ such that $E$ is a multispanner over $\Sigma$ and $\varset$, and $T$ is a string-to-string function with input alphabet $\Sigma \cup \Gamma_{\varset}$ and some output alphabet $\Omega$.
The semantics $\sem{E \cdot T}$ 
of an ET program $(E,T)$ is defined as follows. For an input document
$\doc$, we first apply $E$ on $\doc$, which produces a
multispan-relation $E(\doc)$. Then, for every multispan-tuple $t \in
E(\doc)$, we apply $T$ on the multiref-word $\nrefw{t,\doc}$ producing
a new document in the output. Note that while $\nrefw{\cdot, \cdot}$
is injective, the function $T$ might not be, which means that for $t,
t' \in E(\doc)$ with $t \neq t'$ we might have $T(\nrefw{t,\doc}) =
T(\nrefw{t',\doc})$. This leads to duplicates, but, as explained in
the introduction, we keep them, since we want each distinct $t \in
E(\doc)$ of the extraction phase to correspond to a transformed
document in the output. Hence, the output set is a \emph{bag} (also
known as \emph{multiset}).
As common in the literature, we use the notation
$\multiset{\ldots}$ for bags (we give a detailed definition of bags
in Appendix, Section~\ref{sec:defnAppendix}).  
Formally, for every $\doc \in \Sigma^*$, we define $\sem{E \cdot T}(\doc) \ = \ \multiset{T(\nrefw{t, \doc}) \mid t \in E(\doc)}$.

\paragraph{Polyregular ET programs} In the introduction, we have already discussed several examples of ET programs on an intuitive level. The multispanners of these examples were explicitly stated as multispanner expressions. The transformation functions were given as ad-hoc specifications, but they were all instances of 
so-called \emph{polyregular functions}. A polyregular function is a string-to-string function that is regular and has
a growth at most polynomial
in the size of the output string. Formally, $T$ is polyregular if it can be defined by a \emph{pebble transducer}:
a two-way transducer enhanced with a fixed number of pebbles (i.e., pointers) that the machine can drop or remove over the input word in a nested fashion and can read when it passes over them (see~\cite{engelfriet2002two} for a formal definition). Polyregular functions
form a robust class that has found several characterizations,
among them for-transducers, MSO interpretations,
and
others~\cite{Bojanczyk22}. Furthermore, polyregular functions are closed under composition and have good theoretical properties
(see \cite{Bojanczyk22} for an overview).
For this reason, we introduce the following definition.

A \emph{polyregular ET program} is an ET program $(E,T)$ where $E$ is a regex multispanner and $T$ is a polyregular function. All the ET programs mentioned in the introduction were examples of polyregular ET programs. In the example ET program $(E_2, T_2)$ discussed there, we need the following string-to-string function $T_2$ in the transformation phase:
\begin{align*}
&u \: \# \open{\varsx} u_{\varsx} \close{\varsx} ; \open{\varsy} u_{\varsy} \close{\varsy} ; \open{\varsz} v_{\varsz, 1} \close{\varsz}; \open{\varsz} v_{\varsz, 2} \close{\varsz} \ldots \open{\varsz} v_{\varsz, k} \close{\varsz} \: u' \: \mapsto\\
&u_{\varsy} \Vtextvisiblespace u_{\varsx}:v_{\varsz, 1} \# u_{\varsy} \Vtextvisiblespace u_{\varsx}:v_{\varsz, 2} \# \ldots \# u_{\varsy} \Vtextvisiblespace u_{\varsx}:v_{\varsz, k} \#\,,
\end{align*}
where $u, u' \in \Sigma^*$, $u_{\varsx}, u_{\varsy}, v_{\varsz, 1}, v_{\varsz, 2}, \ldots, v_{\varsz, k} \in \widehat{\Sigma}^*$.
It can be verified that this is indeed a polyregular function, by defining a pebble transducer or a for-transducer for it, which, due to space constraints, is done in the Appendix (see Section~\ref{sec:polyregAppendix}). 

Let us explain why the class of polyregular functions is a good choice for our setting. As mentioned before, polyregular functions can be seen as the class of MSO string-to-string functions, i.e., they have the same expressive power as MSO interpretations. Furthermore, polyregular functions are a robust and well-established class with several representations and good closure properties. Polyregular functions also capture most of the transformations that occur in practice. For example, practical regex use substitutions for transforming strings~\cite{substitution1,substitution2}. Interestingly, one can define all these operators by combining regex multispanners and polyregular functions. There are even substitutions in regex that define non-linear polyregular functions (e.g., see the operator~\verb*|$'| in [1,2]). Thus, polyregular functions are a natural choice. In the following we will focus on a fragment called \emph{linear ET programs}.

\paragraph{Linear ET programs}
We restrict attention to string-to-string functions of linear growth.
Specifically, we say that a string-to-string function $T$ has \emph{linear growth} if there exists a constant $c$, such that $|T(w)| \leq c \cdot |w|$ for every string $w$.
A \emph{linear ET program} is a polyregular ET program $(E,T)$ where $T$ has linear~growth. 

The class of linear polyregular functions forms a proper and well-behaved subclass of polyregular functions, which
allows for several equivalent representations like two-way
transducers, MSO transductions, or deterministic streaming string
transducers~\cite{Bojanczyk22,AlurC10}
(this last formalism will be crucial for the rest of this paper).
The string-to-string function $T_1$ (see Page~\pageref{T1Function})
that we used in the first example discussed in the introduction is an example of a linear polyregular function, and we shall see a deterministic streaming string transducer for it later on.

\paragraph{Deterministic streaming string transducers} Let $\Reg$ be a finite set of registers and $\Omega$ a finite alphabet. An \emph{assignment}\footnote{We assume that assignments are partial functions, in contrast to~\cite{AlurD11} that assume that assignments are functions. 
  Note that this generalization of the model in~\cite{AlurD11} comes without modifying the expressive power of standard
  streaming string transducers.
} is a partial function $\sigma: \Reg \pmap (\Reg \cup \Omega)^*$ that assigns each register $X \in \Reg$ to a string $\sigma(X)$ of registers and letters from $\Omega$.
We define the extension $\hat{\sigma}: (\Reg \cup \Omega)^* \pmap (\Reg \cup \Omega)^*$ of an assignment such that $\hat{\sigma}(\alpha_1 \ldots \alpha_n) = \sigma(\alpha_1) \cdot \ldots \cdot \sigma(\alpha_n)$ for every string $\alpha_1\ldots \alpha_n \in (\Reg \cup \Omega)^*$ where $\sigma(\oout) = \oout$ for every $\oout \in \Omega$. Further, we assume that $\hat{\sigma}(\alpha_1 \ldots \alpha_n)$ is undefined iff $\sigma(\alpha_i)$ is undefined for some $i \in [n]$. Given two assignments $\sigma_1$ and $\sigma_2$, we define its composition $\sigma_1 \circ \sigma_2$ as a new assignment such that $[\sigma_1 \circ \sigma_2](X) = \hat{\sigma}_1(\sigma_2(X))$. We say that an assignment $\sigma$ is \emph{copyless} if, for every $X \in \Reg$, there is at most one occurrence of $X$ in all the strings $\sigma(Y)$ with $Y \in \Reg$.
\begin{example}
	Consider $\Reg= \{X, Y\}$ and $\Omega = \{a,b\}$, and the following assignments where we write $X:= \alpha$ to mean $\sigma(X) = \alpha$:
	\[
	\begin{array}{ccccc}
		\begin{array}{rl}
			\sigma_1\colon \!\! & X := aXa \\
			& Y := XY
		\end{array} & &  
		\begin{array}{rl}
			\sigma_2\colon \!\!  & X := bXb \\
			& Y := b
		\end{array} &  &
		\begin{array}{rl}
			\sigma_3: \!\! & X := baXab \\
			& Y := b
		\end{array}
	\end{array}
	\]
	One can check that $\sigma_1 \circ \sigma_2 = \sigma_3$. Also, $\sigma_2$ and $\sigma_3$ are copyless assignments, but $\sigma_1$ is not. 
\end{example} 
Note that copyless assignments are closed under composition, namely, if $\sigma_1$ and $\sigma_2$ are copyless assignments, then $\sigma_1 \circ \sigma_2$ is copyless as well. We denote by $\Asg(\Reg, \Omega)$ the set of all copyless assignments over registers $\Reg$ and alphabet $\Omega$. 

A \emph{deterministic streaming string transducer} (DSST)~\cite{AlurC10} is a tuple
$
\nsst = (Q, \Sigma, \Omega,  \Reg, \Delta, q_0, F)$,
where $Q$ is a finite set of states, $\Sigma$ is the input alphabet, $\Omega$ is the output alphabet, $\Reg$ is a finite set of registers, $\Delta \colon Q \times \Sigma \pmap \Asg(\Reg, \Omega) \times Q$ is the transition function, $q_0$ is the initial state, and $F: Q \pmap (\Reg \cup \Omega)^*$ is a final partial function such that, for every $q \in Q$ and $X \in \Reg$, $X$ appears at most once in $F(q)$.
Intuitively, if $F(q)$ is defined, then $q$ is a final
(i.e., accepting)~state.
A \emph{configuration} of a DSST $\nsst$
is a pair $(q, \nu)$ where $q \in Q$ and $\nu \in \Val(\Reg, \Omega)$, and $\Val(\Reg, \Omega)$ is the set of all assignments $\Reg \pmap \Omega^*$, which are called \emph{valuations}.
A \emph{run} $\rho$ of $\nsst$ over a string $a_1 \ldots a_n$ is a sequence of configurations $(q_i, \nu_i)$ of the form:
\begin{equation*}
	\rho \ := \ (q_0, \nu_0) \xrightarrow{a_1/\sigma_1} (q_1, \nu_1) \xrightarrow{a_2 /\sigma_2} \ldots \xrightarrow{a_n/\sigma_{n}} (q_{n}, \nu_n)
\end{equation*}
such that $\Delta(q_i, a_{i+1}) = (\sigma_{i+1}, q_{i+1})$, $\nu_{0}$ is the empty assignment, i.\,e. $\nu_{0}(X) = \eword$ for every $X \in \Reg$, and $\nu_{i+1} = \nu_i \circ \sigma_{i+1}$ for every $i < n$.
A run $\rho$ is called
an \emph{accepting run} if $\nu_n(F(q_n))$ is defined.
The
\emph{output} of an accepting run $\rho$
is defined
as the string $\out(\rho) := \nu_n(F(q_n))$. 

Since $\Delta$ is a partial function, we can see that DSST are deterministic,
i.e.,
for every input word $w$, there exists at most one run $\rho$.
Thus,
every DSST $\nsst$ defines a string-to-string function  $\sem{\nsst}$ such that $\sem{\nsst}(w) = \out(\rho)$ iff $\rho$ is the run of $\nsst$ over $w$ and $\rho$ is accepting.
\begin{example}\label{example:names-sst}
Recall the string-to-string function $\firstExampleTrans \ = u \open{\varsx} v_{\varsx} \close{\varsx}; \open{\varsy} v_{\varsy} \close{\varsy} u' \:\: \mapsto \:\: v_{\varsy} \Vtextvisiblespace \, v_{\varsx}$,
where $u, u' \in \Sigma^*$ and $v_{\varsx}, v_{\varsy} \in
\widehat{\Sigma}^*$ (see Page~\pageref{T1Function}).
This function is implemented by the following DSST:

\begin{center}
\begin{tikzpicture}[defaultstyle,
	every text node part/.style={align=center},
	mystate/.style={state, inner sep=0pt, minimum size=5mm},
	]
	
	\node[mystate, initial left, initial text={$\begin{array}{l}
			X := \eword \vspace{-1mm} \\ 
			Y := \eword
		\end{array}\!\!\!$}] (s0) at (0, 0) {$q_0$};
	
	\draw[->] (s0) edge[loop above, pos=0.5] node {$\Sigma$} (s0);

	\node[mystate] (s1) at (3, 0) {$q_1$};
	\node[mystate] (s2) at (6, 0) {$q_2$};
	\node[mystate, accepting by arrow, accepting text={$Y \Vtextvisiblespace\,  X$}] (s3) at (9, 0) {$q_3$};

	\draw[->] (s0) edge[sloped] node {$
		\open{\varsx} 
		$} (s1);

	\draw[->] (s1) edge[loop above, pos=0.5] node {$a \in \widehat{\Sigma} \ \ \begin{array}{|l}
			X := X a \vspace{-1mm} \\ 
			Y := Y
		\end{array}$} (s1);

	\draw[->] (s2) edge[loop above, pos=0.5] node {$a \in \widehat{\Sigma} \ \ \begin{array}{|l}
			X := X \vspace{-1mm} \\ 
			Y := Y a
		\end{array}$} (s2);

	\draw[->] (s1) edge[sloped] node {$
		\close{\varsx} \, ; \, \open{\varsy} 
		$} (s2);

	\draw[->] (s2) edge[sloped] node {$
		\close{\varsy} 
		$} (s3);

	\draw[->] (s3) edge[loop above, pos=0.5] node {$\Sigma$} (s3);

\end{tikzpicture}
\end{center}

\end{example}

DSSTs form a well-behaved class of linear string-to-string functions, equivalent to the previously mentioned class of linear polyregular functions (see~\cite{Bojanczyk22,AlurC10}). Moreover, DSSTs perform a single pass over the input string,
while 
other equivalent models (e.g., two-way transducers)
may
do several passes. This streaming behavior
of DSSTs
will
turn out to
be very useful for designing algorithms for linear ET programs. For this reason, in the sequel we use DSSTs as our model of string-to-string functions for linear ET programs.

\section{Expressiveness of linear extract transform programs}\label{sec:properties}
	
\paragraph{Nondeterministic streaming string transducers} We
extend DSSTs
(cf.\ 
Section~\ref{sec:framework}) to the nondeterministic case and bag semantics. Let us explain the model by pointing out its differences to DSSTs.\footnote{Note that nondeterministic streaming string transducers with \emph{set semantics} instead of bag semantic have already been introduced in~\cite{AlurD11}.}

A \emph{nondeterministic streaming string transducer} (NSST) is a tuple $\nsst = (Q, \Sigma, \Omega,  \Reg, \Delta, I, F)$,
where $Q$, $\Sigma$, $\Omega$, $\Reg$ and $F$ have the same meaning as for DSSTs. The partial function $I: Q \pmap \Val(\Reg, \Omega)$ plays the role of the initial state, i.\,e., a state $q$ is a possible initial state if $I(q)$ is defined, and in this case $I(q)$ is an initial valuation of the registers. 
Moreover, $\Delta$ is not a
(partial)
function from $Q \times \Omega$ to $\Asg(\Reg, \Omega) \times Q$ anymore, but a \emph{bag} of transitions,
i.e., $\Delta$ is a bag of elements in $Q \times \Sigma \times \Asg(\Reg, \Omega) \times Q$.

The semantics of the model are as expected. A run over a string $a_1 \ldots a_n$ is a sequence of configurations $(q_i, \nu_i)$ of the form:
\begin{equation*}
	\rho \ := \ (q_0, \nu_0) \xrightarrow{a_1/\sigma_1} (q_1, \nu_1) \xrightarrow{a_2 /\sigma_2} \ldots \xrightarrow{a_n/\sigma_{n}} (q_{n}, \nu_n)
\end{equation*}
such that $I(q_0)$ is defined and $\nu_0 = I(q_0)$, $(q_i, a_{i+1}, \sigma_{i+1}, q_{i+1}) \in \Delta$ and $\nu_{i+1} = \nu_i \circ \sigma_{i+1}$ for every $i < n$. As for DSSTs, $\rho$ is an \emph{accepting run} if $\nu_n(F(q_n))$ is defined, and the output of an accepting run $\rho$ is the string $\out(\rho) = \nu_n(F(q_n))$. We define $\Runs_{\nsst}(a_1 \ldots a_n)$ as the bag of all accepting runs of $\nsst$ over~$a_1 \ldots a_n$.

Finally, we define the semantics of an NSST $\nsst$ over a string $w \in \Sigma^*$ as the
bag:
\begin{equation*}
	\sem{\nsst}(w) \ = \ \multiset{\, \out(\rho) \, \mid \, \rho \in \Runs_{\nsst}(w)\, }.
\end{equation*}

\paragraph{Equivalence of linear ET programs and NSSTs} We show that 
linear ET programs and NSSTs are equivalent. This is a fundamental insight with respect to the expressive power of ET programs. Moreover, the fact that the two-stage model of linear ET programs can be described by a single NSST will be important for our enumeration algorithm 
(Section~\ref{sec:enumeration}) and their composition (Section~\ref{sec:composition}).

We first discuss how linear ET programs can be transformed into NSSTs. 
Every multiref-word over $\Sigma$ and $\varset$ describes a document $\getDoc{w}$ and a multispan-tuple $\getSpanTuple{w}$.
Hence,
we can extend the encoding $\nrefw{\cdot}$ to multiref-words by setting $\nrefw{w} = \nrefw{\getSpanTuple{w}, \getDoc{w}}$. Intuitively speaking, applying the function $\nrefw{\cdot}$ on a multiref-word $w$
simply means that every maximal factor $u \in \Gamma_{\varset}^*$ of $w$ is re-ordered according to the order $\preceq$ on $\varset$, and superfluous matching brackets $\open{\varsx} \close{\varsx}$ are removed (since several of those in the same maximal factor over $\Gamma_{\varset}$ would describe the same empty span several times). We define $\nrefw{L} = \{\nrefw{w} \mid w \in L\}$ for multiref-languages $L$.

Let us consider a linear ET program $\sem{E \cdot T}$, i.\,e., $E$ is a regex multispanner represented by some multispanner-expression
$r$, and $T$ is a linear polyregular function represented by some DSST $\sst$. The high-level idea of the proof is to construct an NSST $\sst'$ that simulates a DFA $M$ for $\nrefw{\lang{r}}$ and the DSST $\sst$ in parallel. More precisely, we read an input $\doc \in \Sigma^*$, but between reading a symbol $\doc[i]$ and the next symbol $\doc[i+1]$, we pretend to read a sequence of symbols from $\Gamma_{\varset}$ with $\sst$ and $M$ at the same time. Thus, we virtually read some multiref-word $w$ with the property $\getDoc{w} = \doc$. We need the DSST $\sst$ for producing an output on that multiref-word, and we need the DFA $M$ to make sure that the virtual multiref-word has the form $\nrefw{t, \doc}$ for some $t \in E(\doc)$.

One may wonder why we want $M$ to be a DFA rather than an NFA. The reason is:
Having to deal with bag semantics (rather than just set semantics) makes things more complicated.
In particular, this means that we need $M$ to be deterministic to have a one-to-one correspondence between the accepting runs of $\sst'$ and the accepting runs of $M$ on the corresponding multiref-word.
Otherwise,
if $M$ was an NFA,
the different possible accepting paths of $M$ on the same multiref-word would translate into different accepting paths of $\sst'$ which would cause erroneous duplicates in $\sem{\sst'}(\doc)$.

We can transform $r$ into a DFA $M$ for $\nrefw{\lang{r}}$ by standard automata constructions, but the fact that $M$ needs to be deterministic means that the construction is
(one-fold)
exponential in $|r|$, and the fact that it has to accept $\nrefw{\lang{r}}$ and not just $\lang{r}$ means that the construction is also
(one-fold)
exponential in $|\varset|$.
In summary, we obtain the following result.

\begin{theorem}\label{ETtoNSST}
  Given a regex multispanner $E$ over $\Sigma$ and $\varset$ (represented by a multispanner-expression $r$), and a linear polyregular string-to-string function $T$ with input alphabet $\Sigma \cup \Gamma_{\varset}$ (represented by a DSST with $h$ states),
we can construct an NSST $\sst$ with $\sem{\sst} = \sem{E \cdot T}$ in time $\bigo(2^{4|r|+9|\varsetSmallFont|} |\Sigma|^3 |\varset|^2 h^2)$.
\end{theorem}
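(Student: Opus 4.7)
The plan is to follow the strategy sketched just before the theorem: construct a DFA $M$ that recognises the canonical encodings $\nrefw{\lang{r}}$, and then build the target NSST by running $M$ and the given DSST for $T$ (call it $\cD$, with state set $Q_{\cD}$ of size $h$) in lock-step, while nondeterministically interleaving canonical $\Gamma_{\varset}^*$-factors between the symbols of the input document $\doc \in \Sigma^*$. The reason for insisting that $M$ be \emph{deterministic}, rather than simply an NFA for $\lang{r}$, is that for bag semantics we need a one-to-one correspondence between multispan-tuples $t \in E(\doc)$ and accepting runs of the constructed NSST; a nondeterministic $M$ would immediately produce phantom duplicates.

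For $M$, I would proceed in two stages. First, build a Thompson-style NFA $N_r$ of size $\bigO(|r|)$ for $\lang{r}$, and intersect it with a small deterministic automaton that enforces the canonical layout of each maximal $\Gamma_{\varset}^*$-factor (i.e.\ the shape $(\close{\varsx_1}{+}\eword)(\open{\varsx_1}\close{\varsx_1}{+}\eword)(\open{\varsx_1}{+}\eword)\cdots$ prescribed by the definition of $\nrefw{\cdot,\cdot}$). The layout automaton has state set roughly $\{0,1\}^{|\varset|} \times [3|\varset|{+}1]$, tracking which variables are currently ``open'' and the block position inside the current $\Gamma_{\varset}^*$-factor; it has $2^{\bigO(|\varset|)}$ states. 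Determinising the intersection yields a DFA $M$ with $|Q_M| \leq 2^{c(|r|+|\varset|)}$ states for a small constant $c$, and with the property that for every $\doc$ and every $t \in E(\doc)$ the word $\nrefw{t,\doc}$ has a unique accepting path in $M$, and conversely every accepting path encodes exactly one such pair $(t,\doc)$.

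Next, define the NSST $\sst$ with state set $Q_M \times Q_{\cD}$, inheriting the registers and the final-output partial function from $\cD$. For every $(p,q) \in Q_M \times Q_{\cD}$, every $a \in \Sigma$, every canonical factor $u \in \Gamma_{\varset}^*$ (of which there are at most $2^{3|\varset|}$, each of length at most $3|\varset|$), and every pair $(p',q')$ reachable by the joint $(ua)$-run of $M$ and $\cD$, place one copy of $((p,q), a, \sigma, (p',q'))$ in the transition bag, where $\sigma$ is the composition of the copyless assignments produced by $\cD$ along $u$ followed by $a$. Initial $\Gamma_{\varset}^*$-prefixes and final $\Gamma_{\varset}^*$-suffixes are absorbed into the initial partial function $I$ and the final partial function $F$ by essentially the same enumeration. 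Correctness is now direct: each accepting run of $\sst$ on $\doc$ spells out a unique canonical multiref-word $w$ with $\getDoc{w} = \doc$, the determinism of $M$ makes the map $\rho \mapsto \getSpanTuple{w}$ a bijection onto $E(\doc)$, and $\out(\rho) = T(\nrefw{\getSpanTuple{w},\doc})$ by construction of $\cD$. Hence $\sem{\sst}(\doc) = \multiset{T(\nrefw{t,\doc}) \mid t \in E(\doc)} = \sem{E\cdot T}(\doc)$.

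For the complexity bound, the NSST has $\bigO(|Q_M|\cdot h) = 2^{\bigO(|r|+|\varset|)}\cdot h$ states, and the transition bag has at most $|Q_M|^2 \cdot h^2 \cdot |\Sigma| \cdot 2^{3|\varset|}$ entries; each such entry is built in $\bigO(|\varset|^2 |\Sigma|^2)$ time for composing the at most $3|\varset|{+}1$ copyless assignments of $\cD$ along $u$ and $a$. A careful tracking of the constants coming out of the two-stage construction (intersection plus determinisation) yields $|Q_M|^2 \le 2^{4|r|+6|\varset|}$, and combining all factors gives the stated $\bigO(2^{4|r|+9|\varset|}|\Sigma|^3 |\varset|^2 h^2)$ running time. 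The main obstacle is really in the first step: guaranteeing that the determinisation of $N_r$ intersected with the layout automaton does not exceed the claimed constants in the exponent, since the whole argument hinges on $M$ being deterministic so that bag semantics is faithfully preserved.
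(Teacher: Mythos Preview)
Your overall architecture---build a DFA $M$ for the canonical encodings, then run it in lock-step with the DSST $\cD$ while nondeterministically guessing the canonical $\Gamma_{\varset}^*$-blocks---matches the paper's. The gap is in how you obtain $M$. Intersecting the Thompson NFA $N_r$ for $\lang{r}$ with a layout automaton yields a DFA for $\lang{r}\cap L_{\text{layout}}$, \emph{not} for $\nrefw{\lang{r}}$. These languages differ: a multiref-word $w\in\lang{r}$ may have its $\Gamma_{\varset}$-factors in a non-canonical order, and its normalisation $\nrefw{w}$ need not lie in $\lang{r}$ at all. Concretely, take $r=\varsy\{\varsx\{\eword\}\}\cdot a$ with $\varsx\preceq\varsy$. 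Then $\lang{r}=\{\open{\varsy}\open{\varsx}\close{\varsx}\close{\varsy}a\}$, the unique tuple $t\in E(a)$ has $t(\varsx)=t(\varsy)=\{\spann{1}{1}\}$, and $\nrefw{t,a}=\open{\varsx}\close{\varsx}\open{\varsy}\close{\varsy}a\notin\lang{r}$. Your $M$ is empty on input $a$, so your NSST outputs the empty bag, whereas $\sem{E\cdot T}(a)$ has one element. The claim ``for every $t\in E(\doc)$ the word $\nrefw{t,\doc}$ has a unique accepting path in $M$'' therefore fails.

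The paper fixes this with a genuinely different construction of $M$ (its Lemma~\ref{constructNormDFALemma}): after determinising $N_r$ to $M_1$, it augments each state with the \emph{normalised} $\Gamma_{\varset}$-sequence read since the last $\Sigma$-symbol (states $(p,u)$ with $u$ normalised, $\bigO(2^{n+3|\varset|})$ of them), so that whatever permutation or repetition of brackets appears in the actual run, only $\nrefw{u}$ is recorded. It then computes, for each pair $(p_1,p_3)$ of $M_1$-states and each normalised $u$, whether some $\Gamma_{\varset}^*$-path in the augmented automaton realises $u$ between them followed by a $\Sigma$-step; those triples become $ub$-labelled paths in the target DFA. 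This is a projection (compute the normalised image of the $\Gamma_{\varset}$-block), not an intersection (keep only blocks that are already normalised), and that distinction is exactly what makes the bijection between accepting runs and tuples $t\in E(\doc)$ go through.
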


Let us now move on to representing general NSSTs by ET programs. When an NSST $\nsst$ is in a state $p$ and reads a symbol $b \in \Sigma$, then the number of possible nondeterministic choices it can make is the sum over all the multiplicities of elements $(p, b, \sigma, q) \in \Delta$ (recall that $\Delta$ is the bag of transitions). We shall call this number the \emph{nondeterministic branching factor of $p$ and $b$}.
The
\emph{nondeterministic branching factor} of $\nsst$ is the maximum over all the nondeterministic branching factors of $p$ and $b$ for all $(p,b) \in Q\times \Sigma$.

The only obstacle in the simulation of an NSST by a DSST is that the latter cannot deal with the nondeterministic choices. However, in linear ET programs, a DSST gets an annotated version of the actual document $\doc$ as input, i.\,e., a multiref-word $w$ with $\getDoc{w} = \doc$. Consequently, the DSST could interpret the additional information given by $w$ in the form of the symbols from $\Gamma_{\varset}$ as information that determines which nondeterministic choices it should make. More formally, we can
construct a regex multispanner that, for every $i \in [1, 2, \ldots, |\doc|]$, nondeterministically chooses some $\varsx_{\ell} \in \{\varsx_1, \varsx_2, \ldots, \varsx_m\}$, where $m$ is the NSST's nondeterministic branching factor, and puts the empty span $\spann{i}{i}$ into $t(\varsx)$. On the level of multiref-words, this simply means that every symbol is preceded by $\open{\varsx_\ell} \close{\varsx_\ell}$ for some $\ell$. Such an occurrence of $\open{\varsx_\ell} \close{\varsx_\ell}$ can then be interpreted by the DSST as an instruction to select the $\ell^{\text{th}}$ nondeterministic choice when processing the next symbol from $\Sigma$. 
In summary, we obtain the following~result.

\begin{theorem}\label{NSSTtoET}
Given an NSST $\sst$ with $n$ states, input alphabet $\Sigma$ and nondeterministic branching factor $m$,
we can construct a regex multispanner $E$ over $\Sigma$ and $\{\varsx_1, \varsx_2, \ldots, \varsx_{\max\{n, m\}}\}$ and a linear poly\-regular function $T$ with $\sem{E \cdot T} = \sem{\sst}$. Moreover, $E$ is represented by a multispanner-expression $r$ with $|r| = \bigo(|\Sigma| + \max\{n, m\})$, $T$ is represented by a DSST $\sst'$ with $\bigo(\max\{n, m\} \cdot n)$ states, and both $r$ and $\sst'$ can be constructed in time $\bigo(|\sst| + n \cdot \max\{n, m\} + |\Sigma|)$.
\end{theorem}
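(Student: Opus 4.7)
The plan is to expose every source of nondeterminism in $\sst$ as a ``guess'' that the multispanner inserts into the document, and let a deterministic SST read these guesses off the canonical multiref-word $\nrefw{t, \doc}$ in order to replay exactly one run of $\sst$. Concretely, $E$ will ignore the actual document symbols and produce, at position $1$, a single empty span marking a choice of initial state (one of at most $n$ options), and, at each position $i \in [|\doc|]$, a single empty span marking a choice of transition for $\doc[i]$ (one of at most $m$ options). Each $t \in E(\doc)$ then encodes a unique choice sequence that $\sst$ could have taken on $\doc$, and the DSST $\sst'$ reconstructs the corresponding run by following these instructions step by step.

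\textbf{Construction.} First I fix enumerations: $\dom(I) = \{p_1, \ldots, p_n\}$ and, for each $(q, a) \in Q \times \Sigma$, an enumeration $\tau_1^{q,a}, \ldots, \tau_{m_{q,a}}^{q,a}$ of the bag $\{(p, b, \sigma, p') \in \Delta : p = q, b = a\}$ with $m_{q,a} \leq m$. I then partition the variable set into two groups with disjoint index ranges: $\{x_1, \ldots, x_n\}$ for initial-state guesses and $\{y_1, \ldots, y_m\}$ for transition guesses, using $n + m = \bigo(\max\{n, m\})$ variables in total. The multispanner-expression is
\[
 r \ := \ \Bigl(\sum_{\ell=1}^n x_\ell\{\eword\}\Bigr) \cdot \Bigl(\Bigl(\sum_{\ell=1}^m y_\ell\{\eword\}\Bigr) \cdot \bigl(\sum_{a \in \Sigma} a\bigr)\Bigr)^{\!*},
\]
so each word of $\lang{r}$ stands in bijection with an intended guess sequence; moreover, since the $x_\ell$'s always sort before the $y_\ell$'s, the canonical encodings $\nrefw{t, \doc}$ of the resulting multispan-tuples have exactly this shape, and $|r| = \bigo(|\Sigma| + \max\{n, m\})$. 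Next I build a DSST $\sst'$ whose state space consists of a fresh initial state $q_0'$, one intermediate state per $\ell \in [n]$, the states $Q$ of $\sst$ (reached after installing the NSST's initial register valuation), and, for each $q \in Q$ and $\ell \in [m]$, two auxiliary states that serialize the bracket pair $\open{y_\ell}\close{y_\ell}$. The transitions are the natural ones: $q_0'$ consumes $\open{x_\ell}\close{x_\ell}$ to reach $p_\ell$ under the constant assignment $X \mapsto I(p_\ell)(X)$; any $q \in Q$ consumes $\open{y_\ell}\close{y_\ell}$ via the auxiliary states, recording the choice $\ell$ in its state; and then on input $a \in \Sigma$ the DSST applies $\tau_\ell^{q,a}$ when it exists and is undefined otherwise. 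Set $F'(q) := F(q)$ on $Q$, undefined elsewhere. The state count is $1 + n + |Q| + 2|Q|m = \bigo(n \cdot \max\{n, m\})$, and the whole construction is feasible in $\bigo(|\sst| + n \cdot \max\{n, m\} + |\Sigma|)$ time.

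\textbf{Correctness and main obstacle.} I would then prove $\sem{E \cdot T} = \sem{\sst}$ by exhibiting a bijection between accepting runs of $\sst$ on $\doc$ and pairs $(t, \rho')$ with $t \in E(\doc)$ and $\rho'$ an accepting run of $\sst'$ on $\nrefw{t, \doc}$: an accepting NSST run fixes an initial state $p_{\ell_0}$ and transition indices $\ell_1, \ldots, \ell_{|\doc|}$, which in turn determine a unique $t$; since $\sst'$ is deterministic, this $t$ then forces a unique $\rho'$, and its register valuations coincide with those of the NSST run by an easy induction on step number, so both produce the same output string. Together with the bag semantics of both models this yields the desired equality. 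The step that requires the most care, and is the main obstacle, is precisely this bijection: the canonical encoding $\nrefw{\cdot,\cdot}$ would collapse two empty spans at the same position whenever they share a variable index, which would merge distinct NSST runs into the same $t$ and destroy the bag-semantics count. Keeping the initial-state and transition variable indices in disjoint ranges is exactly what prevents this collapse, and is the reason why the construction needs $\bigo(\max\{n, m\})$ variables rather than only $m$.
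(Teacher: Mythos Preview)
Your approach is essentially the paper's: encode each nondeterministic choice of the NSST (one choice of initial state at position~$1$, one choice of transition at each input position) as an empty-span annotation, then have a DSST read these annotations off the canonical multiref-word to deterministically replay a unique run. The paper's expression is $r = r_n \cdot (r_m \cdot \Sigma)^*$ with $r_i = \varsx_1\{\eword\} + \ldots + \varsx_i\{\eword\}$, using a \emph{single} variable group $\{\varsx_1,\ldots,\varsx_{\max\{n,m\}}\}$ for both the initial-state guess and the transition guesses.

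The substantive difference is your use of two \emph{disjoint} variable groups. You flag this as the main obstacle, and you are right to: with the paper's shared variable group, the initial-state choice $\ell_0$ and the first transition choice $\ell_1$ both place an empty span at position~$1$, so the multispan-tuple only records the \emph{set} $\{\ell_0,\ell_1\}$. In particular, the choice sequences $(\ell_0,\ell_1)=(1,2)$ and $(2,1)$ yield the same $t \in E(\doc)$ and hence the same $\nrefw{t,\doc}$, collapsing two potentially distinct accepting runs of $\sst$ into one output of the ET program. The paper asserts ``$\nrefw{\lang{r}} = \lang{r}$'' and concludes correctness without arguing the bijection, so this point is indeed glossed over there; your disjoint-group construction cleanly avoids it. The cost is that you use $n+m$ variables rather than the $\max\{n,m\}$ stated in the theorem, so you do not literally meet that bound, but since $n+m \leq 2\max\{n,m\}$ all the $\bigO$-bounds in the statement are preserved.

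One small slip: you write $\dom(I)=\{p_1,\ldots,p_n\}$, but $n$ is the number of \emph{states}, not of initial states. You only need to enumerate the (at most~$n$) states with $I$ defined; using $n$ variables for this still works, with some going unused.
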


Let us conclude this section by mentioning that the statement of Theorem~\ref{ETtoNSST} is also true, if $E$ is a multispanner represented by an NFA (this stronger version is proven in the appendix).

\section{Evaluation of linear ET programs}\label{sec:enumeration}
	
In this section, we present the main technical result of the paper, regarding the evaluation of linear ET programs. Specifically, we consider the following enumeration problem 
(let $E$ be a regex multispanner and $T$ a linear polyregular function).
\begin{center}
	\framebox{
		\begin{tabular}{rl}
			\textbf{Problem:} & $\textsc{EnumLinearET}[(E, T)]$ \\
			\textbf{Input:} & A document $\doc$ \\
			\textbf{Output:} & Enumerate $\sem{E \cdot T}(\doc)$
		\end{tabular}
	}
\end{center}

\noindent Notice that $\sem{E \cdot T}(\doc)$ is a bag; thus, the task is to produce an enumeration that contains each element of the bag exactly once, e.\,g., $(a, a, b, c, a)$ is a possible enumeration of $\multiset{b, a, a, a, c}$.

As usual,
we measure the running time in \emph{data complexity}, namely, we assume that $E$ and $T$ are fixed. Given this assumption, we can assume that $E$ is given as a multispanner-expression, and $T$ as a DSST. Otherwise, we can convert $E$ and $T$ to satisfy this requirement. 

For this problem, we strive for an \emph{enumeration algorithm with linear preprocessing and output-linear delay}, i.\,e., in a \emph{preprocessing phase} it receives the input and produces some data structure $\texttt{DS}$ which encodes the expected output, and in the following \emph{enumeration phase} it produces a sequential enumeration $w_1, \ldots, w_\ell$ of the results from $\texttt{DS}$. Moreover, the time for the preprocessing phase is~$\bigo(|\doc|)$, the time for producing $w_1$
is less than $c\cdot |w_1|$, and the time between producing $w_{i-1}$ and $w_i$ is less than $c \cdot |w_i|$, for some fixed constant $c$ that does not depend on the input. As it is common~\cite{Segoufin13},
we assume the computational model of \emph{Random Access Machines} (RAM) with uniform cost measure and addition and subtraction as basic operations~\cite{aho1974design}. 
We obtain the following result.

\begin{theorem}\label{theo:enum}
$\textsc{\upshape{EnumLinearET}}[(E, T)]$ admits an enumeration algorithm with linear preprocessing time and output-linear delay.
\end{theorem}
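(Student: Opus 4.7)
The plan is to leverage Theorem~\ref{ETtoNSST} to convert the fixed pair $(E,T)$ into an equivalent NSST $\sst$ with $\sem{\sst} = \sem{E \cdot T}$; since $(E,T)$ is fixed, this conversion happens once at compile time and contributes only a constant factor in data complexity. After this reduction, the problem becomes: given the fixed NSST $\sst$ and an input $\doc$, enumerate the bag $\multiset{\out(\rho) \mid \rho \in \Runs_{\sst}(\doc)}$ with linear preprocessing and output-linear delay.

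In the preprocessing phase, I would run $\sst$ on $\doc$ in a forward sweep to compute, for each position $i \in \{0, 1, \dots, |\doc|\}$ and each state $q$, the bag of transitions $(p, \doc[i], \sigma, q) \in \Delta$ that lead into $(i,q)$ from a reachable $(i{-}1, p)$. A subsequent backward sweep from the states $q$ where $F(q)$ is defined then restricts attention to ``useful'' configurations, i.e., those that participate in at least one accepting run. The resulting structure is a run-DAG of size $\bigo(|\doc|)$ (since $\sst$ is fixed). Each accepting run of $\sst$ on $\doc$ corresponds to a distinct path from an initial configuration $(0,q_0)$ with $I(q_0)$ defined to some accepting configuration $(|\doc|, q_n)$ in this DAG, where multi-transitions contribute to multiplicity, so the DAG faithfully represents the bag semantics.

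In the enumeration phase, I would perform a depth-first traversal of the run-DAG, emitting one output per accepting run. When a run $\rho = (q_0,\nu_0) \xrightarrow{a_1/\sigma_1} \cdots \xrightarrow{a_n/\sigma_n} (q_n,\nu_n)$ is fixed, its output $\out(\rho) = \nu_n(F(q_n))$ can be produced by recursively expanding the registers appearing in $F(q_n)$ through $\sigma_n, \sigma_{n-1}, \dots, \sigma_1$ back to the initial valuation. The crucial observation is that the copyless property of each $\sigma_i$ ensures that every register cell contributes to the final output at most once: the induced ``evaluation tree'' has total size $\bigo(|\out(\rho)|)$ up to a constant depending only on $\sst$. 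Hence a single in-order traversal of this tree streams out $\out(\rho)$ in time $\bigo(|\out(\rho)|)$, matching output-linear delay.

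The main obstacle is guaranteeing output-linear delay at the transition between two consecutive runs $\rho^{(i-1)}$ and $\rho^{(i)}$ in the DFS ordering, including the very first output. Two concerns arise: first, the bookkeeping to move from one accepting path to the next in the run-DAG must be charged to the length of $\out(\rho^{(i)})$, which requires carefully maintaining the current path together with pointers to unexplored branches (so that advancing never costs more than traversing the part of the evaluation tree that changes); second, because the enumeration is under bag semantics, distinct runs yielding the same output must be enumerated separately, so we must be careful never to collapse runs via output equality, which the run-DAG representation handles directly since it is indexed by runs rather than outputs. Once these are handled, the overall preprocessing is $\bigo(|\doc|)$ and each successive output $w_i$ is produced in time $\bigo(|w_i|)$, giving Theorem~\ref{theo:enum}.
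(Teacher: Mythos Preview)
Your reduction to an NSST via Theorem~\ref{ETtoNSST} is the right first step, and the paper does the same. But the enumeration argument that follows has two genuine gaps.

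First, the claim that the copyless property makes the evaluation tree of $\out(\rho)$ have size $\bigO(|\out(\rho)|)$ is false. Copyless only bounds how often a register appears on the right-hand side of a single assignment; it does not bound the depth of the expansion chain. Take $F(q)=X$, $\nu_0(X)=\eword$, and $\sigma_i(X)=X$ for all $i$: then $\out(\rho)=\eword$ while unrolling $X$ through $\sigma_n,\ldots,\sigma_1$ costs $\Theta(n)$. The paper explicitly isolates this phenomenon: assignments that merely permute registers (\emph{relabelings}) add evaluation depth without contributing output symbols, and no amount of trimming to ``useful'' configurations removes them. A second, related failure mode is register \emph{garbage}: content that is built up along a run but discarded at the end; the paper first normalises the NSST to be garbage-free (Proposition~\ref{prop:garbageFree}) precisely because otherwise the work done along the path cannot be charged to the output.

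Second, even setting the per-run cost aside, your DFS over the run-DAG does not give output-linear delay between consecutive outputs. An NSST whose transition bag contains two copies of an identity transition at every step has $2^n$ accepting runs on $a^n$, all with output~$\eword$; consecutive DFS paths can diverge at depth~$1$, forcing $\Theta(n)$ backtracking and re-descent for a length-$0$ output. Your remark that the changed suffix can be ``charged to the part of the evaluation tree that changes'' does not help, since that part can itself have size~$\Theta(n)$ while contributing nothing to the output.

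The paper's proof avoids both problems by not walking the run-DAG at all. It performs a single forward pass that incrementally builds an \emph{Enumerable Compact Set with Assignments}: a shared DAG of valuation, assignment, and union nodes, one node per (state, position) rather than one path per run. The operations $\dadd$, $\dextend$, $\dunion$ are engineered so that consecutive relabelings are composed on the fly and union nodes are restructured to keep the left output-depth bounded by a constant ($k$-boundedness). Together with garbage-freeness, this guarantees that every step of the enumeration traversal on the ECSA can be charged to an output symbol, which is exactly what your DFS lacks.
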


Due to space restrictions, all the details and the analysis of the enumeration algorithm are deferred to the Appendix.
In the following, we
highlight
the main technical challenges of this algorithm.

\newcommand{\nsstET}{\nsst_{E,T}}

The first step of the algorithm is to apply Theorem~\ref{ETtoNSST} and convert the pair $(E,T)$ into an NSST $\nsstET$. This takes time exponential in $|E|$;
nevertheless, it does not depend on $|\doc|$ and so we can consider it as constant time. Therefore, our enumeration algorithm aims for computing $\sem{\nsstET}(\doc)$ with linear time preprocessing and output-linear delay. 

For evaluating $\nsstET$ over $\doc$, the first challenge that we need to overcome is that its runs could maintain registers with content that is not used at the end of the run. For an illustration, consider the following NSST (note that $I(q)$ is the valuation $X := \eword$ and $Y := \eword$, and $F(p) = X$ and $F(r) = Y$):
\begin{center}
\begin{tikzpicture}[defaultstyle,
	every text node part/.style={align=center},
	mystate/.style={state, inner sep=0pt, minimum size=5mm},
	]
	
	\node[mystate, initial left, initial text={$\begin{array}{l}
			X := \eword \vspace{-1mm} \\ 
			Y := \eword
		\end{array}\!\!\!$}] (s0) at (0, 0) {$q$};
	
	\draw[->] (s0) edge[loop above, pos=0.5] node {$a \ \ \begin{array}{|l}
			X := aX \vspace{-1mm} \\ 
			Y := Y
		\end{array}
		$} (s0);
	
	\draw[->] (s0) edge[loop below, pos=0.5] node {$b \ \ \begin{array}{|l}
			X := X \vspace{-1mm} \\ 
			Y := bY
		\end{array}$} (s0);
		
	\node[mystate, accepting by arrow, accepting text={$X$}] (s1) at (4, 0.5) {$p$};
	\node[mystate, accepting by arrow, accepting text={$Y$}] (s2) at (4, -0.5) {$r$};
	
	\draw[->] (s0) edge[sloped] node {$
		a \ \ \begin{array}{|l}
			X := aX \vspace{-1mm} \\ 
			Y := Y
		\end{array}
		$} (s1);
	
	\draw[->] (s0) edge[swap, sloped] node {$
		b \ \ \begin{array}{|l}
			X := X \vspace{-1mm} \\ 
			Y := bY
		\end{array}
		$} (s2);
	
\end{tikzpicture}
\end{center}

For each input word $w$ with $n$ $a$-symbols and $m$ $b$-symbols, the NSST outputs $a^n$ if $w$ ends with $a$ and $b^m$ if $w$ ends with $b$. Consequently, every run on a word that ends with $a$ produces ``garbage'' in register $Y$, since the content of this register is not used for the output (and analogously with register $X$ for inputs that end with $b$).
This behavior of storing ``garbage'' will be problematic for our enumeration approach, given that the delay depends on the (potentially useless) contents of the registers.

Given the above discussion, we formalize
the notion of ``garbage''
as follows. 
Consider an NSST $\nsst = (Q, \Sigma, \Omega,  \Reg, \Delta, I, F)$.
For $u \in (\Reg \cup \Omega)^*$, let $\reg(u)$ be the set of all registers $X \in \Reg$ that appear in $u$.
For
$\sigma: \Reg \pmap (\Reg \cup \Omega)^*$ let $\reg(\sigma) = \bigcup_{X \in \dom(\sigma)} \reg(\sigma(X))$, namely,
$\reg(\sigma)$ is
the set of all registers used by $\sigma$.  
We say that $\nsst$ is \emph{garbage-free} if, and only if, for every string $w = a_1\ldots a_n$ and every accepting run $\rho$ of the form:
\begin{equation*}
	\rho \ := \ (q_0, \nu_0) \xrightarrow{a_1/\sigma_1} (q_1, \nu_1) \xrightarrow{a_2 /\sigma_2} \ldots \xrightarrow{a_n/\sigma_{n}} (q_{n}, \nu_n)
\end{equation*}
it holds that $\dom(\nu_i) = \reg(\sigma_{i+1})$ for every $i < n$, and $\dom(\nu_n) = \reg(F(q_n))$. In other words, the registers $\dom(\nu_i)$ that we have filled with content so far coincide with the registers $\reg(\sigma_{i+1})$ that we use on the right hand sides of the next assignment. 

The first challenge is to show how to make NSSTs garbage-free.
\begin{proposition}\label{prop:garbageFree}
	For every NSST $\nsst$, there exists a garbage-free NSST $\nsst'$ such $\sem{\nsst}(w) = \sem{\nsst'}(w)$ for every string $w$.
\end{proposition}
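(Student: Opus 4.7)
The plan is to augment each state of $\nsst$ with a subset $S\subseteq \Reg$ representing the registers whose current contents will still be read in the remainder of the run, and to guess these sets nondeterministically along the way. Formally, for any accepting run of $\nsst$ with assignments $\sigma_1,\ldots,\sigma_n$ and final state $q_n$, the sequence of useful sets is forced to be $S_n := \reg(F(q_n))$ and, inductively for $i<n$, $S_i := \reg(\sigma_{i+1}\upharpoonright S_{i+1}) = \bigcup_{X\in S_{i+1}}\reg(\sigma_{i+1}(X))$. A short downward induction using acceptance shows $S_i\subseteq \dom(\sigma_i)$ for $i\geq 1$ and $S_0\subseteq \dom(I(q_0))$, so every restriction below is well-defined.

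Based on this, define $\nsst' := (Q\times 2^{\Reg},\, \Sigma,\, \Omega,\, \Reg,\, \Delta',\, I',\, F')$. Set $I'(q,S) := I(q)\upharpoonright S$ whenever $I(q)$ is defined and $S\subseteq \dom(I(q))$, and $F'(q,S) := F(q)$ whenever $F(q)$ is defined and $S = \reg(F(q))$; both are undefined otherwise. For every copy of a transition $(q,a,\sigma,q')\in\Delta$ and every $S'\subseteq \dom(\sigma)$, insert into $\Delta'$ a copy of $\bigl((q,\reg(\sigma\upharpoonright S')),\, a,\, \sigma\upharpoonright S',\, (q',S')\bigr)$. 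Since restricting a copyless assignment only drops rows, the resulting assignments are again copyless, so $\nsst'$ is a well-defined NSST.

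The next step is to establish that the natural correspondence between accepting runs of $\nsst$ and of $\nsst'$ is an output-preserving bijection of \emph{bags}. In one direction, an accepting run of $\nsst$ fixes the $S_i$'s uniquely; replacing each $q_i$ by $(q_i,S_i)$ and each $\sigma_i$ by $\sigma_i\upharpoonright S_i$ yields a well-defined run of $\nsst'$. A short induction shows $\dom(\nu'_i)=S_i$ and $\nu'_i(X) = \nu_i(X)$ for every $X\in S_i$; in particular $\nu'_n(F(q_n)) = \nu_n(F(q_n))$, so outputs coincide and the run of $\nsst'$ is accepting. Conversely, any accepting run of $\nsst'$ exposes the $S_i$'s in its state sequence, and by construction each of its transition copies is labelled by a unique $\Delta$-transition copy together with the now-fixed guess $S_{i+1}$; dropping the restrictions and keeping the underlying $\Delta$-transitions recovers an accepting run of $\nsst$ (here one checks that $\dom(\nu_i)\supseteq S_i$ holds inductively, so $\nu_n(F(q_n))$ is defined). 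Garbage-freeness of $\nsst'$ is then immediate, since $\dom(\nu'_i) = S_i = \reg(\sigma_{i+1}\upharpoonright S_{i+1})$ for $i<n$ and $\dom(\nu'_n) = S_n = \reg(F(q_n))$.

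The main obstacle is \emph{bag} (rather than set) semantics: the correspondence above must preserve multiplicities on the nose, not merely the existence of an accepting run. This is handled by the fact that each copy of a transition in $\Delta'$ carries a canonical label, namely the $\Delta$-transition copy it was built from together with its target guess $S'$, and both are uniquely recoverable from any accepting run of $\nsst'$. A secondary subtlety is that the reverse direction must verify the un-restricted assignments really do constitute a valid run of $\nsst$; here the identity $\nu'_i = \nu_i \upharpoonright S_i$, combined with the fact that every register read downstream already lies in $S_i$, guarantees that definedness issues in the discarded garbage registers never propagate to $F(q_n)$.
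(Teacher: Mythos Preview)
Your proof is correct and follows essentially the same approach as the paper: augment states with a set $S\subseteq\Reg$ of ``useful'' registers, restrict each assignment to the registers in the target set, and compute the $S_i$ backward from $\reg(F(q_n))$. Your treatment is in fact slightly more careful than the paper's (you explicitly require $S'\subseteq\dom(\sigma)$ and $S\subseteq\dom(I(q))$, and you spell out why the run-correspondence is a bijection of \emph{bags} via the canonical labeling of $\Delta'$-copies by their originating $\Delta$-copy and guess $S'$), but the construction and the correctness argument are the same.
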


The construction of Proposition~\ref{prop:garbageFree} causes a blow-up that is exponential in the number of registers, since we turn an NSST with $|Q|$ states into an NSST with $|Q| \times 2^{|\Reg|}$ states. Of course, if we start with a garbage-free NSST, this blow-up can be avoided. Interestingly, we can show that one can check the garbage-free property in polynomial time.
\begin{proposition} \label{prop:garbageFreePTime}
	Given an NSST $\nsst = (Q, \Sigma, \Omega,  \Reg, \Delta, I, F)$, we can decide in time $\bigo(|\Delta| \cdot |\Reg|)$ whether $\nsst$ is garbage-free.
\end{proposition}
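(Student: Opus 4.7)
\emph{Reducing garbage-freeness to a local condition.} The plan starts from a small observation on assignment composition: if $\reg(\sigma) \subseteq \dom(\nu)$, then $\dom(\nu \circ \sigma) = \dom(\sigma)$. Indeed, for every $X \in \dom(\sigma)$ we have $\reg(\sigma(X)) \subseteq \reg(\sigma) \subseteq \dom(\nu)$, so $[\nu \circ \sigma](X) = \hat{\nu}(\sigma(X))$ is defined; and if $X \notin \dom(\sigma)$, then $\sigma(X)$ is undefined, hence $[\nu \circ \sigma](X)$ is too. Iterating this inductively along a run that is garbage-free up to step $i-1$, one gets $\dom(\nu_i) = \dom(I(q_0))$ when $i = 0$ and $\dom(\nu_i) = \dom(\sigma_i)$ when $i \geq 1$. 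Therefore garbage-freeness of $\nsst$ is equivalent to the following purely ``local'' requirement along every accepting run: $\dom(I(q_0)) = \reg(\sigma_1)$, $\dom(\sigma_i) = \reg(\sigma_{i+1})$ for $1 \leq i < n$, $\dom(\sigma_n) = \reg(F(q_n))$, and $\dom(I(q_0)) = \reg(F(q_0))$ in the corner case $n = 0$.

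\emph{Per-state characterization.} Call a transition \emph{useful} if its source is reachable from $\dom(I)$ in the transition graph of $\nsst$ and its target is co-reachable to $\dom(F)$; define useful-initial and useful-final states analogously. I claim that $\nsst$ is garbage-free iff for every useful state $q$ there exists a single set $D_q \subseteq \Reg$ such that: $D_q = \dom(I(q))$ when $q$ is useful-initial, $D_q = \dom(\sigma)$ for every useful incoming transition $(p,a,\sigma,q)$, $D_q = \reg(\sigma)$ for every useful outgoing transition $(q,a,\sigma,q')$, and $D_q = \reg(F(q))$ when $q$ is useful-final. The ``if'' direction is a straight induction using the observation above. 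For ``only if'', pick any two useful incoming transitions with assignments $\sigma_1, \sigma_1'$ into $q$ and any useful outgoing transition with assignment $\sigma_2$ from $q$: the pre-paths leading to their sources and the post-path from the target of $\sigma_2$ can be glued around $q$ into two accepting runs that use $\sigma_1 \cdot \sigma_2$ and $\sigma_1' \cdot \sigma_2$ respectively, and the local conditions from the previous paragraph force $\dom(\sigma_1) = \reg(\sigma_2) = \dom(\sigma_1')$; the equalities involving $\dom(I(q))$ and $\reg(F(q))$ follow by analogous gluings.

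\emph{Algorithm and complexity.} Compute reachable and co-reachable states by two BFS traversals of the transition graph in $O(|\Delta|)$, thereby flagging every useful state and transition. Precompute $\reg(\sigma), \dom(\sigma) \in \{0,1\}^{|\Reg|}$ for every $\sigma$ occurring in $\Delta$, and $\dom(I(q)), \reg(F(q))$ whenever defined, in total time $O(|\Delta| \cdot |\Reg|)$. Then iterate over every useful state $q$, comparing, as bit vectors, all candidate values for $D_q$ coming from its useful incoming transitions, its useful outgoing transitions, and its initial/final slots: each comparison costs $O(|\Reg|)$, and every transition contributes $O(1)$ such comparisons (one at its source, one at its target), so the total running time is $O(|\Delta| \cdot |\Reg|)$. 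The main obstacle is the ``only if'' part of the per-state characterization: without the gluing argument enabled by the definition of usefulness, one could imagine two incompatible $\dom$ values coexisting at a state $q$ along distinct accepting runs while each individual run still satisfies the local conditions, which would make the per-state equality test unsound; the careful ``useful'' filtering is precisely what rules this out.
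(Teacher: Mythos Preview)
Your proof is correct and follows essentially the same strategy as the paper's. Both arguments (i) restrict attention to the ``useful'' part of the transducer (the paper phrases this as assuming $\nsst$ is trim without loss of generality, while you compute reachable and co-reachable states explicitly), (ii) reduce garbage-freeness to a per-state consistency condition --- your single set $D_q$ is exactly the paper's pair $\mathsf{regv}(p)=\mathsf{domv}(p)$ together with the initial/final constraints --- and (iii) check this condition by a single pass over the transitions in time $O(|\Delta|\cdot|\Reg|)$. Your derivation of the local condition via the observation $\reg(\sigma)\subseteq\dom(\nu)\Rightarrow\dom(\nu\circ\sigma)=\dom(\sigma)$ is a bit more explicit than the paper's, and you handle the $n=0$ corner case directly whereas the paper sidesteps it by an extra trimming assumption, but these are expositional rather than substantive differences.
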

The second challenge is maintaining the set of outputs compactly for enumerating them. The main problem here is that the output of a run is not produced linearly as a single string (as is the case for classical spanners~\cite{AmarilliBMN21,AmarilliBMN19}) but instead in parallel on different registers that are combined in some order 
at the end of the input. To solve this, we follow the approach in~\cite{MunozR22,MunozR23} and present an extension of \emph{Enumerable Compact Sets} (ECS), a data structure that stores sets of strings compactly and retrieves them with output-linear delay. We modify ECS for storing sets of valuations and we call it \emph{ECS with assignments}. For storing valuations, we use assignments in the internal nodes of the data structure, which allows us to encode the runs' output and keep the content of different registers~synchronous.

Although we can eliminate the garbage from an NSST
(cf., Proposition~\ref{prop:garbageFree}),
the machine can still swap and move registers during a run, producing assignments that permute the registers' content but do not contribute to a final output. We call these  assignments  \emph{relabelings}. Thus, the last challenge is taking care of relabeling assignments.
Similarly as in~\cite{MunozR23} (i.e., with shifts), we treat them as special objects and
compact
them in the data structure whenever possible.
These extensions require
to revisit ECS and to provide
new ways for extending or
unifying
sets of valuations by also taking care of relabeling~assignments.
 	
\section{Composition of linear ET programs}\label{sec:composition}
	
A valuable property of ET programs is that
they receive
documents as input and produce documents as outputs.
In this line, it is natural to think on reusing the output of one ET program $(E_1, T_1)$ to feed a second ET program $(E_2, T_2)$, namely, to compose them. Formally, we define the \emph{composition} $(E_1 \cdot T_1) \circ (E_2 \cdot T_2)$ as a function $\sem{(E_1 \cdot T_1) \circ (E_2 \cdot T_2)}$ that maps documents to a bag of documents such that for every document $\doc$:
\[
\sem{(E_1 \cdot T_1) \circ (E_2 \cdot T_2)} (\doc) \ = \ \bigcup_{\doc' \in \sem{E_1 \cdot T_1}(\doc)} \sem{E_2 \cdot T_2}(\doc').
\]
Note that we use here the union of bags, which is defined in the natural way (a formal definition is given in the Appendix, Section~\ref{sec:defnAppendix}).

For extraction and transformation of information, it is useful to evaluate the composition efficiently. One naive approach
is to evaluate $\sem{E_1 \cdot T_1}(\doc)$
(e.g., by using the evaluation algorithm of Section~\ref{sec:enumeration} in case that $(E_1,T_1)$ is a \emph{linear} ET program),
and for every output $\doc'$ in $\sem{E_1 \cdot T_1}(\doc)$ compute $\sem{E_2 \cdot T_2}(\doc')$, gathering all the outputs together.
Of course, this could be time consuming, since $|\sem{E_1 \cdot T_1}(\doc)|$ could be exponential in $|\doc|$ in the worst case.

Towards solving the previous algorithmic problem, in the next result we show that every composition of NSSTs can be defined by an NSST. Formally, let us denote the input and output alphabet of some NSST $\nsst$ by $\Sigma(\nsst)$ and $\Omega(\nsst)$, respectively.  Given two NSSTs $\nsst_1$ and $\nsst_2$ such that $\Omega(\nsst_1) \subseteq \Sigma(\nsst_2)$, we define the composition $\nsst_1 \circ \nsst_2$ as the function from documents to bags of documents: $\sem{\nsst_1 \circ \nsst_2}(\doc) = \ \bigcup_{\doc' \in \sem{\nsst_1}(\doc)} \sem{\nsst_2}(\doc')$.
We obtain the following result.
\begin{theorem}\label{theo:nsstcomposition}
For every pair of NSSTs $\nsst_1$ and $\nsst_2$ such that $\Omega(\nsst_1) \subseteq \Sigma(\nsst_2)$, there exists an NSST $\nsst$ such that $\sem{\nsst} = \sem{\nsst_1 \circ \nsst_2}$. The construction of $\nsst$ is effective, and $|\nsst| = \bigo(2^{\mathsf{poly(|\nsst_1|, |\nsst_2|)}})$.
\end{theorem}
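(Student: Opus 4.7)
The plan is to adapt the classical Alur--Černý composition for deterministic streaming string transducers to the nondeterministic, bag-semantic setting. The central idea is to simulate $\nsst_1$ on the input document $\doc$ while carrying, for every register $X\in\Reg_1$ of $\nsst_1$ and every state $p\in Q_2$, a guess of the ``behaviour summary'' that $\nsst_2$ would perform if it entered the current content of $X$ starting from state $p$. Such a summary records (i) the exit state $f_X(p)\in Q_2$ and (ii) the sub-assignment that $\nsst_2$ would apply to its own registers. Updating these summaries incrementally, as $\nsst_1$ modifies its registers, allows $\nsst$ to never materialize the intermediate document $\doc'\in\sem{\nsst_1}(\doc)$.

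Concretely, a state of $\nsst$ is a pair $(q_1,f)$ where $q_1\in Q_1$ tracks $\nsst_1$'s current state and $f\colon \Reg_1 \to (Q_2\pmap Q_2)$ records the exit-state component of every summary. The register set of $\nsst$ is indexed by $\Reg_1\times Q_2\times \Reg_2$: the auxiliary register $R_{X,p,Y}$ holds the string that $\nsst_2$'s register $Y$ would contain after reading the content of $X$ from state $p$ with $\nsst_2$-registers initially empty. On reading a letter $a$, $\nsst_1$ applies a copyless assignment $\tau$; simultaneously, $\nsst$ nondeterministically chooses, for every pair $(X,p)$ and every letter occurrence in the right-hand side $\tau(X)$, a transition of $\nsst_2$ to take on that letter. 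These choices determine the new value of $f_X(p)$ and the new content of $R_{X,p,Y}$, obtained by chaining the selected $\nsst_2$-transitions with the (old) summaries of the $\nsst_1$-registers appearing in $\tau(X)$. The initial valuation corresponds to all registers empty (so $f_X(p)=p$ for every $X,p$), with $q_1$ guessed from $\dom(I_1)$. At the end of input, $\nsst$ guesses an initial state $q_0^2\in\dom(I_2)$, simulates $\nsst_2$ on $F_1(q_1)$ using the summaries, and produces its output by substituting every occurrence of a $\nsst_2$-register in $F_2(q_2)$ by the appropriate concatenation of $R_{X,p,Y}$'s.

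The main technical obstacle, as already for the deterministic Alur--Černý composition, is to show that the induced register updates of $\nsst$ are copyless. The worry is that the same auxiliary register $R_{Z,p',Y'}$ might appear on the right-hand sides of $R_{X,p_1,Y}$ and $R_{X,p_2,Y}$ whenever two distinct entry states $p_1\neq p_2$ into the occurrence of $Z$ inside $\tau(X)$ collapse to the same state $p'$ before reading $Z$. This is handled in the standard way: combining the copylessness of $\tau$ on $\Reg_1$ with the copylessness of every $\nsst_2$-assignment on $\Reg_2$, and, when needed, pre-processing $\nsst_2$ so that distinct entry-state bundles use disjoint register copies, yields an overall copyless assignment on $\nsst$'s register set.

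To preserve bag semantics, the construction must induce a bijection between the bag of accepting runs of $\nsst$ on $\doc$ and the bag of pairs $(\rho_1,\rho_2)$ with $\rho_1\in\Runs_{\nsst_1}(\doc)$ and $\rho_2\in\Runs_{\nsst_2}(\out(\rho_1))$. This follows from the observation that every terminal letter produced by $\rho_1$ appears at a uniquely identifiable position of $\out(\rho_1)$ and is consumed by a unique transition of $\rho_2$, so that the nondeterministic guesses made by $\nsst$ decode back uniquely to the pair $(\rho_1,\rho_2)$ and vice versa; in particular, the multiplicities of transitions in $\Delta_1$ and $\Delta_2$ multiply correctly into the multiplicities of $\Delta$. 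For the size bound, the summary component lives in $(Q_2\pmap Q_2)^{\Reg_1}$, the auxiliary register set has size $|\Reg_1|\cdot|Q_2|\cdot|\Reg_2|$, and the number of nondeterministic guesses per input letter is bounded polynomially in $|\nsst_2|$ raised to the size of the right-hand sides of $\nsst_1$-assignments; together these yield $|\nsst|=O(2^{\mathsf{poly}(|\nsst_1|,|\nsst_2|)})$.
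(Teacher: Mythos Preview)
Your high-level plan is the right one, but the specific instantiation has a genuine gap for \emph{bag} semantics, and this is precisely the point the paper singles out as requiring a change to the Alur--Deshmukh construction.

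In your construction you keep, for \emph{every} register $X\in\Reg_1$ and \emph{every} entry state $p\in Q_2$, a summary $(f_X(p),\{R_{X,p,Y}\}_Y)$, and at each input step you ``nondeterministically choose, for every pair $(X,p)$ and every letter occurrence in $\tau(X)$, a transition of $\nsst_2$''. But in the end only one entry state per register is actually used along the final $\nsst_2$-run $\rho_2$ on $\out(\rho_1)$. The guesses you make for all the \emph{other} pairs $(X,p')$ are never constrained and never contribute to the output string, yet each distinct combination of such guesses yields a \emph{distinct accepting run} of $\nsst$. Hence the map from accepting runs of $\nsst$ to pairs $(\rho_1,\rho_2)$ is many-to-one, not a bijection, and the multiplicity of each output is inflated by a factor equal to the number of irrelevant guess combinations. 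Your bijection argument (``the nondeterministic guesses made by $\nsst$ decode back uniquely to the pair $(\rho_1,\rho_2)$'') is simply false for those extra guesses. The paper makes this explicit: ``the construction in~[AlurD11] guesses subruns of~$\nsst_2$ for each register content and each starting state\ldots\ unused subruns could modify the output multiplicity.'' Its fix is to guess a \emph{single} subrun per register, i.e.\ the state carries $f\colon \Reg_1 \pmap Q_2\times\Asgs(\Reg_2,\Reg)\times Q_2$ with one entry state per register, and wrong guesses are simply discarded nondeterministically.

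A second, related omission: you do not make $\nsst_1$ (or $\nsst_2$) garbage-free. If $\nsst_1$ writes into some register whose content is later dropped, your construction still simulates $\nsst_2$ on that content and makes nondeterministic choices there, again inflating multiplicities. The paper uses Proposition~\ref{prop:garbageFree} to eliminate this source of spurious runs before composing. Finally, your treatment of copylessness (``handled in the standard way\ldots\ pre-processing $\nsst_2$ so that distinct entry-state bundles use disjoint register copies'') is exactly the delicate point that was wrong in~[AlurD11] and only repaired in~[AlurD22]; the paper notes that its single-subrun simplification sidesteps this error, whereas your all-entry-states version does not.
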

The statement of Theorem~\ref{theo:nsstcomposition} for set semantics rather than bag semantics was obtained
by~\cite{AlurD11,AlurD22}.
The novelty of
Theorem~\ref{theo:nsstcomposition} is that we extend
the result to bag semantics; namely, we need to maintain the multiplicities of the final outputs correctly. The proof revisits (and simplifies) the construction in~\cite{AlurD11,AlurD22}: we simulate $\nsst_1$ over $\doc$ while $\nsst_2$ runs over the
registers'
content of $\nsst_1$, compactly representing subruns of $\nsst_2$ by using
pairs
of states and assignment summaries~\cite{AlurC10}. The extension requires two changes for working under bag semantics. First, similar to the evaluation of NSSTs, garbage on $\nsst_1$-registers could generate additional runs for $\nsst_2$, producing outputs with wrong multiplicities.
Therefore, our
first step is to remove this garbage
by
constructing equivalent garbage-free NSSTs by using Propositon~\ref{prop:garbageFree}.
Second, the construction in~\cite{AlurD11} guesses subruns of~$\nsst_2$ for each register content and each starting state. In the end, we will use one of these guesses, and then unused subruns could modify the output multiplicity. Therefore, we simplify the construction in~\cite{AlurD11} by guessing a single subrun for each register content, using the non-determinism to discard wrong guesses. Interestingly, this simplification overcomes the error pointed out by Joost Engelfriet in the construction of~\cite{AlurD11}, which was solved recently in~\cite{AlurD22}. Our construction does not need the machinery of~\cite{AlurD22}; thus, an adaptation of our construction for NSSTs (with set semantics) can be considered as a simplified version of the proof in~\cite{AlurD11,AlurD22}. 
We conclude by mentioning that as an alternative proof of Theorem~\ref{theo:nsstcomposition} we could move to MSO transductions~\cite{EngelfrietH01}, extend the logic with a bag semantics, and then do the composition at a logical level; however, this strategy would lead to a non-elementary blow-up.

By combining Theorem~\ref{ETtoNSST},~\ref{theo:nsstcomposition} and~\ref{NSSTtoET}, we get the following corollary regarding the expressiveness of linear ET programs and their composition.
\begin{corollary}
  For every pair of linear ET programs $(E_1, T_1)$ and $(E_2, T_2)$, there exists
  a linear
  ET program $(E,T)$ such that $\sem{E \cdot T} \ \ = \ \ \sem{(E_1 \cdot T_1) \circ (E_2 \cdot T_2)}$.
\end{corollary}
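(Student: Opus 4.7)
The plan is to prove the corollary by chaining the three theorems mentioned right before it, following the natural pipeline: linear ET program $\to$ NSST $\to$ composed NSST $\to$ linear ET program. Given $(E_1, T_1)$ and $(E_2, T_2)$ with output alphabet of $T_1$ contained in the document alphabet of $E_2$, first I would apply Theorem~\ref{ETtoNSST} separately to each to obtain two NSSTs $\sst_1$ and $\sst_2$ with $\sem{\sst_i} = \sem{E_i \cdot T_i}$ for $i \in \{1,2\}$. Both $\sst_1$ and $\sst_2$ exist because both $(E_i, T_i)$ are linear ET programs (so the $T_i$ are in particular linear polyregular and admit DSST representations).

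Next I would verify the alphabet compatibility needed by Theorem~\ref{theo:nsstcomposition}: the output alphabet $\Omega(\sst_1)$ equals the output alphabet of $T_1$, which by assumption is contained in the document alphabet of $E_2$, and this document alphabet is precisely the input alphabet $\Sigma(\sst_2)$ of the NSST produced from $(E_2,T_2)$ (the construction of Theorem~\ref{ETtoNSST} takes $\Sigma$ as input alphabet, consuming the document letter-by-letter while simulating the multispanner internally). Hence $\Omega(\sst_1) \subseteq \Sigma(\sst_2)$ and Theorem~\ref{theo:nsstcomposition} yields a single NSST $\sst$ with $\sem{\sst} = \sem{\sst_1 \circ \sst_2}$, preserving bag semantics (this is exactly the point of the bag-semantics-aware composition theorem).

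The semantic identity to check is then a straightforward unrolling: for any document $\doc$,
\begin{align*}
\sem{\sst}(\doc) \;&=\; \bigcup_{\doc' \in \sem{\sst_1}(\doc)} \sem{\sst_2}(\doc') \\
&=\; \bigcup_{\doc' \in \sem{E_1 \cdot T_1}(\doc)} \sem{E_2 \cdot T_2}(\doc') \;=\; \sem{(E_1 \cdot T_1) \circ (E_2 \cdot T_2)}(\doc),
\end{align*}
where the unions are bag-unions in the sense defined for composition of ET programs and for composition of NSSTs.

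Finally I would apply Theorem~\ref{NSSTtoET} to $\sst$, which produces a regex multispanner $E$ and a \emph{linear} polyregular function $T$ (the theorem always yields a linear polyregular $T$, represented by a DSST) with $\sem{E \cdot T} = \sem{\sst}$. Combining the two semantic identities gives $\sem{E \cdot T} = \sem{(E_1 \cdot T_1) \circ (E_2 \cdot T_2)}$, and by construction $(E, T)$ is a linear ET program. The only subtlety worth emphasising is that each step of the chain faithfully transports bag semantics rather than set semantics; this is trivially true for Theorems~\ref{ETtoNSST} and~\ref{NSSTtoET}, but is precisely the main contribution of Theorem~\ref{theo:nsstcomposition} as noted in the discussion following its statement, so no extra work is required.
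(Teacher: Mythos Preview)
Your proposal is correct and follows exactly the approach the paper intends: the paper's own justification for this corollary is simply the sentence ``By combining Theorem~\ref{ETtoNSST},~\ref{theo:nsstcomposition} and~\ref{NSSTtoET}, we get the following corollary,'' and your argument is a faithful expansion of that pipeline, including the alphabet-compatibility check and the observation that bag semantics is preserved at each step.
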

We conclude 
this section by 
the following corollary regarding the evaluation of
the
composition of
linear
ET programs, obtained by combining Theorem~\ref{theo:nsstcomposition} and~\ref{theo:enum}.
\begin{corollary}\label{cor:compAndEnum}
	Given linear ET programs $(E_1, T_1), \ldots, (E_k, T_k)$ we can evaluate the composition:
	\[
	\sem{(E_1 \cdot T_1) \circ \ldots \circ (E_k \cdot T_k)}(\doc)
	\] 
	with linear time preprocessing and output-linear delay in data complexity.
\end{corollary}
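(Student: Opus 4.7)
The plan is to reduce the composition of $k$ linear ET programs to the evaluation of a single linear ET program, and then apply Theorem~\ref{theo:enum} directly. This uses the preceding corollary (the one stated immediately before Corollary~\ref{cor:compAndEnum}) as the workhorse.

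First, I would proceed by induction on $k$. The base case $k=1$ is Theorem~\ref{theo:enum} verbatim. For the inductive step, suppose that for any $j < k$ linear ET programs $(E_1,T_1),\ldots,(E_j,T_j)$ we have already produced a linear ET program $(E',T')$ with $\sem{E' \cdot T'} = \sem{(E_1 \cdot T_1) \circ \cdots \circ (E_j \cdot T_j)}$. Apply this for $j = k-1$, and then invoke the preceding corollary on the pair $(E',T')$ and $(E_k,T_k)$ to obtain a linear ET program $(E,T)$ with
\[
\sem{E \cdot T} \;=\; \sem{(E' \cdot T') \circ (E_k \cdot T_k)} \;=\; \sem{(E_1 \cdot T_1) \circ \cdots \circ (E_k \cdot T_k)}.
\]
Internally, this chains Theorem~\ref{ETtoNSST} (convert each ET program to an NSST), Theorem~\ref{theo:nsstcomposition} (compose the two NSSTs into one), and Theorem~\ref{NSSTtoET} (convert the resulting NSST back to a linear ET program). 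Crucially, $k$ and all programs $(E_i,T_i)$ are fixed, so each application of these three theorems produces objects whose sizes depend only on the programs, not on the input document $\doc$.

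With $(E,T)$ in hand, I invoke Theorem~\ref{theo:enum} to enumerate $\sem{E \cdot T}(\doc)$ with linear preprocessing and output-linear delay. Since $(E,T)$ is a linear ET program of constant size (in data complexity), the preprocessing phase runs in $\bigO(|\doc|)$ and the delay between consecutive outputs $w_{i-1}$ and $w_i$ is $\bigO(|w_i|)$, as required. Because the semantics of $(E,T)$ coincides with the semantics of the full composition (as bags), this enumeration is exactly an enumeration of $\sem{(E_1 \cdot T_1) \circ \cdots \circ (E_k \cdot T_k)}(\doc)$ with each element listed the correct number of times.

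The main subtlety, and the one thing I would be careful about in writing the proof, is the bag-semantics bookkeeping along the compositional chain: one must rely on the fact that Theorem~\ref{theo:nsstcomposition} preserves multiplicities (not merely sets), so that the bag $\sem{E \cdot T}(\doc)$ equals the bag $\bigcup_{\doc_{k-1}} \cdots \bigcup_{\doc_1 \in \sem{E_1 \cdot T_1}(\doc)} \sem{E_k \cdot T_k}(\doc_{k-1})$ rather than its underlying set. Once this is in place, the tower of exponential blow-ups in the successive NSST compositions is absorbed into the data-complexity constant, and the result follows immediately from Theorem~\ref{theo:enum}.
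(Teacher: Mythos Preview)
Your proposal is correct and matches the paper's approach: the paper states the corollary is obtained ``by combining Theorem~\ref{theo:nsstcomposition} and~\ref{theo:enum}'', and your inductive argument via the preceding corollary is exactly a spelled-out version of that combination. The only cosmetic difference is that you round-trip through ET programs at each inductive step (via Theorem~\ref{NSSTtoET}) whereas one could equivalently stay at the NSST level throughout and only invoke the NSST enumeration algorithm underlying Theorem~\ref{theo:enum} at the end; both routes are data-independent and yield the same result.
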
 

\section{Future Work}\label{sec:conclusions}
	
There are two obvious further research questions directly motivated by our results. Firstly, the framework of multispanners deserves further investigation. In particular, we wish to find out which results and properties of classical spanners directly carry over to multispanners, and in which regards these two models differ. Secondly, while we provide a sound formalisation of the class of polyregular ET programs (and have argued that it covers relevant extract and transform tasks), our
technical
results focus on the subclass of linear ET programs.
Our future work will focus on proving similar results for the full class of polyregular ET programs. 
	
\bibliographystyle{plain}
\bibliography{biblio}

\begin{thebibliography}{10}

\bibitem{aho1974design}
Alfred~V Aho and John~E Hopcroft.
\newblock {\em The design and analysis of computer algorithms}.
\newblock Pearson Education India, 1974.

\bibitem{dagstuhlreport}
Rajeev Alur, Miko{\l}aj Boja\'{n}czyk, Emmanuel Filiot, Anca Muscholl, and
  Sarah Winter.
\newblock {Regular Transformations (Dagstuhl Seminar 23202)}.
\newblock {\em Dagstuhl Reports}, 13(5):96--113, 2023.

\bibitem{AlurC10}
Rajeev Alur and Pavol Cern{\'{y}}.
\newblock Expressiveness of streaming string transducers.
\newblock In {\em FSTTCS}, pages 1--12, 2010.

\bibitem{AlurD11}
Rajeev Alur and Jyotirmoy~V. Deshmukh.
\newblock Nondeterministic streaming string transducers.
\newblock In {\em ICALP}, volume 6756, pages 1--20, 2011.

\bibitem{AlurD22}
Rajeev Alur, Taylor Dohmen, and Ashutosh Trivedi.
\newblock Composing copyless streaming string transducers.
\newblock {\em CoRR}, abs/2209.05448, 2022.

\bibitem{AmarilliBMN19}
Antoine Amarilli, Pierre Bourhis, Stefan Mengel, and Matthias Niewerth.
\newblock Enumeration on trees with tractable combined complexity and efficient
  updates.
\newblock In {\em PODS}, pages 89--103. {ACM}, 2019.

\bibitem{DBLP:journals/sigmod/AmarilliBMN20}
Antoine Amarilli, Pierre Bourhis, Stefan Mengel, and Matthias Niewerth.
\newblock Constant-delay enumeration for nondeterministic document spanners.
\newblock {\em {SIGMOD} Rec.}, 49(1):25--32, 2020.

\bibitem{amarilli2021constant}
Antoine Amarilli, Pierre Bourhis, Stefan Mengel, and Matthias Niewerth.
\newblock Constant-delay enumeration for nondeterministic document spanners.
\newblock {\em ACM Transactions on Database Systems (TODS)}, 46(1):1--30, 2021.

\bibitem{AmarilliBMN21}
Antoine Amarilli, Pierre Bourhis, Stefan Mengel, and Matthias Niewerth.
\newblock Constant-delay enumeration for nondeterministic document spanners.
\newblock {\em {ACM} Trans. Database Syst.}, 46(1):2:1--2:30, 2021.

\bibitem{berstel2013transductions}
Jean Berstel.
\newblock {\em Transductions and context-free languages}.
\newblock Springer-Verlag, 2013.

\bibitem{Bojanczyk2018}
Mikolaj Bojanczyk.
\newblock Polyregular functions.
\newblock {\em CoRR}, abs/1810.08760, 2018.

\bibitem{Bojanczyk22}
Mikolaj Bojanczyk.
\newblock Transducers of polynomial growth.
\newblock In {\em LICS}, pages 1:1--1:27. {ACM}, 2022.

\bibitem{BourhisEtAl2021}
Pierre Bourhis, Alejandro Grez, Louis Jachiet, and Cristian Riveros.
\newblock Ranked enumeration of {MSO} logic on words.
\newblock In {\em 24th International Conference on Database Theory, {ICDT}
  2021, March 23-26, 2021, Nicosia, Cyprus}, pages 20:1--20:19, 2021.

\bibitem{DoleschalKM23}
Johannes Doleschal, Benny Kimelfeld, and Wim Martens.
\newblock The complexity of aggregates over extractions by regular expressions.
\newblock {\em Logical Methods in Computer Science}, 19(3), 2023.

\bibitem{DoleschalEtAl2019}
Johannes Doleschal, Benny Kimelfeld, Wim Martens, Yoav Nahshon, and Frank
  Neven.
\newblock Split-correctness in information extraction.
\newblock In {\em Proceedings of the 38th {ACM} Symposium on Principles of
  Database Systems, {PODS} 2019, Amsterdam, The Netherlands, June 30 - July 5,
  2019}, pages 149--163, 2019.

\bibitem{doleschal2019split}
Johannes Doleschal, Benny Kimelfeld, Wim Martens, Yoav Nahshon, and Frank
  Neven.
\newblock Split-correctness in information extraction.
\newblock In {\em Proceedings of the 38th ACM SIGMOD-SIGACT-SIGAI Symposium on
  Principles of Database Systems}, pages 149--163, 2019.

\bibitem{doleschal2022weight}
Johannes Doleschal, Benny Kimelfeld, Wim Martens, and Liat Peterfreund.
\newblock Weight annotation in information extraction.
\newblock {\em Logical Methods in Computer Science}, 18, 2022.

\bibitem{DriscollSST86}
James~R. Driscoll, Neil Sarnak, Daniel~Dominic Sleator, and Robert~Endre
  Tarjan.
\newblock Making data structures persistent.
\newblock In {\em STOC}, pages 109--121, 1986.

\bibitem{EngelfrietH01}
Joost Engelfriet and Hendrik~Jan Hoogeboom.
\newblock {MSO} definable string transductions and two-way finite-state
  transducers.
\newblock {\em {ACM} Trans. Comput. Log.}, 2(2):216--254, 2001.

\bibitem{engelfriet2002two}
Joost Engelfriet and Sebastian Maneth.
\newblock Two-way finite state transducers with nested pebbles.
\newblock In {\em MFCS}, pages 234--244. Springer, 2002.

\bibitem{FaginKRV13}
Ronald Fagin, Benny Kimelfeld, Frederick Reiss, and Stijn Vansummeren.
\newblock Spanners: a formal framework for information extraction.
\newblock In {\em PODS}, pages 37--48. {ACM}, 2013.

\bibitem{fagin2015document}
Ronald Fagin, Benny Kimelfeld, Frederick Reiss, and Stijn Vansummeren.
\newblock Document spanners: A formal approach to information extraction.
\newblock {\em Journal of the ACM (JACM)}, 62(2):1--51, 2015.

\bibitem{florenzano2020efficient}
Fernando Florenzano, Cristian Riveros, Mart{\'\i}n Ugarte, Stijn Vansummeren,
  and Domagoj Vrgo{\v{c}}.
\newblock Efficient enumeration algorithms for regular document spanners.
\newblock {\em ACM Transactions on Database Systems (TODS)}, 45(1):1--42, 2020.

\bibitem{freydenberger2018document}
Dominik Freydenberger and Mario Holldack.
\newblock Document spanners: From expressive power to decision problems.
\newblock {\em Theory of Computing Systems}, 62:854--898, 2018.

\bibitem{Freydenberger2019}
Dominik~D. Freydenberger.
\newblock A logic for document spanners.
\newblock {\em Theory Comput. Syst.}, 63(7):1679--1754, 2019.

\bibitem{FreydenbergerEtAl2018}
Dominik~D. Freydenberger, Benny Kimelfeld, and Liat Peterfreund.
\newblock Joining extractions of regular expressions.
\newblock In {\em Proceedings of the 37th {ACM} {SIGMOD-SIGACT-SIGAI} Symposium
  on Principles of Database Systems, Houston, TX, USA, June 10-15, 2018}, pages
  137--149, 2018.

\bibitem{FreydenbergerThompson2020}
Dominik~D. Freydenberger and Sam~M. Thompson.
\newblock Dynamic complexity of document spanners.
\newblock In {\em 23rd International Conference on Database Theory, {ICDT}
  2020, March 30-April 2, 2020, Copenhagen, Denmark}, pages 11:1--11:21, 2020.

\bibitem{FreydenbergerThompson2022}
Dominik~D. Freydenberger and Sam~M. Thompson.
\newblock Splitting spanner atoms: {A} tool for acyclic core spanners.
\newblock In {\em Proc. {ICDT} 2022}, pages 10:1--10:18, 2022.

\bibitem{FASTUS}
Jerry~R. Hobbs, Douglas~E. Appelt, John Bear, David~J. Israel, Megumi Kameyama,
  Mark~E. Stickel, and Mabry Tyson.
\newblock {FASTUS:} {A} cascaded finite-state transducer for extracting
  information from natural-language text.
\newblock {\em CoRR}, cmp-lg/9705013, 1997.

\bibitem{karttunen1997replace}
Lauri Karttunen.
\newblock The replace operator.
\newblock {\em Finite-State Language Processing}, pages 117--147, 1997.

\bibitem{maturana2018document}
Francisco Maturana, Cristian Riveros, and Domagoj Vrgoc.
\newblock Document spanners for extracting incomplete information:
  Expressiveness and complexity.
\newblock In {\em PODS}, pages 125--136, 2018.

\bibitem{MunozR22}
Martin Mu{\~{n}}oz and Cristian Riveros.
\newblock Streaming enumeration on nested documents.
\newblock In {\em ICDT}, volume 220 of {\em LIPIcs}, pages 19:1--19:18, 2022.

\bibitem{MunozR23}
Martin Mu{\~{n}}oz and Cristian Riveros.
\newblock Constant-delay enumeration for slp-compressed documents.
\newblock In {\em ICDT}, volume 255, pages 7:1--7:17, 2023.

\bibitem{MuschollP19}
Anca Muscholl and Gabriele Puppis.
\newblock The many facets of string transducers (invited talk).
\newblock In Rolf Niedermeier and Christophe Paul, editors, {\em STACS}, volume
  126 of {\em LIPIcs}, pages 2:1--2:21, 2019.

\bibitem{Peterfreund21}
Liat Peterfreund.
\newblock Grammars for document spanners.
\newblock In Ke~Yi and Zhewei Wei, editors, {\em ICDT}, volume 186 of {\em
  LIPIcs}, pages 7:1--7:18, 2021.

\bibitem{PeterfreundEtAl2019}
Liat Peterfreund, Dominik~D. Freydenberger, Benny Kimelfeld, and Markus
  Kr{\"{o}}ll.
\newblock Complexity bounds for relational algebra over document spanners.
\newblock In {\em Proceedings of the 38th {ACM} {SIGMOD-SIGACT-SIGAI} Symposium
  on Principles of Database Systems, {PODS} 2019, Amsterdam, The Netherlands,
  June 30 - July 5, 2019.}, pages 320--334, 2019.

\bibitem{PeterfreundCFK19}
Liat Peterfreund, Balder ten Cate, Ronald Fagin, and Benny Kimelfeld.
\newblock Recursive programs for document spanners.
\newblock In {\em 22nd International Conference on Database Theory, {ICDT}
  2019, March 26-28, 2019, Lisbon, Portugal}, pages 13:1--13:18, 2019.

\bibitem{RiverosJV23}
Cristian Riveros, Nicol{\'{a}}s Van~Sint Jan, and Domagoj Vrgoc.
\newblock Rematch: a novel regex engine for finding all matches.
\newblock {\em VLDB}, 16(11):2792--2804, 2023.

\bibitem{SchmidSchweikardt_ICDT21}
Markus~L. Schmid and Nicole Schweikardt.
\newblock A purely regular approach to non-regular core spanners.
\newblock In Ke~Yi and Zhewei Wei, editors, {\em 24th International Conference
  on Database Theory, {ICDT} 2021, March 23-26, 2021, Nicosia, Cyprus}, volume
  186 of {\em LIPIcs}, pages 4:1--4:19. Schloss Dagstuhl - Leibniz-Zentrum
  f{\"{u}}r Informatik, 2021.

\bibitem{schmid2021spanner}
Markus~L. Schmid and Nicole Schweikardt.
\newblock Spanner evaluation over slp-compressed documents.
\newblock In {\em PODS'21: Proceedings of the 40th {ACM} {SIGMOD-SIGACT-SIGAI}
  Symposium on Principles of Database Systems, Virtual Event, China, June
  20-25, 2021}, pages 153--165, 2021.

\bibitem{SchmidSchweikardt_Gems}
Markus~L. Schmid and Nicole Schweikardt.
\newblock Document spanners - {A} brief overview of concepts, results, and
  recent developments.
\newblock In Leonid Libkin and Pablo Barcel{\'{o}}, editors, {\em {PODS} '22:
  International Conference on Management of Data, Philadelphia, PA, USA, June
  12 - 17, 2022}, pages 139--150. {ACM}, 2022.

\bibitem{SchmidSchweikardtPODS2022}
Markus~L. Schmid and Nicole Schweikardt.
\newblock Query evaluation over slp-represented document databases with complex
  document editing.
\newblock In {\em {PODS} '22: International Conference on Management of Data,
  Philadelphia, PA, USA, June 12 - 17, 2022}, pages 79--89, 2022.

\bibitem{Segoufin13}
Luc Segoufin.
\newblock Enumerating with constant delay the answers to a query.
\newblock In {\em ICDT}, pages 10--20, 2013.

\bibitem{substitution1}
https://www.tutorialsteacher.com/regex/substitution, 2024.
\newblock Accessed on 2024-03-15.

\bibitem{substitution2}
https://learn.microsoft.com/en-us/dotnet/standard/base-types/substitutions-in-regular-expressions,
  2024.
\newblock Accessed on 2024-03-15.

\bibitem{vassiliadis2009survey}
Panos Vassiliadis.
\newblock A survey of extract--transform--load technology.
\newblock {\em International Journal of Data Warehousing and Mining (IJDWM)},
  5(3):1--27, 2009.

\end{thebibliography}

\newpage
\appendix

\section*{APPENDIX}

\section{Definitions for bags}\label{sec:defnAppendix}

Let us provide some formal details on bag semantics that will be useful in our formal proofs.

A \emph{bag} $B$ is a surjective function $B: I \rightarrow U$ where $I$ is a finite set of identifiers (or ids) and $U$ is the underlying set of the bag. Given any bag $B$, we refer to these components as $I(B)$ and $U(B)$, respectively. For example, a bag $B = \multiset{a, a, b}$ (where $a$ is repeated twice) can be represented with a surjective function $B_0 =\{0 \mapsto a, 1 \mapsto a, 2 \mapsto b\}$ where $I(B_0) = \{0, 1, 2\}$ and $U(B_0) = \{a,b\}$. In general, we will use the standard notation for bags $\multiset{a_0, \ldots, a_{n-1}}$ to denote the bag $B$ whose identifiers are $I(B) = \{0, \ldots, n{-}1\}$ and $B(i) = a_i$ for each $i \in I(B)$. 
When we refer to the \emph{elements of $B$} we mean $I(B)$, namely, considering every object of $B$ by its identity (i.e., its~identifier).

For every $a$, we define the multiplicity of $a$ in $B$ as $\mult{B}{a} = |\{i \mid B(i) = a\}|$, namely, the number of occurrences of $a$ in $B$. Furthermore, we define $|B| = |I(B)|$. We say that a bag $B$ is \emph{contained} in the bag $B'$, and write $B \subseteq B'$, if $\mult{B}{a} \leq \mult{B'}{a}$ for every~$a$. We also say that two bags $B$ and $B'$ are \emph{equal}, and write $B = B'$, if $B \subseteq B'$ and $B' \subseteq B$.
Given a set $A$, we say that \emph{$B$ is a bag of $A$} if $U(B) \subseteq A$.
Note that if $B: I \rightarrow U$ is injective, then $B$ encodes a standard set (i.e., no repetitions). Then we write $a \in B$ iff $B(i) = a$ for some $i \in I(B)$ and define the empty bag $\emptyset$ as the bag such that~$I(\emptyset) = \emptyset$.

We define the \emph{union} $B_1 \cup B_2$ of bags $B_1$ and $B_2$ such that $I(B) = I(B_1) \uplus I(B_2)$, where $\uplus$ is the disjoint union, and $B(i) = B_1(i)$ if $i\in I(B_1)$ and $B(i)= B_2(i)$, otherwise.
Note that the union is associative, up to bag equivalence.
For a function $f$ from a finite set $A$ to bags of $A$, and a bag $B$ of $A$, we write $\bigcup_{b \in B} f(b)$ to denote the bag representing $\bigcup_{i \in I(B)} f(B(i))$, namely, the \emph{generalized union} where each element of $B$ contributes to the union.

\section{Formalisms for Polyregular Functions}\label{sec:polyregAppendix}
	
As discussed in~\cite{Bojanczyk22, Bojanczyk2018}, there are many equivalent formalisms for polyregular functions. We discuss two classes of representations: pebble transducers and for-transducers. As an example polyregular function, let us consider the function $T_2$ of the example ET program $(E_2, T_2)$ discussed in the introduction, i.\,e., the function 
\begin{align*}
T_2:\ \ \  & u \: \# \open{\varsx} u_{\varsx} \close{\varsx} ; \open{\varsy} u_{\varsy} \close{\varsy} ; \open{\varsz} v_{\varsz, 1} \close{\varsz}; \open{\varsz} v_{\varsz, 2} \close{\varsz} \ldots \open{\varsz} v_{\varsz, k} \close{\varsz} \: u' \:\: \mapsto  \\
&  u_{\varsy} \Vtextvisiblespace u_{\varsx}:v_{\varsz, 1} \# u_{\varsy} \Vtextvisiblespace u_{\varsx}:v_{\varsz, 2} \# \ldots \# u_{\varsy} \Vtextvisiblespace u_{\varsx}:v_{\varsz, k} \#\,,
\end{align*}
where $u, u' \in \Sigma^*$, $u_{\varsx}, u_{\varsy}, v_{\varsz, 1}, v_{\varsz, 2}, \ldots, v_{\varsz, k} \in \widehat{\Sigma}^*$.

A for-transducer is a program of nested for-loops. Its input is some string $w$. The loops are of the form ``for $X$ in $k..l$ do'', where $X$ is a variable that iterates through $k$ to $l$. The lower bound $k$ is either $1$ ($|w|$) or the iterating variable of an earlier for-loop, and the upper bound $l$ is either $|w|$ ($1$, respectively) or the iterating variable of an earlier for-loop. 
Note that the loops can iterate either forwardly (i.e., from $1$ to $|w|$), or backwardly (i.e., from $|w|$ to $1$). 
In addition, we can use Boolean variables, if-conditionals, comparisons ``$w[X] = b$'', where $X$ is a for-loop variable and $b$ a symbol from the input alphabet, comparisons ``$X \leq Y$'' of for-loop variables $X$ and $Y$, and an instruction ``output $b$'' , which produces the output symbol $b$ as the next output symbol. 

The Algorithm~\ref{alg:fortransducer} represents a for-loop transducer for the string-to-string function $T_2$ from above (note that we use ``$w[X] \in \widehat{\Sigma}$'' as syntactic~sugar).

To see that this for-transducer is correct, observe that the outer for-loop outputs in Line $17$ all the input symbols from $\widehat{\Sigma}$ in their order given by the input string, but it starts with the first symbol of $v_{\varsz, 1}$, due to the Boolean variable $B_1$. Consequently, the outer loop produces the factors $v_{\varsz, j}$ for $j = 1, 2, \ldots, k$. However, every time a new factor $v_{\varsz, j}$ starts (indicated by the symbol $\open{\varsz}$), we output $\#$ and then use the two inner loops to first produce `$u_{\varsy} \Vtextvisiblespace$' (first inner loop) followed by `$u_{\varsx}$' (second inner loop), followed by the symbol `$:$'.

A \emph{$k$-pebble transducer} is a finite automaton with a two-way input head and $k$ pebbles that can be placed on input positions. However, the pebbles are organised in a stack: We can place pebble $1$ on the current position $i$ of the input head, which means that we push $i$ onto a stack, later in the computation, we can place another pebble on the current position $j$ of the input head, which means that we push $j$ onto the stack, and so on. Removing the pebbles can only be done by popping the positions from the stack, i.\,e., we can only remove the pebble that has last been placed. 

Depending on the current state, the current symbol, the symbols on the pebble positions, and the relative order of the pebble positions, the pebble transducer updates it states, moves the input head to the left or to the right or leaves it unchanged, places a pebble on the current head position (and pushes its position onto the stack) or pops the topmost pebble position from the stack, and outputs an output symbol (or nothing). Pebble transducers are equivalent to for-transducers and therefore describe the set of polyregular functions~\cite{engelfriet2002two,Bojanczyk22}.

A pebble transducer with $1$ pebble can realise the string-to-string function $T_2$ from above as follows. We find the first position of $\open{\varsz}$ and place a pebble on this position. Then we move back with the input head until we see $\open{\varsy}$ and we output $u_{\varsy} \Vtextvisiblespace$. Then we move back with the input head until we see $\open{\varsx}$ and we output $u_{\varsx}$. Then we move to the position of the pebble, we pop the pebble from the stack and we output $v_{\varsz, 1}$. The input head is now on the first occurrence of $\close{\varsz}$. We now check if there is a further occurrence of $\open{\varsz}$ to the right and, if not, we accept, and if yes, we place the pebble on this occurrence of $\open{\varsz}$. The pebble is now on the position of the second occurrence of $\open{\varsz}$. We can now repeat this procedure until there is no further occurrence of $\open{\varsz}$.

\begin{algorithm}[t]
	\caption{A for-transducer for the polyregular string-to-string function $T_2$}\label{alg:fortransducer}
	\begin{algorithmic}[1]
		\Procedure{for-transducer for $T_2$}{$w$}
		\State{$B_1 := \mathsf{false}$}
		\For{$Z$ \textbf{in} $1..|w|$}
		\If{$w[Z] = \open{\varsz}$}
		\State{$B_1 := \mathsf{true}$}
		\State{$B_2 := \mathsf{false}$}
		\For{$Y$ \textbf{in} $1..|w|$}
		\State{\textbf{if} $w[Y] = \open{\varsy}$ \textbf{then} $B_2 := \mathsf{true}$}
		\State{\textbf{if} $B_2$ and $w[Y] \in \widehat{\Sigma}$ \textbf{then} output $w[Y]$}
		\State{\textbf{if} $B_2$ and $w[Y] = \close{\varsy}$ \textbf{then} $B_2 := \mathsf{false}$}
		\EndFor
		\State{output `$\Vtextvisiblespace$'}
		\For{$X$ \textbf{in} $1..|w|$}
		\State{\textbf{if} $w[X] = \open{\varsx}$ \textbf{then} $B_2 := \mathsf{true}$}
		\State{\textbf{if} $B_2$ and $w[X] \in \widehat{\Sigma}$ \textbf{then} output $w[X]$}
		\State{\textbf{if} $B_2$ and $w[X] = \close{\varsx}$ \textbf{then} $B_2 := \mathsf{false}$}
		\EndFor
		\State{output `$:$'}
		\EndIf
		\State{\textbf{if} $B_1$ and $w[Z] \in \widehat{\Sigma}$ \textbf{then} output $w[Z]$}
		\State{\textbf{if} $w[Z] = \close{\varsz}$ \textbf{then} output `$\#$'}
		\EndFor	
		\EndProcedure
	\end{algorithmic}
\end{algorithm}

\section{Proofs of Section~\ref{sec:properties}}\label{sec:equiAppendix}

\subsection{Proof of Theorem~\ref{ETtoNSST}}

We first need the following definitions.

We extend the encoding $\nrefw{\cdot}$ to multiref-words and multiref-languages in a natural way. 
Let $\varset = \{\varsx_1, \varsx_2, \ldots, \varsx_m\}$ with $\varsx_1 \preceq \varsx_2 \preceq \ldots \preceq \varsx_m$. A string $u \in \Gamma_{\varset}^*$ is called a \emph{$\Gamma_{\varset}$-sequence} if, for every $\ell \in [m]$, the subsequence of the occurrences of $\open{\varsx_{\ell}}$ and of $\close{\varsx_{\ell}}$ of $u$ has the form $(\close{\varsx_{\ell}})^{o_{\ell}} (\open{\varsx_{\ell}} \close{\varsx_{\ell}})^{e_{\ell}} (\open{\varsx_{\ell}})^{c_{\ell}}$ for some $o_{\ell}, c_{\ell} \in \{0, 1\}$ and $e_{\ell} \geq 0$ (i.\,e., $u$ is the factor of a valid multiref-word). Moreover, for every, $\ell \in [m]$, we call $(\close{\varsx_{\ell}})^{o_{\ell}} (\open{\varsx_{\ell}} \close{\varsx_{\ell}})^{e_{\ell}} (\open{\varsx_{\ell}})^{c_{\ell}}$ the \emph{$\varsx_\ell$-sequence} of $u$. For every $\Gamma_{\varset}$-sequence $u$ with $\varsx_\ell$-sequence $u_\ell = (\close{\varsx_{\ell}})^{o_{\ell}} (\open{\varsx_{\ell}} \close{\varsx_{\ell}})^{e_{\ell}} (\open{\varsx_{\ell}})^{c_{\ell}}$ for every $\ell \in [m]$, we define $\nrefw{u_\ell} = (\close{\varsx_{\ell}})^{o_{\ell}} (\open{\varsx_{\ell}} \close{\varsx_{\ell}})^{e'_{\ell}} (\open{\varsx_{\ell}})^{c_{\ell}}$ where $e'_\ell = 0$ if $e_\ell = 0$ and $e'_\ell = 1$ otherwise. And we define $\nrefw{u} = \nrefw{u_1} \nrefw{u_2} \ldots \nrefw{u_m}$. Any $\Gamma_{\varset}^*$-sequence $u$ is called \emph{normalised}, if $u = \nrefw{u}$. For any multiref-word $w = u_1 b_1 u_2 b_2 \ldots u_n b_n u_{n+1}$ with $b_i \in \Sigma$ for every $i \in [n]$ and $u_j \in \Gamma_{\varset}^*$ for every $j \in [n + 1]$, we set $\nrefw{w} = \nrefw{u_1} b_1 \nrefw{u_2} b_2 \ldots \nrefw{u_n} b_n \nrefw{u_{n+1}}$. It can be easily verified that, for every multi-refword $w$, we have that $\nrefw{w} = \nrefw{\getSpanTuple{w}, \getDoc{w}}$. Finally, for any multiref-language $L$, we set $\nrefw{L} = \{\nrefw{w} \mid w \in L\}$. 

The next lemma will be crucial for the proof.

\begin{lemma}\label{constructNormDFALemma}
Given an NFA $M$ with $n$ states such that $\lang{M}$ is a multiref-language over $\Sigma$ and $\varset$ with $|\varset| = m$, and every word of $\lang{M}$ ends with a symbol of $\Sigma$. Then we can construct a DFA $N$ with $\bigo(2^{2n+3m} |\Sigma| m)$ states and $\lang{N} = \nrefw{\lang{M}}$ in time $\bigo(2^{2n + 3m} |\Sigma| m)$.
\end{lemma}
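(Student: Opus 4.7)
The approach is a refined subset-like construction that simultaneously tracks a set of possible states of $M$ and the prefix of the normalized $\Gamma_{\varset}$-factor currently being parsed. The key structural fact is that a normalized $\Gamma_{\varset}$-sequence is determined by $3m$ bits $(o_\ell, e'_\ell, c_\ell)_{\ell \in [m]} \in \{0,1\}^{3m}$: for each variable $\varsx_\ell$ its contribution is $(\close{\varsx_\ell})^{o_\ell} (\open{\varsx_\ell}\close{\varsx_\ell})^{e'_\ell} (\open{\varsx_\ell})^{c_\ell}$. So there are $2^{3m}$ normalized sequences, each of length at most $3m$, and the set of their string prefixes has size $\bigo(m \cdot 2^{3m})$.

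First I would precompute, for every complete normalized $\Gamma_{\varset}$-sequence $u'$, a relation $\rho_{u'} \subseteq Q \times Q$ with $(q,q') \in \rho_{u'}$ iff $M$ admits a $u$-labelled path from $q$ to $q'$ for some $u \in \Gamma_{\varset}^*$ satisfying $\nrefw{u} = u'$. Any such $u$ is an arbitrary shuffle of the per-variable sequences $(\close{\varsx_\ell})^{o_\ell} (\open{\varsx_\ell}\close{\varsx_\ell})^{e_\ell} (\open{\varsx_\ell})^{c_\ell}$, where $e_\ell \in \bbN$ whenever $e'_\ell = 1$ and $e_\ell = 0$ otherwise; hence $\rho_{u'}$ is obtainable from a product of $M$ with a small shuffle automaton, together with closure under each loop $(\open{\varsx_\ell}\close{\varsx_\ell})^+$.

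The DFA $N$ then has states $(S, p)$ with $S \subseteq Q$ and $p$ a string prefix of some normalized sequence. The initial state is $(\{q_0\}, \eword)$ and $(S, \eword)$ is final iff $S$ meets the accepting set of $M$. On a letter $c \in \Gamma_{\varset}$, the transition goes to $(S, p \cdot c)$ when $p \cdot c$ remains a valid prefix of some normalized sequence, and to a sink otherwise. On a letter $b \in \Sigma$, the current $p$ is treated as the completed normalized factor, and $N$ moves to $(S'', \eword)$ where $S' = \{q' \mid \exists q \in S,\ (q,q') \in \rho_p\}$ and $S''$ is obtained by applying the $b$-transitions of $M$ to $S'$.

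Correctness then follows by a straightforward induction on the number of $\Sigma$-letters consumed: after $N$ has read $u'_1 b_1 \ldots u'_k b_k u'_{k+1}$ it is in state $(S_k, u'_{k+1})$ where $S_k$ is precisely the set of $M$-states reachable from $q_0$ on some multiref-word $w$ with $\nrefw{w} = u'_1 b_1 \ldots u'_k b_k$; the hypothesis that every word of $\lang{M}$ ends in a $\Sigma$-symbol ensures that acceptance is tested exactly when $u'_{k+1} = \eword$. The state count is bounded by $2^n \cdot \bigo(m \cdot 2^{3m})$, which sits comfortably within $\bigo(2^{2n+3m}|\Sigma| m)$, and the transition table fits the same bound by multiplying with $|\Sigma| + 2m$. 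The main obstacle is the running time for the $\rho_{u'}$-precomputation: computing a separate shuffle product for each of the $2^{3m}$ templates would blow the time budget, so the work must be shared across templates, for instance by precomputing once the per-variable reachability closure of $M$ under $(\open{\varsx_\ell}\close{\varsx_\ell})^*$ and then assembling $\rho_{u'}$ from these closures driven by the $3m$ template bits; this brings the total construction time within the claimed $\bigo(2^{2n+3m}|\Sigma| m)$.
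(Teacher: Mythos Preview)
Your overall strategy is sound and genuinely different from the paper's. The paper first determinizes $M$ to a DFA $M_1$ ($2^n$ states), then forms $M_2 = M_1 \times (\text{type-tracker})$ with $\bigo(2^{n+3m})$ states, computes $\Gamma_{\varset}$-reachability there by BFS from each $(p,\eword)$, and finally replaces each reachable block by an explicit $ub$-labelled path and determinizes those paths locally by merging common prefixes. You instead keep $M$ nondeterministic, precompute the reachability relations $\rho_{u'}\subseteq Q\times Q$, and fold the determinization into a single subset construction with states $(S,p)$. Your construction is cleaner and in fact yields fewer states ($\bigo(2^{n+3m})$, since every prefix of a normalised $\Gamma_{\varset}$-sequence is itself a normalised sequence), well inside the stated bound.

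There is, however, a real gap in your computation of the relations $\rho_{u'}$. Your final suggestion --- precompute the per-variable closures of $M$ under $(\open{\varsx_\ell}\close{\varsx_\ell})^*$ and ``assemble $\rho_{u'}$ from these closures driven by the $3m$ template bits'' --- does not work, because the preimage $\{u : \nrefw{u}=u'\}$ consists of arbitrary \emph{shuffles} of the per-variable sequences, not concatenations. Concretely, for $u'=\open{\varsx_1}\close{\varsx_1}\open{\varsx_2}\close{\varsx_2}$ the word $\open{\varsx_1}\open{\varsx_2}\close{\varsx_2}\close{\varsx_1}$ is a legitimate preimage, yet a path in $M$ labelled by it cannot be recovered from the separate $\varsx_1$- and $\varsx_2$-closures of $M$: the $\varsx_1$-closure only follows $\{\open{\varsx_1},\close{\varsx_1}\}$-edges and misses any state that can only be entered via a $\varsx_2$-edge sandwiched inside. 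No fixed composition order of per-variable relations captures all interleavings.

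The fix is already implicit in your own first remark: take the product of $M$ with a \emph{single} global type-tracking automaton once, not one shuffle automaton per template. The type tracker has $8^m$ states (for each $\ell$ it records which of the eight patterns the $\varsx_\ell$-subsequence seen so far matches), and a BFS from each of the $n$ start states $(q,\text{init})$ in the product yields all $\rho_{u'}$'s simultaneously in time $\bigo(n\cdot |\Delta_M|_{\Gamma}\cdot 8^m)\leq \bigo(n^3 m\,2^{3m})$, comfortably inside $\bigo(2^{2n+3m}|\Sigma|m)$ since $n^3\leq 2^{2n}$. This is essentially the same product the paper forms (their $M_2$ is $M_1\times(\text{type tracker})$), but you place it before determinization rather than after; that reordering is what buys you the smaller automaton.
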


\begin{proof}
We first transform $M$ into an equivalent DFA $M_1$ with $\bigo(2^n)$ states by standard constructions. Without loss of generality, we assume that every state is reachable from the initial state, and every state can reach a final state. In particular, this means that all strings over $\Gamma_{\varset}^*$ that can be read between two states must be $\Gamma_{\varset}$-sequences.

Next, we transform $M_1$ into a DFA $M_2$ that simulates $M_1$, but whenever it has read some input $u \in \Gamma_{\varset}^*$ or some input with a suffix $b u \in \Sigma \Gamma_{\varset}^* $, then the current state stores $\nrefw{u}$, which we call the current $\Gamma_{\varset}$-sequence. This can be done by simply updating the current $\Gamma_{\varset}$-sequence after every transition of a run. More precisely, let us assume that the current $\Gamma_{\varset}$-sequence is $u_1 u_2 \ldots u_m$ such that, for every $\ell \in [m]$, $u_\ell = (\close{\varsx_\ell})^{o_\ell} (\open{\varsx_\ell} \close{\varsx_\ell})^{e_\ell} (\open{\varsx_\ell})^{c_\ell}$ for some $o_\ell, e_\ell, c_\ell \in \{0, 1\}$. If we read some symbol $b \in \Sigma$, then we update the current $\Gamma_{\varset}$-sequence to $\eword$. If we read a symbol $\gamma \in \{\open{\varsx_{\ell}}, \close{\varsx_{\ell}}\}$ for some $\ell \in [m]$, then we update the current $\Gamma_{\varset}$-sequence to $u'_1 u'_2 \ldots u'_m$, where $u'_j = u_j$ for every $j \in [m] \setminus \{\ell\}$, and $u'_{\ell} = \nrefw{u_\ell \gamma}$. Formally, we interpret the states of $M_2$ as pairs $(p, u)$, where $p$ is a state of $M_1$ and $u$ is some normalised $\Gamma_{\varset}$-sequence. Moreover, $(q_0, \eword)$ is the initial state, where $q_0$ is $M_1$'s initial state, and every state $(q_f, u)$ is a final state if $q_f$ is final in $M_1$. We observe that $M_2$ has $\bigo(2^n 8^{m}) = \bigo(2^{n + 3m})$ states, and $M_2$ can be constructed in time $\bigo(2^{n + 3m} (|\Sigma| + m))$. Also note that $|M_2| = \bigo(2^{n + 3m} (|\Sigma| + m))$. Obviously, $M_2$ is still deterministic and accepts the same language $\lang{M}$. 

Then, we construct a directed and arc-labelled graph $G$ whose nodes are the states from $M_1$. For each two states $(p_1, \eword)$ and $(p_2, u)$ of $M_2$, we compute whether in $M_2$ there is some path from $(p_1, \eword)$ to $(p_2, u)$ labelled with some sequence over $\Gamma_{\varset}$, or $p_1 = p_2$ and $u = \eword$. This can be done by removing all $\Sigma$-transitions from $M_2$ and then perform a breadth-first-search in each state $(p, \eword)$. Hence, time $\bigo(2^n |M_2|) = \bigo(2^{2n + 3m} (|\Sigma| + m))$ is sufficient. For all states $(p_1, \eword)$ and $(p_2, u)$ of $M_2$ such that the latter is $\Gamma_{\varset}$-reachable from the former, and every transition $((p_2, u), b, (p_3, \eword))$ of $M_2$ for some $b \in \Sigma$, we add to $G$ a length-$(|u b|)$ path of new nodes (we call these new nodes \emph{path states}) from $p_1$ to $p_3$ such that this path is labelled with $ub$ (note that this needs $|ub|-1 = \bigo(m)$ new path states). For this we have to consider each pair $(p_1, p_2)$ of states of $M_1$, each normalised $\Gamma_{\varset}$-sequence $u$, and every outgoing $\Sigma$-arc from $(p_2, u)$ in $M_2$ (which is deterministic), and for each of those we have to add at $\bigo(m)$ new path states. Hence, this can be done in time $\bigo(2^{2n} 2^{3m} |\Sigma| m) = \bigo(2^{2n + 3m} |\Sigma| m)$. Furthermore, the total number of nodes is also $\bigo(2^{2n+3m} |\Sigma| m)$.

We can now interpret $G$ as an NFA $M_3$, and it can be easily seen that this NFA accepts $\nrefw{\lang{M}}$. As observed above, $M_3$ has $\bigo(2^{2n+3m} |\Sigma| m)$ states.

It remains to turn $M_3$ into a DFA $N$. To this end, we first observe that it is possible that a non-path state $p$ has two different paths to some non-path state $q$ that are both labelled with the same string $u b$. This can happen, because in $M_2$ it is possible that we can go from $(p, \eword)$ to a state $(p', u)$, but with different permutations $u_1$ and $u_2$ of $u$, and there is a transition $((p', u), b, (q, \eword))$. We simply delete such duplicates. Now, for every non-path state $p$ that has an outgoing path labelled with $u b$ and a different outgoing path labelled with a different $u' b'$, neither $u b$ is a proper prefix of $u' b'$ nor $u' b'$ is a proper prefix of $u b$ (this is due to the fact that $u, u' \in \Gamma_{\varset}^*$ and $b, b' \in \Sigma$). Consequently, for every fixed non-path state, we can merge all its outgoing paths into a tree according to their common prefixes and know that these paths will all necessarily branch off before they reach their target state. The result is a determinised version of  the NFA, which we call $N$. This can be done in time $\bigo(2^{2n+3m} |\Sigma| m)$. We note that we still have $\lang{N} = \nrefw{\lang{M}}$ and $N$ has $\bigo(2^{2n+3m} |\Sigma| m)$ states.
\end{proof}

We are now ready to prove Theorem~\ref{ETtoNSST}, but we actually prove the following slightly stronger result (which allows $E$ to be a regular multispanner instead of only a regex multispanner). Note that the statement of Theorem~\ref{ETtoNSST} follows from the fact that any regular expression $r$ can be transformed into an equivalent NFA of size $\bigo(|r|)$.

\begin{theorem}
Given a regular multispanner $E$ over $\Sigma$ (represented by an NFA with $n$ states), and a linear polyregular string-to-string function $T$ with input alphabet $\Sigma \cup \Gamma_{\varset}$ (represented by a DSST with $h$ states), then we can construct an NSST $\sst$ with $\sem{\sst} = \sem{E \cdot T}$ in time $\bigo(2^{4n+9|\varset|} |\Sigma|^3 |\varset|^2 h^2)$.
\end{theorem}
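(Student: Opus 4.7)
The plan is to implement the strategy sketched after the statement of Theorem~\ref{ETtoNSST}: apply Lemma~\ref{constructNormDFALemma} (with a minor adjustment to accommodate multiref-words ending in $\Gamma_\varset$, e.\,g.\ by appending a fresh end-marker $\$ \notin \Sigma$ via $\bigo(1)$ extra states) to the NFA for $E$, in order to obtain a DFA $N$ of $\bigo(2^{2n+3m}|\Sigma|\,m)$ states accepting the normalised encodings $\nrefw{\lang{E}}$, where $m = |\varset|$. Then I would build the NSST $\sst$ as a product of $N$ with the given DSST $D$ for $T$, whose only nondeterminism is the choice, between two consecutive symbols of the input $\doc \in \Sigma^*$, of a normalised $\Gamma_\varset$-sequence to be processed by $N$ and $D$ jointly with the next $\Sigma$-symbol.

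The state space of $\sst$ is $Q_N \times Q_D$, and it shares the registers and output alphabet of $D$. Transitions come in two flavours. A \emph{middle} transition, for every state $(q_N, q_D)$, letter $b \in \Sigma$, and normalised $\Gamma_\varset$-sequence $u$ (of which there are at most $2^{3m}$), has the form $(q_N, q_D) \xrightarrow{b / \sigma} (q_N', q_D')$, where $(q_N', q_D')$ is the unique state pair reached by $N$ and $D$ from $(q_N, q_D)$ on $u \cdot b$, and $\sigma$ is the composition of the $D$-assignments along this path. A \emph{closing} transition additionally guesses a normalised trailing $\Gamma_\varset$-sequence $v$: whenever $v$ drives $N$ from $q_N'$ to an accepting state and $D$ from $q_D'$ to a state $\bar{q}_D$ with $F_D(\bar{q}_D)$ defined, $\sst$ may take the parallel transition $(q_N, q_D) \xrightarrow{b / \sigma \circ \sigma_v} s_{v,\bar{q}_D}$ into a fresh dead-end state with $F(s_{v,\bar{q}_D}) = F_D(\bar{q}_D)$. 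Since dead-end states have no outgoing transitions, any accepting run of $\sst$ must fire exactly one closing transition, and this must be the transition reading the last letter of $\doc$; hence the guess of $v$ faithfully represents the final $\Gamma_\varset$-sequence of the multiref-word, and an analogous construction based on several initial states handles the edge case $\doc = \eword$.

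Correctness reduces to exhibiting a multiplicity-preserving bijection between the accepting runs of $\sst$ on $\doc$ and the pairs $(t, \rho)$ with $t \in E(\doc)$ and $\rho$ the unique accepting run of $D$ on $\nrefw{t, \doc}$. Determinism of $N$ guarantees a unique accepting $N$-path on $\nrefw{t, \doc}$; determinism of $D$ pins down $\rho$; and conversely, any accepting run of $\sst$ reconstructs a unique normalised multiref-word by concatenating its guessed $\Gamma_\varset$-sequences with the symbols of $\doc$. Since the output attached to such an $\sst$-run is exactly $T(\nrefw{t, \doc})$, this gives $\sem{\sst} = \sem{E \cdot T}$ as bags. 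For the complexity, $|Q_\sst| = \bigo(|Q_N| \cdot h \cdot 2^{3m})$ (counting sink states), the number of transitions carries further factors of $|\Sigma|$ and $2^{3m}$ for the two guesses, and each transition costs $\bigo(m)$ for composing the $D$-assignments; combining this with the construction cost of the lemma yields the stated bound $\bigo(2^{4n + 9m}|\Sigma|^3 m^2 h^2)$ after a straightforward (and slightly slack) count.

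The main obstacle, and the reason one must pass through the \emph{deterministic} $N$ rather than an NFA, is the preservation of bag semantics. A single multispan-tuple $t$ corresponds to exponentially many \emph{unnormalised} multiref-words (obtained by reordering the variable brackets and padding with empty brackets within each $\Gamma_\varset$-factor), and an NFA for $\lang{E}$ could moreover accept each of these along several paths; all such redundancies would translate into spurious additional accepting runs of $\sst$ and would inflate the output bag. The normalisation provided by Lemma~\ref{constructNormDFALemma} is the precise technical device that collapses each $t \in E(\doc)$ to a unique accepted word with a unique $N$-run, and it is exactly this collapse that forces the double-exponential blow-up in $n$ and $m$ in the size of $\sst$.
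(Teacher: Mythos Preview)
Your proposal is correct and follows essentially the same approach as the paper: construct the DFA $N$ for $\nrefw{\lang{E}}$ via Lemma~\ref{constructNormDFALemma}, then build $\sst$ as a product of $N$ with the DSST whose nondeterminism consists precisely in guessing a normalised $\Gamma_\varset$-sequence between consecutive $\Sigma$-symbols, with determinism of both $N$ and the DSST ensuring the bijection that preserves bag multiplicities. The only notable difference is in how you handle the trailing $\Gamma_\varset$-sequence after the last $\Sigma$-symbol: the paper simply appends an end-marker to both the NFA and the DSST so that every simulated multiref-word ends in $\Sigma$ and no special final treatment is needed, whereas you keep the end-marker internal to the lemma and instead add explicit ``closing'' transitions into fresh dead-end accepting states (plus dedicated initial states for the case $\doc=\eword$). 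Both devices are sound; the paper's is slightly more uniform (no case distinction, no extra sink states), while yours avoids touching the DSST.
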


\begin{proof}
Let us assume that $\varset = \{\varsx_1, \varsx_2, \ldots, \varsx_m\}$ with $m \geq 1$ and $\varsx_1 \preceq \varsx_2 \preceq \ldots \preceq \varsx_m$. Let:
$$
\widetilde{\sst} = (Q_{\widetilde{\sst}}, \Sigma, \Omega,  \Reg, \Delta_{\widetilde{\sst}}, I_{\widetilde{\sst}}, F_{\widetilde{\sst}})
$$ 
be the DSST that represents $T$. Let $N$ be the NFA that describes $E$. We assume that all multiref-words from $\lang{N}$ end with a symbol from $\Sigma$. This is without loss of generality: We extend $\Sigma$ by a dummy symbol $\#$ and change $N$ such that from every accepting state we can go with $\#$ to a new state, and we declare this new state to be the only accepting state. Then we change $\widetilde{\sst}$ such that it simply interprets every input $w \#$ as the input $w$. \par

By Lemma~\ref{constructNormDFALemma}, we can construct a DFA $M = (Q_{M}, \Sigma \cup \Gamma_{\varset}, q_{0, M}, \delta_M, F_M)$ with $\bigo(2^{2n+3m} |\Sigma| m)$ states in time $\bigo(2^{2n+3m} |\Sigma| m)$ that accepts $\nrefw{\lang{N}}$. Without loss of generality, we assume that $M$ is such that every state can be reached by the start state and every state can reach some accepting state (note that this means that $M$ may be incomplete).

Since $M$ accepts $\nrefw{\lang{N}}$, we know that if $M$ reads an input $u \in \Gamma_{\varset}^*$ or an input with a suffix $b u$ with $b \in \Sigma$ and $u \in \Gamma_{\varset}^*$, then $u$ is a normalised $\Gamma_{\varset}$-sequence.

We will now define an NSST $\sst = (Q_{\sst}, \Sigma, \Omega, \Reg, \Delta_{\sst}, I_{\sst}, F_{\sst})$ by combining $M$ and $\widetilde{\sst}$. 

Let us fix some notation for NSSTs with some input alphabet $A$ (and therefore also for DSSTs). Obviously, we can interpret an NSST as a directed graph whose arcs are labelled with pairs $(b, \sigma)$, where $b \in A$ and $\sigma$ is an assignment. A \emph{path} from a state $q$ to a state $q'$ in the NSST is then any sequence 
\begin{equation*}
p_0, (b_1, \sigma_1), p_1, (b_2, \sigma_2), p_2, \ldots, (b_k, \sigma_k), p_k\,, 
\end{equation*}
where $p_i$ is a state for every $i \in \{0\} \cup [k]$, $(b_j, \sigma_j)$ is the label of an arc from $p_{j-1}$ to $p_j$ for every $j \in [k]$, and $p_0 = q$ and $p_k = q'$. Furthermore, $b_1 b_2 \ldots b_k$ is called the \emph{label} of the path, and $\sigma_1 \circ \sigma_2 \circ \ldots \circ \sigma_k$ is called the \emph{composed assignment} of the path. We use analogous notions for NFAs and DFAs.

We define $Q_{\sst} = Q_M \times Q_{\widetilde{\sst}}$, we set $I_{\sst}((q_{0, M}, p)) = I_{\widetilde{\sst}}(p)$ for every $p \in Q_{\widetilde{\sst}}$ with $I_{\widetilde{\sst}}(p)$ defined (otherwise $I_{\sst}((q_{0, M}, p))$ is undefined), we set $F_{\sst}((q_f, p)) = F_{\widetilde{\sst}}(p)$ for every $q_f \in F_M$ and for every $p \in Q_{\widetilde{\sst}}$ with $F_{\widetilde{\sst}}(p)$ defined (otherwise $F_{\sst}((q_f, p))$ is undefined). It remains to define $\Delta_{\sst}$, which we shall do next.

For every $p, p' \in Q_M$, every $q, q' \in Q_{\widetilde{\sst}}$, every $b \in \Sigma$ and every normalised $\Gamma_{\varset}$-sequence $u \in (\Gamma_{\varset})^*$ such that
\begin{itemize}
\item in $M$ there is a path from $p$ to $p'$ with label $u b$, and
\item in $\widetilde{\sst}$ there is a path from $q$ to $q'$ with label $u b$ and some composed assignment $\sigma$, 
\end{itemize}
we add the transition $((p, q), b, \sigma, (p', q'))$ to $\Delta_{\sst}$. This concludes the definition of $\sst$. 

In order to effectively construct $\Delta_{\sst}$, we have to consider all states $p, p' \in Q_N$, $q, q' \in Q_{\widetilde{\sst}}$, all $b \in \Sigma$ and all normalised $\Gamma_{\varset}$-sequences $u \in (\Gamma_{\varset})^*$. For each such tuple $(p, p', q, q', b, u)$, the properties from above can be checked (and $\sigma$ computed) in time $\bigo(|u b|) = \bigo(|\varset|) = \bigo(m)$, since in each of $M$ and $\widetilde{\sst}$ there is at most one path to consider (recall that both $M$ and $\widetilde{\sst}$ are deterministic). Since there are $8^m$ different normalised $\Gamma_{\varset}$-sequences, we can construct $\Delta_{\sst}$ in time $\bigo(2^{4n+9m} |\Sigma|^3 m^2 |Q_{\widetilde{\sst}}|^2)$.

Let $\doc \in \Sigma^*$ be a document with $|\doc| = k$. We have to prove that $\sem{\sst}(\doc) = \sem{E \cdot T}(\doc)$. We first need the following definition. A \emph{condensed run} of $\widetilde{\sst}$ on some input $u_1 b_1 u_2 b_2 \ldots u_k b_k$ with $u_i \in \Gamma_{\varset}^*$ and $b_i \in \Sigma$ for every $i \in [k]$ is a sequence
\begin{equation*}
(q_0, \nu_0) \xrightarrow{u_1 b_1/\sigma_1} (q_1, \nu_1) \xrightarrow{u_2 b_2 /\sigma_2} \ldots \xrightarrow{u_k b_k/\sigma_{k}} (q_k, \nu_k)
\end{equation*}
where $q_0, q_1, \ldots, q_k$ are states with $I_{\widetilde{\sst}}(q_0)$ and $F_{\widetilde{\sst}}(q_k)$ defined, and, for every $i \in [k]$, there is a run
\begin{equation*}
(q'_{i, 0}, \nu'_{i, 0}) \xrightarrow{c_{i, 1}/\sigma'_{i, 1}} (q'_{i, 1}, \nu'_{i, 1}) \xrightarrow{c_{i, 2} /\sigma'_{i, 2}} \ldots \xrightarrow{c_{i, k_i}/\sigma'_{i, k_i}} (q'_{i, k_{i}}, \nu'_{i, k_{i}})
\end{equation*}
with $(q'_{i, 0}, \nu'_{i, 0}) = (q_{i-1}, \nu_{i-1})$, $(q'_{i, k_i}, \nu'_{i, k_i}) = (q_i, \nu_i)$, $c_{i, 1} c_{i, 2} \ldots c_{i, k_i} = u_i b_i$, and $\sigma'_{i, 1} \circ \sigma'_{i, 2} \circ \ldots \circ \sigma'_{i, k_i} = \sigma_i$.

We start with showing $\sem{\sst}(\doc) \subseteq \sem{E \cdot T}(\doc)$. Let $\doc' \in \sem{\sst}(\doc)$ and let 
\begin{equation*}
\rho \ := \ ((p_0, q_0), \nu_0) \xrightarrow{\doc[1]/\sigma_1} 
\ldots \xrightarrow{\doc[n]/\sigma_{n}} ((p_k, q_k), \nu_k)
\end{equation*}
be an accepting run of $\sst$ with $\doc' = \out(\rho)$. This also means that $p_0 = q_{M, 0}$, $p_k \in F_M$, $I_{\widetilde{\sst}}(q_0) \neq \bot$ and $F_{\widetilde{\sst}}(q_k) \neq \bot$. 

For every $i \in [m]$, let $u_i \in \Gamma_{\varset}$ be the normalised $\Gamma_{\varset}$-sequence that is responsible for the transition $((p_{i-1}, q_{i-1}), \doc[i], \sigma_i, (p_i, q_i))$, i.\,e., $u_i$ is such that in $M$ there is a path from $p_{i-1}$ to $p_i$ with label $u_i \doc[i]$, and in $\widetilde{\sst}$ there is a path from $q_{i-1}$ to $q_i$ with label $u_i \doc[i]$ and with composed assignment $\sigma_i$. Since $M$ and $\widetilde{\sst}$ are deterministic, for every $i \in [m]$ there is exactly one normalised $\Gamma_{\varset}$-sequence $u_i$ with these properties. Hence, the $u_1, u_2, \ldots, u_m$ are uniquely defined. 

Then, $w = u_1 \doc[1] u_2 \doc[2] \ldots u_k \doc[k]$ is accepted by $M$, and there is a condensed accepting~run
\begin{equation*}
\rho' \ := \ (q_0, \nu_0) \xrightarrow{u_1 \doc[1]/\sigma_1} \ldots \xrightarrow{u_n \doc[n]/\sigma_{n}} (q_k, \nu_n)
\end{equation*}
of $\widetilde{\sst}$ and $\out(\rho') = \out(\rho) = \doc'$. This implies that $\getSpanTuple{w} \in E(\doc)$ and $T(w) =  T(\nrefw{\doc, \getSpanTuple{w}}) = \doc'$; thus, $\doc' \in \sem{E \cdot T}$.

Moreover, each individual accepting run of $\sst$ on input $\doc$ with output $\doc'$ yields a distinct tuple $(u_1, u_2, \ldots, u_m)$ of normalised $\Gamma_{\varset}$-sequences, which means that there is a distinct $t \in E(\doc)$ with $T(\nrefw{\doc, t}) = \doc'$. This is due to the fact that two different accepting runs of $\sst$ on input $\doc$ with output $\doc'$ can only differ in their states, which, since $M$ and $\widetilde{\sst}$ are deterministic, means that for at leat one $i \in [k]$, the two transitions of the different runs that read symbol $\doc[i]$ must be due to different normalised $\Gamma_{\varset}$-sequences. 

Hence, for every $\doc' \in \Omega^*$, we have that $\mult{\sem{\sst}(\doc)}{\doc'} \leq \mult{\sem{E \cdot T}(\doc)}{\doc'}$. This means that $\sem{\sst}(\doc) \subseteq \sem{E \cdot T}(\doc)$.

Now let us prove $\sem{E \cdot T}(\doc) \subseteq \sem{\sst}(\doc)$. To this end, let $\doc' \in \sem{E \cdot T}(\doc)$, which means that there is a $t \in E(\doc)$ and $\doc' = T(\nrefw{\doc, t})$. Let $w = u_1 \doc[1] u_2 \doc[2] \ldots u_k \doc[k] = \nrefw{\doc, t}$. Since $M$ accepts $\nrefw{\lang{N}}$, we know that $w \in \lang{M}$. Let $p_0 = q_{0, M}$ and, for every $i \in [k]$, let $p_{i}$ be the state of $M$'s accepting run after reading $u_1 \doc[1] \ldots u_i \doc[i]$. Furthermore, there must be a condensed accepting run 
\begin{equation*}
\rho \ := \ (q_0, \nu_0) \xrightarrow{u_1 \doc[1]/\sigma_1} \ldots \xrightarrow{u_n \doc[n]/\sigma_{n}} (q_k, \nu_n)
\end{equation*}
of $\widetilde{\sst}$ and $\doc'$. Note that this means that $I_{\widetilde{\sst}}(q_0)$ and $F_{\widetilde{\sst}(q_k)}$ are defined, and that $\out(\rho) = \doc'$. 

By definition, this means that $\sst$ has a transition $((p_{i-1}, q_{i-1}), \doc[i], \sigma_i, (p_i, q_i))$ for every $i \in [k]$. Consequently, there is an accepting run
\begin{equation*}
\rho' \ := \ ((p_0, q_0), \nu_0) \xrightarrow{\doc[1]/\sigma_1} 
\ldots \xrightarrow{\doc[n]/\sigma_{n}} ((p_k, q_k), \nu_n)
\end{equation*}
of $\sst$. Furthermore, $\out(\rho') = \out(\rho) = \doc'$. This implies that $\doc' \in \sem{\sst}(\doc)$. 

Moreover, each distinct $t \in E(\doc)$ with $\doc' = T(\nrefw{\doc, t})$ yields a distinct multiref-word $w = u_1 \doc[1] u_2 \doc[2] \ldots u_k \doc[k] = \nrefw{\doc, t}$. Since $M$ and $\widetilde{\sst}$ are deterministic, this also yields a distinct accepting run of $M$ on $w$ and a distinct accepting run of $\widetilde{\sst}$ on $w$, and therefore a distinct accepting run of $\sst$. Hence, for every $\doc' \in \Omega^*$, we have that $\mult{\sem{E \cdot T}(\doc)}{\doc'} \leq \mult{\sem{\sst}(\doc)}{\doc'}$. This means that $\sem{E \cdot T}(\doc) \subseteq \sem{\sst}(\doc)$.

\end{proof}

\subsection{Proof of Theorem~\ref{NSSTtoET}}

We need the following definitions. Let $\nsst = (Q, \Sigma, \Omega,  \Reg, \Delta, I, F)$ be an NSST. When $\nsst$ is in a state $p \in Q$ and reads a symbol $b \in \Sigma$, then the number of possible nondeterministic choices it can make is the sum over all the multiplicities (recall that $\Delta$ is a bag) of elements $(p, b, \sigma, q) \in \Delta$. We shall call this number the \emph{nondeterministic branching factor of $p$ and $b$}. More formally, we define $\ndetbf{p, b} = \sum_{(p, b, \sigma, q) \in \Delta} \mult{\Delta}{(p, b, \sigma, q)}$. Furthermore, we define the \emph{nondeterministic branching factor of $\nsst$} as $\ndetbf{\nsst} = \max_{p \in Q, b \in \Sigma}\{\ndetbf{p, b}\}$.

Now, let us restate Theorem~\ref{NSSTtoET} and prove it.

\begin{theorem}
Given an NSST $\sst$ with $n$ states, input alphabet $\Sigma$ and nondeterministic branching factor $m$, then we can construct a regex multispanner $E$ over $\Sigma$ and variables $\{\varsx_1, \varsx_2, \ldots, \varsx_{\max\{n, m\}}\}$ and a linear polyregular function $T$ with $\sem{E \cdot T} = \sem{\sst}$. Moreover, $E$ is represented by a multispanner-expression $r$ with $|r| = \bigo(|\Sigma| + \max\{n, m\})$, $T$ is represented by an DSST $\sst'$ with $\bigo(\max\{n, m\} \cdot n)$ states, and both $r$ and $\sst'$ can be constructed in time $\bigo(|\sst| + n \cdot \max\{n, m\} + |\Sigma|)$.
\end{theorem}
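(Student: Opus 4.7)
The plan is to implement the sketch given in the introduction to Theorem~\ref{NSSTtoET}: use $E$ to non-deterministically annotate the document with the sequence of transition choices to be simulated by $\sst'$, and use $\sst'$ to deterministically execute those choices. Let $Q = \{q_1, \dots, q_n\}$ and for each pair $(q,b) \in Q \times \Sigma$, fix an enumeration of the $\ndetbf{q,b}$ transitions in $\Delta$ starting from $(q,b)$ (counting multiplicities), naming the $j$-th such transition $(q,b,\sigma_{q,b,j},q_{q,b,j})$. Set $k = \max\{n,m\}$ and define
\[
r \ := \ \Big(\sum_{i=1}^{n} \varsx_i\{\eword\}\Big) \cdot \Bigg(\Big(\sum_{j=1}^{m} \varsx_j\{\eword\}\Big) \cdot \sum_{b \in \Sigma} b\Bigg)^{\!*}.
\]
Then $|r| = \bigO(|\Sigma| + \max\{n,m\})$. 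By construction, every multiref-word in $\nrefw{\lang{r}}$ has the shape $\open{\varsx_{i_0}}\close{\varsx_{i_0}} \, \open{\varsx_{i_1}}\close{\varsx_{i_1}} \, a_1 \, \open{\varsx_{i_2}}\close{\varsx_{i_2}} \, a_2 \, \cdots$, where the first annotation encodes a choice of initial state and each subsequent annotation encodes which of the (up to $m$) available nondeterministic transitions to take on the next input symbol.

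The DSST $\sst'$ is designed to parse precisely these annotations. Its state set consists of an initial state $q_{\mathrm{init}}$; intermediate states $(q_{\mathrm{init}}, \open{\varsx_i})$ for $i \in [n]$; "ready" states $q \in Q$; and, for each $q \in Q$ and $j \in [m]$, intermediate states $(q, \open{\varsx_j})$ and $(q, j)$. The transitions are: from $q_{\mathrm{init}}$ on $\open{\varsx_i}$ go to $(q_{\mathrm{init}},\open{\varsx_i})$ with the identity assignment; from $(q_{\mathrm{init}},\open{\varsx_i})$ on $\close{\varsx_i}$ go to $q_i$ with the assignment that loads each register $X$ with the constant string $I(q_i)(X)$ (this transition is only defined when $I(q_i)$ is defined, which keeps the assignment copyless as no registers appear on the right-hand sides); from $q$ on $\open{\varsx_j}$ go to $(q,\open{\varsx_j})$ and then on $\close{\varsx_j}$ go to $(q,j)$, both with the identity assignment; and from $(q,j)$ on $b \in \Sigma$ apply the assignment $\sigma_{q,b,j}$ and move to $q_{q,b,j}$, provided $j \le \ndetbf{q,b}$. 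The final function is $F'(q) := F(q)$ for $q \in Q$ and undefined elsewhere. This gives $\bigO(1 + n + n + nm + nm) = \bigO(n\cdot\max\{n,m\})$ states, and the construction cost is $\bigO(|\sst|)$ for translating $\Delta$ into DSST transitions, plus $\bigO(n\cdot\max\{n,m\})$ for the auxiliary states and $\bigO(|\Sigma|)$ for iterating over input symbols.

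For correctness, I will exhibit a multiplicity-preserving bijection between the bag of accepting runs of $\sst$ on $\doc = a_1\cdots a_\ell$ and the disjoint union, over $t \in E(\doc)$, of accepting runs of $\sst'$ on $\nrefw{t,\doc}$, preserving outputs. Given an accepting run of $\sst$ that picks initial state $q_{i_0}$ and, in its $k$-th step, picks the $j_k$-th transition available for the pair $(q_{k-1},a_k)$, let $t$ be the unique multispan-tuple such that $t(\varsx_{i_0}) \ni \spann{1}{1}$ and $t(\varsx_{j_k}) \ni \spann{k+1}{k+1}$ for each $k$. Then $\nrefw{t,\doc}$ is in $\nrefw{\lang{r}}$, so $t \in E(\doc)$, and since $\sst'$ is deterministic there is at most one run of $\sst'$ on $\nrefw{t,\doc}$, which by construction faithfully simulates the step-by-step updates of the $\sst$-run and thus is accepting with the same output. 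Conversely, an accepting $\sst'$-run on some $\nrefw{t,\doc}$ determines, in each block of annotations, a unique index, from which we read off the NSST run. The main subtlety to handle is exactly what the paper flags as special to bag semantics: distinct copies of the same transition $(q,b,\sigma,q')$ in the bag $\Delta$ must correspond to distinct values of $j$ in the enumeration, so that distinct $\sst$-runs that differ only in which copy they use map to distinct multispan-tuples $t$; the fixed enumeration of transitions (with multiplicities) ensures this, and no further work is needed to match multiplicities. The complexity claims follow immediately from the construction bounds above.
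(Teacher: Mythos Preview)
Your approach is essentially identical to the paper's: annotate each input position with an empty span in some $\varsx_j$ to encode the index of the nondeterministic choice to be taken there, precede the whole thing with one more such annotation for the initial-state choice, and have a DSST decode these annotations to simulate the NSST deterministically. Your regex $r$ and the state structure for $\sst'$ match the paper's construction almost verbatim.

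There is, however, a genuine gap in the correctness argument (which the paper's own proof also glosses over). With your regex, both the initial-state annotation and the \emph{first} transition annotation sit at the same document position~$1$, each contributing an empty span $\spann{1}{1}$. Since $E(\doc)$ is a \emph{set} of multispan-tuples, two distinct NSST runs that swap these two choices --- say $(i_0,j_1)=(1,2)$ versus $(i_0,j_1)=(2,1)$ --- map to the \emph{same} tuple $t$ (namely $\spann{1}{1}\in t(\varsx_1)$ and $\spann{1}{1}\in t(\varsx_2)$), so your map from runs to tuples is not injective and multiplicities are lost. Worse, when $i_0=j_1$ the normalized encoding $\nrefw{t,\doc}$ contains only a single $\open{\varsx_{i_0}}\close{\varsx_{i_0}}$ at position~$1$; on that input your DSST decodes the initial state and then immediately sees $a_1$ in a ``ready'' state $q$, where no $\Sigma$-transition is defined, and rejects --- losing that run entirely. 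Your sentence ``every multiref-word in $\nrefw{\lang{r}}$ has the shape $\open{\varsx_{i_0}}\close{\varsx_{i_0}}\open{\varsx_{i_1}}\close{\varsx_{i_1}}a_1\cdots$'' is exactly where this breaks, because $\nrefw{\cdot}$ reorders and deduplicates the two markers at position~$1$. (Separately, the index $\spann{k{+}1}{k{+}1}$ you give for the $k$-th transition choice does not match your regex, which places that marker before $a_k$, i.e., at $\spann{k}{k}$.)
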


\begin{proof}
Let $\sst \ = \ (Q_{\sst}, \Sigma, \Omega,  \Reg_{\sst}, \Delta_{\sst}, I_{\sst}, F_{\sst})$. For the sake of convenience, we represent the bag $\Delta_{\sst}$ as the set 
\begin{align*}
\widehat{\Delta}_{\sst} = &\{(p, b, \sigma, q, 1), (p, b, \sigma, q, 2), \ldots, \\
&(p, b, \sigma, q, \mult{\Delta}{(p, b, \sigma, q)}) \mid (p, b, \sigma, q) \in \Delta_{\sst}\}\,.
\end{align*} 
This is obviously an equivalent representation compared to the representation by a surjective function that has a set of identifiers as domain.

Let us assume that, for every $q \in Q_{\sst}$ and $b \in \Sigma$, we have a numbering of all $\ndetbf{p, b}$ transitions of the form $(p, b, \sigma, q, \textsf{id})$, i.\,e., every transition $(p, b, \sigma, q, \textsf{id}) \in \widehat{\Delta}_{\sst}$ has a unique number $\pi((p, b, \sigma, q, \textsf{id})) \in [\ndetbf{p, b}]$. Obviously, $\pi((p, b, \sigma, q, \textsf{id})) \leq \ndetbf{\nsst} = m$ for every transition $(p, b, \sigma, q, \textsf{id}) \in \widehat{\Delta}_{\sst}$. Such a numbering can be easily constructed from $\widehat{\Delta}_{\sst}$ in time $\bigo(|\widehat{\Delta}_{\sst}|)$.

We also extend the numbering $\pi$ to the states, i.\,e., every state $q \in Q$ has a unique number $\pi(q) \in [n]$. 

Let 
\begin{equation*}
\rho \ := \ (q_0, \nu_0) \xrightarrow{b_1/\sigma_1} (q_1, \nu_1) \xrightarrow{b_2 /\sigma_2} \ldots \xrightarrow{b_k/\sigma_{k}} (q_{k}, \nu_k)
\end{equation*}
be a run of $\sst$ on some input $b_1 b_2 \ldots b_k$. For every $i \in [k]$, $\sst$ reads symbol $b_i$ by a transition $s_i = (q_{i-1}, b_i, \sigma_i, q_i, \textsf{id}_i) \in \widehat{\Delta}_{\sst}$, which is one possible nondeterministic choice of the $\ndetbf{q_{i-1}, b_i}$ many possible transitions of the form $(q_{i-1}, b_i, \sigma, p, \textsf{id})$. Moreover, the choice of the start state $q_0$ with $I(q_0) \neq \bot$ is another nondeterministic choice. We can therefore uniquely represent the run $\rho$ of $\sst$ on input $b_1 b_2 \ldots b_k$ as a string from $[n] ([m] \times \Sigma)^*$, namely the string 
\begin{equation*}
\pi(q_0) (\pi(s_1), b_1) (\pi(s_2), b_2) \ldots (\pi(s_k), b_k)\,.
\end{equation*}
In the following, we call every string $h' (h_1, b_1) (h_2, b_2) \ldots (h_k, b_k)$ with $h' \in [n]$, $h_i \in [m]$ and $b_i \in \Sigma$ an \emph{nd-annotation of $b_1 b_2 \ldots b_k$}.

Let $w = h' (h_1, b_1) (h_2, b_2) \ldots (h_k, b_k)$ be an nd-annotation of $b_1 b_2 \ldots b_k$. We can now transform $w$ into a run $\rho_w$ of $\sst$ by the following procedure. We start in state $q_0$ with $\pi(q_0) = h'$ if $I_{\sst}(q_0)$ is defined, and interrupt otherwise. Then, for every $i = 1, 2, \ldots, k$, if we are currently in a state $q_{i-1}$, then we extend the current run with the transition $(q_{i-1}, b_i, \sigma_i, q_i, \textsf{id}_i)$ with $\pi((q_{i-1}, b_i, \sigma_i, q_i, \textsf{id}_i)) = h_i$ next, or interrupt if no such transition exist (which is the case if $h_i > \ndetbf{q_{i-1}, b_i}$). 

We say that the nd-annotation $w$ of some input $b_1 \ldots b_k$ is \emph{valid}, if it can be completely transformed into a run $\rho_w$ without interrupting the procedure from above, and it is called \emph{accepting}, if it is valid and $\rho_w$ is an accepting run. 

It can be easily verified that for every $\doc \in \Sigma^*$, 
\begin{equation*}
\sem{\sst}(\doc) = \multiset{\out(\rho_w) \mid \text{$\rho$ is an accepting nd-annotation of $\doc$}}\,.
\end{equation*}

Moreover, we can transform $\sst$ into a DSST $\sst'$ with input alphabet $[\max\{n, m\}] \cup \Sigma$ that, for every $\doc \in \Sigma^*$, maps each accepting nd-annotation $w$ of $\doc$ to $\out(\rho_w)$. More precisely, $\sst'$ simply carries out the run $\rho_w$ represented by the input $w$ as described above. When the input $w$ is not a valid nd-annotation, then $\sst'$ just stops and rejects. If $w$ is valid but not accepting, then it rejects as well. If $\rho_w$ is accepting, then $\sst'$ accepts and therefore outputs the element $\out(\rho_w)$. In particular, note that if $\sst$ has several different accepting runs on $\doc$ with the same output $\doc'$, then $\sst'$ will also produce $\doc'$ for each of the annotated inputs resulting from $\sst$'s different accepting runs. 

We observe that $\sst'$ can simulate $\sst$ in a completely deterministic way, since the explicit annotations in the input allow $\sst'$ to deterministically determine the transitions of $\sst$ that should be applied (or check whether the input is a valid nd-annotation). Since $\sst'$ needs to store in its states the number that precedes the next symbol of $\Sigma$ of the input, it needs $\bigo(n \cdot \max\{n, m\})$ states. Furthermore, $\sst'$ can be constructed in time $\bigo(|\sst| + n \cdot \max\{n, m\})$.

Let $\varset = \{\varsx_1, \varsx_2, \ldots, \varsx_{\max\{n, m\}}\}$. We represent nd-annotations as multiref-words over $\Sigma$ and $\varset$ in the obvious way, i.\,e., in an nd-annotation, we simply represent every number $i$ by the parentheses $\open{\varsx_i} \close{\varsx_i}$. This obviously represents an nd-annotation as a normalised multiref-word. 

Let us now define a multispanner-expression. For $i \in \mathbb{N}$, let $r_i = (\open{\varsx_1} \close{\varsx_1}) + (\open{\varsx_2} \close{\varsx_2}) + \ldots + (\open{\varsx_i} \close{\varsx_i})$, let $r = r_n \cdot \big(r_m \cdot \Sigma \big)^*$. Obviously, $\lang{r}$ is the set of all possible nd-annotations for an input over $\Sigma$ in the multiref-word interpretation explained above. Moreover, $\nrefw{\lang{r}} = \lang{r}$, $|r| = \bigo(|\Sigma| + \max\{m, n\})$ and $r$ can be constructed in time $\bigo(|\Sigma| + \max\{m, n\})$.

With our considerations from above, we can therefore conclude that $\sem{r \cdot \sst'}(\doc) = \sem{\sst}(\doc)$ for every~$\doc \in \Sigma^*$.
\end{proof}

\section{Proofs of Section~\ref{sec:enumeration}}\label{sec:enumAppendix}
	
\subsection{Proof of Proposition~\ref{prop:garbageFree}}

\begin{proof}
During this proof, the following notation will be useful. Let $R \subseteq \Reg$ be a set of registers. For an assignment $\sigma$, we denote by $\pi_R(\sigma)$ the restriction of $\sigma$ to $R$, namely, $\dom(\pi_R(\sigma)) = \dom(\sigma) \cap R$ and $[\pi_R(\sigma)](X) = \sigma(X)$ for every $X \in \dom(\sigma) \cap R$.

Let $\nsst = (Q, \Sigma, \Omega,  \Reg, \Delta, I, F)$ be an NSST, which we will transform into an equivalent NSST that is garbage-free. The intuitive idea is that we use states $(q, R)$ with $q \in Q$ and $R \subseteq \Reg$, where $R$ simply stores the registers that are to be used by the next assignment, i.\,e., $R = \reg(\sigma)$, where $\sigma$ is the assignment of the next transition. Moreover, to achieve garbage-freeness, $R$ must also be the domain of the assignment of the previous transition that led to state $(q, R)$. Consequently, if we move from state $(q, R)$ to $(q', R')$, this must be done by a transition with an assignment $\sigma$ that satisfies $R = \reg(\sigma)$ and $\dom(\sigma) = R'$. Note that for the next transition in a computation, $R'$ now plays the role of the registers to be used by the next assignment $\sigma'$, i.\,e., $R' = \reg(\sigma')$, where $\sigma'$ is the assignment of the following transition. If we simulate $\nsst$ while maintaining the sets $R$ of the states $(q, R)$ like described above, we satisfy the requirement of garbage-freeness, and obtain an equivalent NSST. Let us now define this formally.
	
	Let $\nsst' = (Q', \Sigma, \Omega, \Reg, \Delta', I', F')$ be an NSST such that $Q' = Q \times \powerset(\Reg)$, $I'((q,R)) = \pi_R(I(q))$ whenever $I(q) \neq \bot$, and $F'((q,R)) = F(q)$ iff $F(q) \neq \bot$ and $R = \reg(F(q))$ for every $(q, R) \in Q'$.
	Furthermore, we will have that $((q,R), a, \sigma, (q',R')) \in \Delta'$ if, and only if, there exists $(q, a, \tau, q') \in \Delta$ such that $\sigma = \pi_{R'}(\tau)$, and
	$R = \reg(\sigma)$.
	Furthermore, we assume that each transition $((q,R), a, \sigma, (q',R')) \in \Delta'$ is repeated in $\Delta'$ as many times as transitions $(q, a, \tau, q')$ exist satisfying the above~properties.
	
	First, we notice that $\nsst'$ is garbage-free. Indeed, consider an accepting run:
	\begin{equation*}
		\rho \ := \ ((q_0, R_0), \nu_0) \xrightarrow{a_1/\sigma_1} ((q_1, R_1), \nu_1) \xrightarrow{a_2 /\sigma_2} \ldots \xrightarrow{a_n/\sigma_{n}} ((q_{n}, R_n), \nu_n)
	\end{equation*}
	of $\nsst'$ over $a_1\ldots a_n$. One can easily prove by induction that $\dom(\nu_i) = R_i$ for every $i \leq n$. Given that $R_i = \reg(\sigma_{i+1})$ by the definition of $\Delta'$, we conclude that $\dom(\nu_i) = \reg(\sigma_{i+1})$. Furthermore, given that $R_n = \reg(F'((q_n, R_n)))$ by the definition of $F'$, then $\dom(\nu_n) = \reg(F'((q_n, R_n)))$ and we proved that $\nsst'$ is garbage-free.
	
	Next, we show that $\nsst'$ is equivalent to $\nsst$, namely, $\sem{\nsst}(w) = \sem{\nsst'}(w)$ for every string $w$. For one direction, let $\rho$ be an accepting run of $\nsst$ over $a_1\ldots a_n$ of the form:
	\begin{equation} \label{eq:run-garbage-free}
		\rho \ := \ (q_0, \nu_0) \xrightarrow{a_1/\tau_1} (q_1, \nu_1) \xrightarrow{a_2 /\tau_2} \ldots \xrightarrow{a_n/\tau_{n}} (q_{n}, \nu_n) \tag{\dag}
	\end{equation}
	Furthermore, let $R_0, \ldots, R_n$ be sets of registers defined recursively as follow: 
	$R_n = \reg(F(q_n))$ and for every $i < n$:
	$ 
	R_i = \bigcup_{X \in R_{i+1}} \reg(\sigma_{i+1}(X)).
	$
	Then, we can construct a run $\rho'$ of $\nsst'$ over $a_1 \ldots a_n$ such that: 
	\begin{equation*}  \label{eq:run-garbage-free2}
		\rho' \ := \ ((q_0, R_0), \nu_0') \xrightarrow{a_1/\sigma_1} ((q_1, R_1), \nu_1') \xrightarrow{a_2 /\sigma_2} \ldots \xrightarrow{a_n/\sigma_{n}} ((q_{n},R_n), \nu_n') \tag{\ddag}
	\end{equation*}
	where $\sigma_i = \pi_{R_i}(\tau_i)$ and $\nu_i' = \pi_{R_i}(\nu_i)$ for every $i \leq n$. By construction, we can verify that $\rho'$ is an accepting run of $\nsst'$ over $a_1\ldots a_n$ and $\nu_n(F(q_n)) = \nu_n'(F((q_n, R_n)))$. Conversely, we can verify that if we have an accepting run $\rho'$ of $\nsst'$ over $a_1\ldots a_n$ like (\ref{eq:run-garbage-free}), then we can find the corresponding run $\rho$ of $\nsst$ over $a_1 \ldots a_n$ like (\ref{eq:run-garbage-free2}). 
\end{proof}

\subsection{Proof of Proposition~\ref{prop:garbageFreePTime}}

\begin{proof}
	Let us assume that $\nsst$ is \emph{trim}, which means that every state can be reached by some run, and every run can be extended to some accepting run (by standard automata constructions, this is without loss of generality). We also assume that, for every state $p \in Q$, if $I(p)$ is defined, then $p$ has at least one outgoing transition, and if $F(p)$ is defined, then $p$ has at least one incoming transition. \par
	By definition, $\nsst$ is \emph{not} garbage-free if and only if there exists an accepting run 	
	\begin{equation*}
		\rho \ := \ ((q_0, R_0), \nu_0) \xrightarrow{a_1/\sigma_1} ((q_1, R_1), \nu_1) \xrightarrow{a_2 /\sigma_2} \ldots \xrightarrow{a_n/\sigma_{n}} ((q_{n}, R_n), \nu_n)
	\end{equation*}
	such that one of the following properties is satisfied:
	\begin{enumerate}
		\item[] $\dom(\nu_0) \neq \reg(\sigma_1)$.
		\item[] $\dom(\nu_i) \neq \reg(\sigma_{i+1})$, for some $i$ with $0 \leq i \leq n-1$.
		\item[] $\dom(\nu_n) \neq \reg(F(q_n))$. 
	\end{enumerate}
	
	We observe that, by induction, this means that $\nsst$ is \emph{not} garbage-free if and only if one of the following properties is satisfied:
	\begin{enumerate}
		\item There is a transitions $(p, a, \sigma, q) \in \Delta$ such that $I(p)$ is defined and $\dom(I(p)) \neq \reg(\sigma)$.
		\item There are transitions $(p, a, \sigma_1, q), (q, b, \sigma_2, r) \in \Delta$ such that $\dom(\sigma_1) \neq \reg(\sigma_2)$.
		\item There is a transition $(p, a, \sigma, q) \in \Delta$ such that $F(q)$ is defined and $\dom(\sigma) \neq \reg(F(q))$.
	\end{enumerate}
	
	For every state $p \in Q$, we say that $p$ is \emph{reg-consistent} if there are no two outgoing transitions $(p, a, \sigma_1, q), (p, b, \sigma_2, r) \in \Delta$ with $\reg(\sigma_1) \neq \reg(\sigma_2)$, and we say that $p$ is \emph{dom-consistent} if there are no two incoming transitions $(q, a, \sigma_1, p), (r, b, \sigma_2, p) \in \Delta$ with $\dom(\sigma_1) \neq \dom(\sigma_2)$.
	
	If there is a state that is not reg-consistent or not dom-consistent, then one of the three properties from above is satisfied. Indeed, let us assume that $p$ is not reg-consistent, which means that there are $(p, a, \sigma_1, q), (p, b, \sigma_2, r) \in \Delta$ with $\reg(\sigma_1) \neq \reg(\sigma_2)$. If $I(p)$ is defined, then either $\dom(I(p)) \neq \reg(\sigma_1)$ or $\dom(I(p)) \neq \reg(\sigma_2)$, which means that the first property is satisfied. If $I(p)$ is not defined, then, since $\nsst$ is trim, there must be at least one incoming transition $(s, c, \sigma_3, p) \in \Delta$. Thus, $\dom(\sigma_3) \neq \reg(\sigma_1)$ or $\dom(\sigma_3) \neq \reg(\sigma_2)$, which means that the second property is not satisfied. On the other hand, if $p$ is not dom-consistent, then there are $(q, a, \sigma_1, p), (r, b, \sigma_2, p) \in \Delta$ with $\dom(\sigma_1) \neq \dom(\sigma_2)$. Now if $F(p)$ is defined, then $\dom(\sigma_1) \neq \reg(F(p))$ or $\dom(\sigma_2) \neq \reg(F(p))$, which means that the third property is satisfied. If $F(p)$ is not defined, then, since $\nsst$ is trim, there must be at least one outgoing transition $(p, c, \sigma_3, s) \in \Delta$. Thus, $\dom(\sigma_1) \neq \reg(\sigma_3)$ or $\dom(\sigma_2) \neq \reg(\sigma_3)$, which means that the second property is satisfied.
	
	If all states are both reg-consistent and dom-consistent, then every state $p$ has a unique \emph{reg-value} $\mathsf{regv}(p)$ (i.\,e., the set $\reg(\sigma)$ for any transition $(p, a, \sigma, q) \in \Delta$, or undefined if $p$ has no outgoing transition) and a unique \emph{dom-value} $\mathsf{domv}(p)$ (i.\,e., the set $\dom(\sigma)$ for any transition $(q, a, \sigma, p) \in \Delta$, or undefined if $p$ has no incoming transitions). 
	
	We can check whether every state is reg- and dom-consistent as follows. We iterate through all transitions $(p, a, \sigma, q) \in \Delta$, and for each such transition, we set $\mathsf{regv}(q)$ to $\reg(\sigma)$ and we set $\mathsf{domv}(p)$ to $\dom(\sigma)$. Whenever we overwrite an already set dom- or reg-value of a state with a \emph{different} dom- or reg-value, we have discovered a state that is not dom- or not reg-consistent. Otherwise, all states are dom- and reg-consistent and we have computed the values $\mathsf{regv}(p)$ and $\mathsf{domv}(p)$ for every $p \in Q$. This requires time $\bigo(|\Delta| |\Reg|)$.
	
	If all the states are indeed dom- and reg-consistent, then the only way of how one of the three properties from above can be satisfied is that for some $p \in Q$ we have that $I(p)$ is defined and $\dom(I(p)) \neq \mathsf{regv}(p)$ (note that, by assumption, $\mathsf{regv}(p)$ must be defined), or that $F(p)$ is defined and $\reg(F(p)) \neq \mathsf{domv}(p)$ (note that, by assumption, $\mathsf{domv}(p)$ must be defined), or that both $\mathsf{regv}(p)$ and $\mathsf{domv}(p)$ are defined, but $\mathsf{regv}(p) \neq \mathsf{domv}(p)$. 
	
	Since we have already computed the values $\mathsf{regv}(p)$ and $\mathsf{domv}(p)$ for every $p \in Q$, this can be checked in time $\bigo(|Q||\Reg|)$.
	
	Hence, we can check whether $\nsst$ is garbage-free in time $\bigo(|\Delta| |\Reg|)$. 
\end{proof}

\subsection{Proof of Theorem~\ref{theo:enum}}

In this subsection, we present an enumeration algorithm with linear-time preprocessing for the problem $\textsc{EnumLinearET}[(E, T)]$. As stated before, we can assume that $E$ is given by a regex multispanner and $T$ by a DSST. Then, the algorithm starts by converting $(E,T)$ into an $\nsstET$ equivalent to $(E,T)$ by Theorem~\ref{ETtoNSST}.  For this reason, we dedicate the rest of this subsection on the following enumeration problem. Let $\nsst$ be an NSST.
\begin{center}
	\framebox{
		\begin{tabular}{rl}
			\textbf{Problem:} & $\textsc{EnumNSST}[\nsst]$ \\
			\textbf{Input:} & A string $w \in \Sigma^*$ \\
			\textbf{Output:} & Enumerate $\sem{\nsst}(w)$
		\end{tabular}
	}
\end{center}
By Proposition~\ref{prop:garbageFree}, we can further assume that $\nsst$ is garbage-free. We will use this property towards the end of this section, when we present the final evaluation algorithm. 

The rest of this section is as follows. We start by presenting the data structure, called Enumerable Compact Set with Assignments. We continue by showing how to enumerate the results from this data structure and the conditions we need to enforce output-linear delay. Then, we present the operations we use to extend and union results in the data structure. We end by showing and analyzing the evaluation algorithm, using the data structure and the garbage-free property. 

\paragraph{The data structure} Let $\Reg$ be a finite set of registers and $\Omega$ a finite alphabet. Recall that $\Asg(\Reg, \Omega)$ is the set of all copyless assignment and $\Val(\Reg, \Omega)$ the set of all valuations, both over $\Reg$ and $\Omega$.
We say that $\gamma \in \Asg(\Reg, \Omega)$ is a \emph{relabel} (or relabeling assignment) iff it is a partial function $\gamma: \Reg \pmap \Reg$. Note that, since a relabel $\gamma$ must be copyless, then $\gamma$ just reassigns the content between registers. We say that $\sigma$ is a \emph{non-relabeling} assignment if $\sigma$ is not a relabel. We usually denote $\sigma$ for an assignment, $\nu$ for a valuation, and $\gamma$ for a relabeling.

We define a \emph{Enumerable Compact Set with Assignments} (ECSA) as a tuple:
\[
\D = (\Reg, \Omega, \N, \ell, r, \lambda)
\] 
such that $\N$ is a finite set of nodes, $\ell\colon \N \pmap \N$ and $r\colon \N \pmap \N$ are the {\em left} and {\em right} partial functions, and $\lambda\colon \N\to\Asg(\Reg, \Omega) \cup\{\cup\}$ is a labeling function. For every node $\n \in \N$, if $\lambda(\n) = \cup$, then both $\ell(\n)$ and $r(\n)$ are defined; if $\lambda(\n)$ is an assignment, then $r(\n)$ is not defined (i.e., $\ell(n)$ could be defined or not); and  $\lambda(\n)$ is an assignment and $\ell(n) = \bot$ iff $\lambda(\n)$ is a valuation. 
Further, we assume that the directed graph $(\N, \{(\n,\ell(\n)), (\n, r(\n)) \mid \n \in \N\})$ is acyclic. Note that, by the previous definition, $\n$ is a bottom node in this acyclic graph iff $\lambda(\n)$ is a valuation. We define the size of $\D$ as $|\D| = |\N|$.

For the analysis, it will be useful to distinguish between different type of nodes. We will say that $\n$ is a {\em union node} (or $\cup$-node) if $\lambda(\n) = \cup$, and an output node, otherwise.
Between output nodes, we will say that $\n$ is a \emph{$\nu$-node} (or also a \emph{bottom} node) if $\lambda(\n)$ is a valuation, a \emph{$\gamma$-node} if $\lambda(\n)$ is relabel, and a \emph{$\sigma$-node} if neither a valuation nor a relabel. 
 
For each $\n \in \N$, we associate a bag of valuations $\sem{\D}(\n)$ recursively as follows: $\sem{\D}(\n) = \multiset{\lambda(\n)}$ whenever $\n$ is a $\nu$-node, $\sem{\D}(\n) = \sem{\D}(\ell(\n)) \cup \sem{\D}(r(\n))$ whenever $\n$ is a union node, and $\sem{\D}(\n) = \multiset{\nu \circ \lambda(\n) \mid \nu \in\sem{\D}(\ell(v))}$ whenever $n$ is either a $\sigma$-node or a $\gamma$-node.

\paragraph{The enumeration problem} The plan is to use ECSA to encode a set of valuations of an NSST over a string, and then enumerate the set of valuations efficiently. Specifically, we consider the following enumeration problem:
\begin{center}
	\framebox{
		\begin{tabular}{rl}
			\textbf{Problem:} & $\enumECSA$\\
			\textbf{Input:} & An ECSA $\D = (\Reg, \Omega, \N, \ell, r, \lambda)$ and $\n \in \N$.  \\
			\textbf{Output:} & Enumerate all valuations in the bag $\sem{\D}(\n)$.
		\end{tabular}
	}
\end{center}
As usual, we want to enumerate the set $\sem{\D}(\n)$ with output-linear delay. However, notice that we want to enumerate all valuations in $\sem{\D}(\n)$, possibly with repetitions. In the literature (see~\cite{Segoufin13}), the usual goal is to enumerate the outputs without repetitions. Nevertheless, we provide NSST with a bag semantics and, therefore, it makes sense to enumerate all valuations with repetitions. 

For solving this enumeration problem efficiently, we require to impose further restrictions over ECSA. We call these restrictions \emph{$k$-bounded} and \emph{garbage-free}, respectively.
\begin{enumerate}
	\item We define the notion of \emph{$k$-bounded} ECSA as follows. 
	Given $\D$, define the (left) output-depth of a node $\n\in \N$, denoted by $\odepth_{\D}(\n)$, recursively as follows:
	$\odepth_{\D}(\n) = 0$ whenever $\n$ is a $\nu$-node or a $\sigma$-node, and $\odepth_{\D}(\n) = \odepth_{\D}(\ell(\n))+1$ whenever $\lambda(v)$ is a union node or a $\gamma$-node.
	Then, for $k\in\bbN$ we say that $\D$ is $k$-bounded if $\odepth_{\D}(v)\leq k$ for all $\n\in \N$.
	
	\item We say that $\D$ is \emph{garbage-free} if, 
	for every $\sigma$-node or $\gamma$-node $\n$, it holds that $\dom(\nu) = \reg(\lambda(\n))$ for every $\nu \in \sem{\D}(\ell(\n))$. 
\end{enumerate}
Similar as in~\cite{MunozR22}, the intuition behind $k$-boundedness is that every node has an output node close by following a constant number of left children. Here, we have to distinguish between output nodes that are a relabel (i.e., a $\gamma$-node) and those that are not (i.e., $\nu$- or $\sigma$-nodes). Intuitively, a relabel does not contribute to the size of an output; therefore, it increases the size of the output depth in one. For this reason, our goal will be to restrict contiguous sequences of relabels in the data structure during node operations (see below). For the garbage-free restriction, the intuition comes that one could have a node $\n$ that produces a valuation $\nu \in \sem{\D}(\n)$ whose content $\nu(x)$ is not used by the parent node of $\n$. This situation can generate sequences of output nodes that later do not contribute to the size of an output, increasing the delay. For this reason, garbage-free ECSA forces all parent nodes to use all the registers of the child's valuations, and then all registers will contribute to the final output. 

\begin{algorithm}[t]
	\caption{Enumeration over a node $\n$ from some garbage-free ECSA $\D = (\Reg, \Omega, \N, \ell, r, \lambda)$.}\label{alg:enumeration}
	\smallskip
	\begin{varwidth}[t]{0.5\textwidth}
		\begin{algorithmic}[1]
			\Procedure{{enumerate}}{$\n$} \label{alg1enum1}
			\State $\uit \gets \textsc{create}(\n)$
			\While{$\uit.\textsc{next} = \bf{true}$}\label{alg1enum2}
			\State $\uit.\textsc{print}(\uid)$\label{alg1enum3}
			\EndWhile
			\EndProcedure
			
			\State
			
			\LineComment{$\nu$-node iterator $\uit_\nu$} \label{alg1bottom0}
			\Procedure{{create}}{$\n$}\,\Comment{$\n$ is a $\nu$-node} \label{alg1bottom1}
			\State $\bnodelabel \gets \n$ \label{alg1bottom2}
			\State $\hasnext \gets {\bf true}$ \label{alg1bottom3}
			\EndProcedure	
			
			\State
			
			\Procedure{{next}}{} \label{alg1bottom4}
			\If{$\hasnext = \bf{true}$} \label{alg1bottom5}
			\State $\hasnext \gets \bf{false}$ \label{alg1bottom6}
			\State \Return $\bf{true}$ \label{alg1bottom7}
			\EndIf
			\State \Return $\bf{false}$ \label{alg1bottom8}
			\EndProcedure
			
			\State
			
			\Procedure{{print}}{$\sigma$} \label{alg1bottom9}
			\State ${\bf print}: \lambda(\m) \circ \sigma$ \label{alg1bottom10}
			\EndProcedure \label{alg1bottom11}
			\State 
			\LineComment{$\sigma$-node iterator  $\uit_\sigma$} \label{alg1prod0}
			\Procedure{{create}}{$\n$}\,\Comment{$\n$ is a $\sigma$-node}\label{alg1prod1}
			\State $\bnodelabel \gets \n$
			\State $\uit \gets \textsc{create}(\ell(\bnodelabel))$ \label{alg1prod5}
			\EndProcedure	
			
			\State
			
			\Procedure{{next}}{} \label{alg1prod6}
			\State \Return $\uit.\textsc{next}$ \label{alg1prod12}
			\EndProcedure
			
			\State
			
			\Procedure{{print}}{$\sigma$} \label{alg1prod13}
			\State $\uit.\textsc{print}(\lambda(\bnodelabel) \circ \sigma)$ \label{alg1prodlast}
			\EndProcedure	
			\algstore{myalg}
		\end{algorithmic}
	\end{varwidth} \hfill
	\begin{varwidth}[t]{0.5\textwidth}
		\begin{algorithmic}[1]
			\algrestore{myalg}
			\LineComment{Union/$\gamma$-node iterator $\uit_{\cup/\gamma}$} \label{alg1union0}
			\Procedure{{create}}{$\n$}\label{alg1union1}
			\State $\ \ \ \ \ \ \ \ \ \ \ \ \ \ \ \ \ \ $ \Comment{$\n$ is a $\cup$- or $\gamma$-node}
			\State $\ustack\gets\text{\sf push}(\ustack, (\n, \uid)\,)$\label{alg1union4}
			\State $\ustack\gets\textsc{traverse}(\ustack)$\label{alg1union5}
			\State $\uit \gets \textsc{create}(\text{\sf top}(\ustack).\bnodelabel)$ \label{alg1union6}
			\EndProcedure	
			
			\State
			
			\Procedure{{next}}{}
			\If{$\uit.\textsc{next} = \bf{false}$}\label{alg1unionnext1}
			\State $\ustack\gets\text{\sf pop}(\ustack)$ \label{alg1unionnext2}
			\If{$\text{\sf length}(\ustack) = 0$} \label{alg1unionnext3}
			\State \Return $\bf{false}$
			\Else \label{alg1unionnext5}
			\State $\ustack\gets\textsc{traverse}(\ustack)$ \label{alg1unionnext6}
			\EndIf
			\State $\uit \gets \textsc{create}(\text{\sf top}(\ustack).\bnodelabel)$ \label{alg1unionnext7}
			\State $\uit.\textsc{next}$ \label{alg1unionnext8}
			\EndIf
			\State \Return $\bf{true}$
			\EndProcedure
			
			\State
			
			\Procedure{{print}}{$\sigma$} 
			\State $(\m, \gamma')\gets \text{\sf top}(\ustack)$ \label{alg1unionprint1}
			\State $\uit.\textsc{print}(\gamma' \circ \sigma)$ \label{alg1unionlast}
			\EndProcedure
			\State
			\Procedure{{traverse}}{$\ustack$}\label{alg1traverse0}
			\While{$\text{\sf top}(\ustack).\n$ is a $\cup$- or $\gamma$-node}
			\State $(\m,\gamma') \gets \text{\sf top}(\ustack)$
			\State $\ustack \gets \text{\sf pop}(\ustack)$
			\If{$\m$ is a $\gamma$-node} \label{alg1traverseIfStart}
			\State $\gamma'\gets \lambda(\m) \circ \gamma'$ \label{alg1relabelcomp}
			\State $\ustack\gets\text{\sf push}(\ustack, (\ell(\m), \, \gamma'))$ \label{alg1traverseIfEnd}
			\Else \label{alg1traverseElseStart}
			\State $\ustack\gets\text{\sf push}(\ustack, (r(\m), \gamma')\,)$
			\State $\ustack\gets\text{\sf push}(\ustack, (\ell(\m),\gamma')\,)$ \label{alg1traverseElseEnd}
			\EndIf
			\EndWhile
			\State \Return $\ustack$  \label{alg1traverseEnd}
			\EndProcedure
		\end{algorithmic}
	\end{varwidth}
	\smallskip
\end{algorithm} 
As we show in the next proposition, if $\D$ is garbage-free and $k$-bounded, then we can enumerate all valuations with output-linear delay.
\begin{proposition}\label{prop:enumeration}
	Assume that $k$ and $\Reg$ are fixed. Let $\D = (\Reg, \Omega, \N, \ell, r, \lambda)$ be a garbage-free and $k$-bounded ECSA. Then one can enumerate the set $\sem{\D}(\n)$ with output-linear delay for any node $\n\in \N$.
\end{proposition}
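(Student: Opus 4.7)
The plan is to present and analyze Algorithm~\ref{alg:enumeration}, which enumerates $\sem{\D}(\n)$ by a depth-first traversal of the DAG rooted at $\n$, managed through an explicit stack $\ustack$ of pairs $(\m, \gamma')$ where $\gamma'$ accumulates the composition of relabels from $\gamma$-nodes above $\m$. The subroutine $\textsc{traverse}$ descends along left children, folding $\gamma$-assignments into $\gamma'$ and pushing both children of each union node, until an output node (a $\sigma$- or $\nu$-node) appears on top of $\ustack$. The call $\scprint$ then walks down the left spine from this output node to a $\nu$-node, composing assignments along the way and emitting the resulting valuation. Correctness can be established by structural induction on the DAG: a case analysis on the label of $\n$ shows that $\sccreate$, $\scnext$, and $\scprint$ jointly enumerate the bag $\sem{\D}(\n)$ with the correct multiplicities, where it is essential that union nodes expose both children sequentially without deduplication, exactly as bag semantics requires.

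The main part of the proof is to bound the delay between consecutive outputs by a constant times the length of the next output. Fix an output $\nu_i$ and split the associated work into print cost and traversal cost. For the print cost, $\scprint(\uid)$ recurses from the current top of $\ustack$ down to a $\nu$-node along the left spine. By $k$-boundedness, between two consecutive $\sigma$- or $\nu$-nodes on this spine there are at most $k$ union- or $\gamma$-nodes, yielding a constant overhead per $\sigma$-node traversed. Since $|\Reg|$ and $k$ are fixed, each individual composition step is $\bigo(1)$, and at the bottom the composed assignment applied to the $\nu$-node's valuation produces a string of total length $|\nu_i|$. The garbage-free property is crucial here: it guarantees that every register of every intermediate valuation is used by its parent, so no work is spent computing content that is later discarded. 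Together with the copylessness of all assignments along the path, this yields a total print cost of $\bigo(|\nu_i|)$.

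For the traversal cost, between producing $\nu_{i-1}$ and $\nu_i$, the procedure $\scnext$ pops completed iterators and invokes $\textsc{traverse}$ on a subtree rooted at the right child of some previously explored union node. By $k$-boundedness each invocation of $\textsc{traverse}$ performs at most $2k = \bigo(1)$ pushes before exposing a new output node, and since every push is matched by exactly one pop in some later $\scnext$ call, an amortized charging argument assigns $\bigo(1)$ traversal work to each output. Adding the two costs gives delay $\bigo(|\nu_i|)$ for every $\nu_i$, which is output-linear.

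The main obstacle is carrying out this amortization cleanly while ensuring that the print recursion never wastes work on content not appearing in the final output. The two assumptions on $\D$ resolve precisely these two issues: $k$-boundedness caps the number of fresh stack entries per output by a constant, and garbage-freeness prevents the print recursion from composing register values that would subsequently be dropped. Without either assumption the delay bound fails; an unbounded chain of $\gamma$-nodes or valuations carrying registers irrelevant to the output would force per-output work exceeding $|\nu_i|$ by an arbitrary factor.
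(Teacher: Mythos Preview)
Your proposal follows essentially the same algorithmic scheme as the paper: a depth-first traversal with a stack accumulating relabels $\gamma'$, a $\textsc{traverse}$ routine that descends through $\cup$- and $\gamma$-nodes until an output node surfaces, and a $\scprint$ that composes assignments down to a $\nu$-node. The correctness sketch and the use of $k$-boundedness to bound each individual $\textsc{traverse}$ call are consistent with the paper's argument.

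The genuine gap is in the delay analysis. You claim traversal work is $\bigo(1)$ \emph{amortized} per output via push/pop matching, and from this conclude worst-case delay $\bigo(|\nu_i|)$. This inference is invalid: output-linear delay is a worst-case per-step guarantee, not an amortized one. After emitting $\nu_{i-1}$, the algorithm must unwind the iterator chain---propagating ``exhausted'' up through every $\sigma$- and $\cup/\gamma$-iterator on the path to the leaf that produced $\nu_{i-1}$---before it can even locate the branch containing $\nu_i$. That unwinding costs time proportional to the path depth of $\nu_{i-1}$, and it falls entirely into the interval between $\nu_{i-1}$ and $\nu_i$. The paper accordingly only establishes the weaker bound $\bigo(k\,(|\nu_{i-1}|+|\nu_i|))$ for the raw algorithm, and then explicitly invokes a separate known result (from the Enumerable Compact Sets literature) stating that any enumeration with delay $\bigo(|\nu|+|\nu'|)$ between consecutive outputs $\nu,\nu'$ can be converted into one with true output-linear delay. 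Your amortized charging does not substitute for this conversion step; you need either the same two-stage argument or an explicit de-amortization.

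A secondary omission: you assert that each composition step is $\bigo(1)$ because $|\Reg|$ is fixed, but do not say how. The paper is explicit that compositions must be kept \emph{symbolic} (concatenations are not flattened during the recursion), so that $\sigma\circ\sigma'$ is formed in $\bigo(|\Reg|)$ time and the resulting string is materialized only once at the bottom $\nu$-node, in time $\bigo(|\nu_i|)$. Without this representation the cost of composing along a path of length $d$ would be $\Theta(d\cdot|\nu_i|)$ rather than $\bigo(d+|\nu_i|)$.
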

\begin{proof}	
	Let $\D = (\Reg, \Omega, \N, \ell, r, \lambda)$ be a garbage-free and $k$-bounded ECSA.
	We follow the same enumeration procedure for Enumerable Compact Sets in~\cite{MunozR22,MunozR23}, adapted for our purposes.
	This strategy follows a depth-first-search traversal of the data-structure, done in a recursive fashion to ensure that after retrieving some output $u$, the next one $u'$ can be printed in $O(k\cdot (|u| + |u'|))$ time. The entire procedure is detailed in Algorithm~\ref{alg:enumeration}.

	For simplifying the presentation, we use an \emph{iterator interface} that, given a node $\n$, contains all information and methods to enumerate the outputs $\sem{\D}(\n)$. Specifically, an iterator $\uit$ must implement the following three methods:
	\[
	\begin{array}{rclrclrcl}
		\textsc{create}(\n) & \!\!\!\!\rightarrow\!\!\!\! & \uit   \ \ \ \ \ \	 & \uit.\textsc{next} & \!\!\!\! \rightarrow \!\!\!\! & b  \ \ \ \ \ \ \ &  \uit.\textsc{print}(\sigma)  & \!\!\!\!\rightarrow\!\!\!\! & \emptyset
	\end{array}
	\]
	where $\n$ is a node, $b$ is boolean, and $\emptyset$ means that the method does not return an output. 
	The first method, \textsc{create}, receives a node $\n$ and creates an iterator $\uit$ of the type of $\n$ (e.g., $\nu$-node). We will implement three types of iterators, one for $\nu$-nodes ($\uit_\nu$), one for $\sigma$-nodes ($\uit_\sigma$), and one for $\cup$- and $\gamma$-nodes together ($\uit_{\cup/\gamma}$). The second method, $\uit.\textsc{next}$, moves the iterator to the next output, returning {\bf true} if, and only if, there is an output to print. Then the last method, $\uit.\textsc{print}$, receives any assignment $\sigma \in \Asg(\Reg, \Omega)$, and prints the current valuation $\nu$ pointed by $\uit$ to the output registers after composing it with $\sigma$, namely, it outputs $\nu \circ \sigma$. 
	We assume that one must first call $\uit.\textsc{next}$ to move to the first output before printing. Furthermore, if $\uit.\textsc{next}$ outputs {\bf false}, then the behavior of $\uit.\textsc{print}$ is undefined. Note that one can call $\uit.\textsc{print}$ several times, without calling $\uit.\textsc{next}$, and the iterator will write the same output each time in the output registers. 
	
	Assume we can implement the iterator interface for each type. Then the procedure $\textsc{Enumerate}(v)$ in Algorithm~\ref{alg:enumeration} (lines \ref{alg1enum1}-\ref{alg1enum3}) shows how to enumerate the set $\sem{\D}(v)$ by using an iterator $\uit$ for $\n$. We assume here that $\uid$ is the identity assignment, namely, $\uid(r) = r$ for every $r \in \Reg$.  In the following, we show how to implement the iterator interface for each node type and how the size of the next valuation bounds the delay between two outcomes.

	We start by presenting the iterator $\uit_\nu$ for a  $\nu$-node $\n$ (lines \ref{alg1bottom0}-\ref{alg1bottom11}), called a \emph{ $\nu$-node iterator}. We assume that each $\uit_\nu$ has two internal values, denoted by $\bnodelabel$ and $\hasnext$, where $\bnodelabel$ is a reference to $\n$ and $\hasnext$ is a boolean variable. The purpose of a bottom node iterator is only to print $\lambda(\n) \circ \sigma$ for some assignment $\sigma$. Towards this goal, when we create $\uit_\nu$, we initialize $\bnodelabel$ equal to $\n$ and $\hasnext = {\bf true}$ (lines \ref{alg1bottom2}-\ref{alg1bottom3}). When $\uit_\nu.\textsc{next}$ is called for the first time, we swap $\hasnext$ from {\bf true} to {\bf false} and output {\bf true} (i.e., there is one output ready to be returned). Then any following call to $\uit_\nu.\textsc{next}$ will be false (line \ref{alg1bottom8}). Finally, the $\uit_\nu.\textsc{print}$ writes $\lambda(\bnodelabel) \circ \sigma$ to the output registers (lines \ref{alg1bottom9}-\ref{alg1bottom11}).

	For a $\sigma$-node, we present a \emph{$\sigma$-node iterator} $\uit_\sigma$ in Algorithm~\ref{alg:enumeration} (lines \ref{alg1prod0}-\ref{alg1prodlast}). 
	This iterator receives a $\sigma$-node $\n$ and stores a reference of $\n$, called $\bnodelabel$, and an iterator $\uit$ for iterating through the valuations of $\ell(\n)$ (i.e., $\sem{\D}(\ell(n))$). 
	The \textsc{create} method initializes $\bnodelabel$ with $\n$ and creates the iterator $\uit$ (lines \ref{alg1prod1}-\ref{alg1prod5}).
	The purpose of $\uit_\sigma.\textsc{next}$ is just to iterate over all outputs of $\ell(\bnodelabel)$ (lines \ref{alg1prod6}-\ref{alg1prod12}). 
	Then, for printing, we recursively call the printing method of $\uit$ given as a parameter the compose assignment $\lambda(\bnodelabel) \circ \sigma$ (lines~\ref{alg1prod13}-\ref{alg1prodlast}).
	
	The most involved case is the \emph{union/$\gamma$-node iterator} $\uit_{\cup/\gamma}$ (lines \ref{alg1union0}-\ref{alg1unionlast}). 
	This iterator receives either a union node, or a $\gamma$-node. It keeps a \emph{stack} $\ustack$ and an iterator~$\uit$. The elements in the stack are pairs $(\bnodelabel, \gamma')$ where $\bnodelabel$ is a node and $\gamma'$ is a relabel. We assume the standard implementation of a stack with the native methods $\text{\sf push}$, $\text{\sf pop}$, $\text{\sf top}$, and $\text{\sf length}$: the first three define the standard operations over stacks, and $\text{\sf length}$ counts the elements in a stack. 
	The purpose of the stack is to perform a \emph{depth-first-search} traversal of all union and $\gamma$-nodes below the input node $\n$, reaching all possible output nodes $\bnodelabel$ such that there is a path of only union and $\gamma$-nodes between $\n$ and $\bnodelabel$. At every point, if an element $(\bnodelabel,\gamma')$ is in the stack, then $\gamma'$ is equal to the composition of all $\gamma$-nodes in the path from $\n$ to $\bnodelabel$. If the top node of $\ustack$ is a pair $(\bnodelabel, \gamma')$ such that $\bnodelabel$ is a non-relabeling node, then $\uit$ is an iterator for $\bnodelabel$, which enumerates all their valuations. If $p = (\bnodelabel,\gamma')$, we will use the notation $p.\bnodelabel$ to refer to $\bnodelabel$.
	
	To perform the \emph{depth-first-search} traversal of $\cup$- and $\gamma$-nodes, we use the auxiliary method called $\textsc{traverse}(\ustack)$ (lines \ref{alg1traverse0}-\ref{alg1traverseEnd}). While the node $\bnodelabel$ at the top of $\ustack$ is a $\cup$- or $\gamma$-node, we pop the top pair $(\bnodelabel,\gamma')$ from $\ustack$. If $\bnodelabel$ is a $\gamma$-node (lines~\ref{alg1traverseIfStart}-\ref{alg1traverseIfEnd}), we push the pair $(\ell(\bnodelabel), \lambda(\m) \circ \gamma')$ in the stack, namely, we move to the next node and compose the old relabel $\gamma'$ with the current relabel $\lambda(\bnodelabel)$.
	Otherwise, if $\bnodelabel$ is a union node (lines~\ref{alg1traverseElseStart}-\ref{alg1traverseElseEnd}), we first push the right pair $(r(\bnodelabel), \gamma')$ followed by the left pair $(\ell(\bnodelabel), \gamma')$ into the stack. 	
	The while-loop will eventually reach a $\nu$- or $\sigma$-node at the top of the stack and end. 
	It is important to note that $\textsc{traverse}(\ustack)$ takes $\cO(k)$ steps, given that the ECSA is $k$-bounded. Then if $k$ is fixed, the  $\textsc{traverse}$ procedure takes constant time (assuming that $|\Reg|$ is fixed).
	
	The methods of a union/$\gamma$ node iterator $\uit_{\cup/\gamma}$ are then straightforward. For \textsc{create} (lines \ref{alg1union1}-\ref{alg1union6}), we push $(\n,\uid)$ to the top of the stack (recall that $\uid$ is the identity assignment) and then  $\textproc{Traverse}(\ustack)$, finding the first leftmost output node from~$\n$~(lines~\ref{alg1union4}-\ref{alg1union5}). 
	Then we build the iterator $\uit$ of this output node for being ready to start enumerating their outputs (line~\ref{alg1union6}). 
	For \textsc{next}, we consume all outputs by calling $\uit.\textsc{next}$ (line~\ref{alg1unionnext1}). When there are no more outputs, we pop the top node from $\ustack$ and check if the stack is empty or not (lines~\ref{alg1unionnext2}-\ref{alg1unionnext3}). If this is the case, there are no more outputs and we output {\bf false}. Otherwise, we call the \textproc{Travese} method for finding the leftmost $\nu$- or $\sigma$-node from the top of the stack~(lines \ref{alg1unionnext5}-\ref{alg1unionnext6}). When the procedure is done, we create an iterator and move to its first valuation (lines~\ref{alg1unionnext7}-\ref{alg1unionnext8}). 
	For $\textsc{print}(\sigma)$, we see the pair $(\bnodelabel, \gamma')$ at the top, where we remind that $\gamma'$ represents the composition of all $\gamma$-nodes on the way to $\bnodelabel$ (line~\ref{alg1unionprint1}), and $\bnodelabel$ is assumed to be a $\nu$- or $\sigma$-node. Then, we call the print method of $\uit$ which is ready to write the current valuation, and over which we compose with  $\gamma' \circ \sigma$ (line~\ref{alg1unionlast}). 
	
	To prove the correctness of the enumeration procedure, one can verify that $\textsc{Enumerate}(\n)$ in Algorithm~\ref{alg:enumeration} enumerates all the outputs in the set $\sem{\D}(\n)$ one by one (possibly with repetitions). To bound the delay between outputs, we need some definitions and assumptions. First, we define the size of an assignment $\sigma$ as $|\sigma| = \sum_{X \in \dom(\sigma)} |\sigma(X)|$. Second, we assume that, given two assignments $\sigma$ and $\sigma'$, we can perform the composition $\sigma \circ \sigma'$ in time $O(|\Reg|)$, namely, in constant time assuming that $\Reg$ is fixed. One can do this by representing the concatenation of strings symbolically, and then representing $\sigma \circ \sigma'$ in this fashion. Then, if we want the real representation of $\sigma \circ \sigma'$, we can enumerate each string $\sigma \circ \sigma'(X)$ with constant delay, and then $\sigma \circ \sigma'$ in linear time in $|\sigma \circ \sigma'|$. This will allow us to perform a concatenation $\sigma_1 \circ  \ldots \circ \sigma_n = \sigma$ in time $O(n)$ to then construct $\sigma$ in time $O(|\sigma|)$. 
	
	Coming back to delay, the fact that $\D$ is $k$-bounded and garbage-free implies that the delay is bounded by $\cO(k\cdot |\nu_0|)$ if $\nu_0$ is the first output, or $O(k\cdot (|\nu| + |\nu'|))$ if $\nu$ and $\nu'$ are the previous and next outputs, respectively. Specifically:
	\begin{itemize}
		\item $\textsc{create}(\n)$ takes time $\cO(k\cdot |\nu_0|)$,
		\item $\textsc{next}$ takes time $\cO(k\cdot |\nu_0|)$ for the first call, and $\cO(k\cdot (|\nu|+|\nu'|))$ for the next call, and
		\item $\textsc{print}(\uid)$ takes time $\cO(k\cdot |\nu'|)$ where $\nu'$ is the current output to be printed.  
	\end{itemize}
	It is important to note that each composition operation $\sigma \circ \sigma'$ between two assignments in Algorithm~\ref{alg:enumeration} (e.g. lines \ref{alg1bottom10}, \ref{alg1prodlast}, or \ref{alg1unionlast}) takes linear time in the next valuation $\nu'$, given the assumption above and given that $\D$ is garbage-free and then all the content of $\sigma$ and $\sigma'$ will be used in $\nu'$. Instead, the compositions between relabels $\gamma \circ \gamma'$ (e.g., line~\ref{alg1relabelcomp}) take time $O(k\cdot |\Reg|)$, which is constant and do not affect the delay for the next output $\nu'$. 
	Overall, $\textsc{Enumerate}(\n)$ in Algorithm~\ref{alg:enumeration} requires $O(k\cdot (|\nu| + |\nu'|))$ delay to write the next valuation $\nu'$ in the output register, after printing the previous output $\nu$. 
	
	We end by noticing that the existence of an enumeration algorithm with delay $O(k\cdot (|\nu| + |\nu'|))$ between any consecutive outputs $\nu$ and~$\nu'$, implies the existence of an enumeration algorithm with output-linear delay as defined in Section~\ref{sec:enumeration} (see~\cite{MunozR22,MunozR23}).
\end{proof}

Note that the delay on the previous proposition depends on $|\Reg|$ that we assume that it is fix. For this reason, we get output-linear delay given that we assume that $|\Reg|$ is a constant value. 

\begin{figure}[t]
	\centering
	\begin{tikzpicture}[->,
		>=stealth',
		initial text= {},
		initial distance= {2.5mm},
		roundnode/.style={circle,inner sep=1pt},
		rootnode/.style={circle,inner sep=1pt},
		anynode/.style={circle,inner sep=2.5pt,draw},
		greynode/.style={circle,inner sep=1pt,black!60, node distance=5mm},
		squarednode/.style={rectangle,inner sep=2pt}]
		\node[rootnode] (s) at (0.5, 0) {$\sigma$};
		\node[greynode,left of=s] {$\s_1\colon$};
		
		\begin{scope}[xshift=2.5cm]
			\node[roundnode] (s) at (0, -1) {$\sigma$};	
			\node[rootnode] (r) at (0, 0) {$\gamma$};
			\node[greynode,left of=r] {$\s_2\colon$};
			
			\draw[solid] (r) to (s);	
		\end{scope}
		
		\begin{scope}[xshift=5cm]
			\node[rootnode] (u) at (0, 0) {$\cup$};	
			\node[greynode,left of=u] {$\s_3\colon$};
			\node[roundnode] (s) at (-0.5, -1) {$\sigma$};
			\node[anynode] (n) at (0.5, -1) {\,};
			\node[greynode,right of=n] {$:\!\m$};
			\draw[solid] (u) to (s);
			\draw[dashed] (u) to (n);	
		\end{scope}
		\begin{scope}[xshift=7.5cm]
			\node[rootnode] (u) at (0, 0) {$\cup$};	
			\node[greynode,left of=u] {$\s_4\colon$};
			\node[roundnode] (r) at (-0.5, -1) {$\gamma$};
			\node[roundnode] (s) at (-0.5, -2) {$\sigma$};
			\node[anynode] (n) at (0.5, -1) {\,};
			\node[greynode,right of=n] {$:\!\m$};
			\draw[solid] (u) to (r);
			\draw[solid] (r) to (s);
			\draw[dashed] (u) to (n);	
		\end{scope}
		\begin{scope}[xshift=10cm]
			\node[rootnode] (r) at (0, 0) {$\gamma$};
			\node[greynode,left of=r] {$\s_5\colon$};
			\node[roundnode] (u) at (0, -1) {$\cup$};	
			\node[roundnode] (s) at (-0.5, -2) {$\sigma$};
			\node[anynode] (n) at (0.5, -2) {\,};
			\node[greynode,right of=n] {$:\!\m$};
			\draw[solid] (r) to (u);
			\draw[solid] (u) to (s);
			\draw[dashed] (u) to (n);
		\end{scope}
		\begin{scope}[xshift=12.5cm]
			\node[rootnode] (r) at (0, 0) {$\gamma$};
			\node[greynode,left of=r] {$\s_6\colon$};
			\node[roundnode] (u) at (0, -1) {$\cup$};	
			\node[roundnode] (r') at (-0.5, -2) {$\gamma'$};
			\node[anynode] (n) at (0.5, -2) {\,};
			\node[greynode,right of=n] {$:\!\m$};
			\node[roundnode] (s) at (-0.5, -3) {$\sigma$};
			\draw[solid] (r) to (u);
			\draw[solid] (u) to (r');
			\draw[dashed] (u) to (n);
			\draw[solid] (r') to (s);
		\end{scope}
	\end{tikzpicture}
	\caption{Nodes $\s_1, \ldots, \s_6$ are the six possible cases for a node to be safe, where we assume that the rounded node labeled by $\m$ is any node such that $\odepth_\D(\m) \leq 2$. We use $\gamma$ and $\gamma'$ for a relabel and $\sigma$ for a non-relabeling assignment. Solid arrow is used for the $\ell$-edge and dashed arrow is used for the $r$-edge. Also, $\s_1$ and $\s_2$ are safe output-nodes, and $\s_3$ and $\s_4$ are safe union-nodes.}
	\label{fig:safe-nodes}
\end{figure}
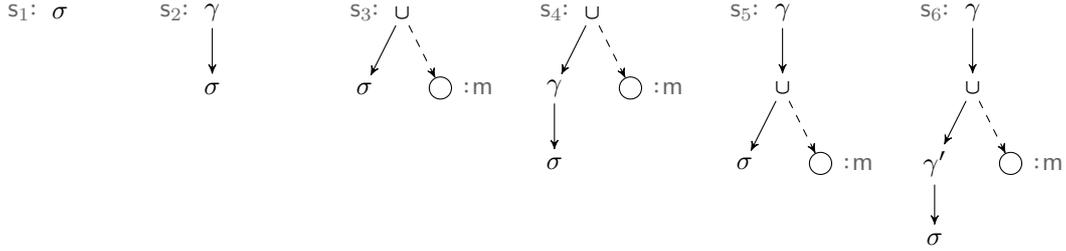

\paragraph{The operations} The next step is to provide a set of operations that allow extending a ECSA $\D$ in a way that maintains $k$-boundedness. Fix an ECSA $\D = (\Reg, \Omega, \N, \ell, r, \lambda)$. For any valuation $\nu \in \Val(\Reg, \Omega)$, any assignment $\sigma \in \Asg(\Reg,\Omega)$, and any nodes $\n$, $\n_1$, and $\n_2$, we define the operations:
\[
\begin{array}{rclrclrcl}
	\dadd(\nu) \! \! \! &\to&\! \! \! \n'  \text{\hspace{1cm}} & \dextend(\n, \sigma) \! \! \! &\to&\! \! \! \n' &  \text{\hspace{1cm}}  \dunion(\n_1, \n_2) \! \! \! &\to&\! \! \! \n' 
\end{array}
\]
such that $\sem{\D}(\n') := \multiset{\nu}$; $\sem{\D}(\n') := \multiset{\nu' \circ \sigma \mid \nu' \in \sem{\D}(\n)}$; and $\sem{\D}(\n') := \sem{\D}(\n_1) \cup \sem{\D}(\n_2)$, respectively.
For $\dextend$ and $\dunion$ we impose some prerequisites to ensure that the results is always garbage-free. Specifically, for $\dextend$ we assume that $\dom(\nu') = \reg(\sigma)$ for every $\nu' \in \sem{\D}(\n)$ and for $\dunion$ we assume that $\dom(\nu_1) = \dom(\nu_2)$ for every $\nu_1 \in \sem{\D}(\n_1)$ and $\nu_2 \in \sem{\D}(\n_2)$.
Both restrictions force that, if $\D$ is garbage-free, then the resulting ECSA after $\dunion$ or $\dextend$ is garbage-free as well. 
Finally, we suppose that these operations extend the data structure with new nodes and that old nodes are immutable after each operation. 

Following the same recipe as in~\cite{MunozR22}, we define a notion of \emph{safe node} for ECSA that will aid for maintaining the $k$-boundedness property after each operation.
For the sake of presentation, we present the notion of safeness by first introducing two additional notions, namely, safe output-node and safe union-node. 
We say that an node $\n$ is a \emph{safe output-node} iff $\n$ is a $\nu$-node or a $\sigma$-node, or if $\n$ is a $\gamma$-node, then $\ell(\n)$ is a $\nu$-node or a $\sigma$-node. 
Instead, we say that $\n$ is a \emph{safe union-node}
iff $\n$ is a union, $\ell(\n)$ is a safe output-node, and $\odepth_{\D}(r(\n)) \leq 2$. 
Finally, we say that $\s$ is a \emph{safe node} iff $\s$ is either a safe output-node, a safe union-node, or $\s$ is a $\gamma$-node and $\ell(\s)$ is a safe union-node. 
In Figure~\ref{fig:safe-nodes}, we display the six possible case for a node to be safe. 

The goal now is to implement $\dadd$, $\dextend$, and $\dunion$ such that the operations receive as input safe nodes and return safe nodes maintaining $\D$ $k$-bounded. For $\dadd(\nu)$, we can create a fresh node $\n'$ such that $\lambda(\n') = \nu$ and then output $\n'$ who is safe. For $\dextend(\n, \sigma)$, assume that $\n$ is safe and $\D$ is $k$-bounded. We have to distinguish two cases: whether $\sigma$ is a relabel or not. If $\sigma$ is not a relabel, then create a fresh node $\n'$ such that $\lambda(\n') = \sigma$ and $\ell(\n') = \n$. We can check that $\n'$ meets the semantics of $\dextend$, $\n'$ is safe, and $\D$ remains $k$-bounded. Instead, if $\sigma$ is a relabel, we have to consider two more cases: whether $\n$ is a $\gamma$-node or not. If $\n$ is not a $\gamma$-node (i.e., cases $\s_1$, $\s_3$, and $\s_4$ in Figure~\ref{fig:safe-nodes}), then add a fresh node $\n'$ such that $\lambda(\n') = \sigma$ and link $\n'$ with $\n$ (i.e., $\ell(\n') = \n$). Otherwise, if $\n$ is a $\gamma$-node with $\lambda(\n) = \gamma$ (i.e., cases $\s_2$, $\s_5$, and $\s_6$ in Figure~\ref{fig:safe-nodes}), then define the fresh node $\n'$ such that $\lambda(\n') = \sigma \circ \gamma$, and connect $\n'$ with $\ell(\n)$, namely, $\ell(\n') = \ell(\n)$. One can check that in both scenarios, $\n'$ is safe, it satisfies the $\dextend$-semantics, and $\D$ is $k$-bounded as expected.

For $\dunion(\n_1, \n_2)$, the construction is a bit more involved. For the sake of simplification, we will only present the case when $\n_1$ and $\n_2$ has the form of $\s_6$ like in Figure~\ref{fig:safe-nodes}. The other cases are variations or simplifications of this case. Assume then that $\n_1$ and $\n_2$ have the form of $\s_6$, specifically, like in Figure~\ref{fig:union-op} (left). The result of $\dunion(\n_1, \n_2)$ is the node $\n'$ at Figure~\ref{fig:union-op} (right).
In this gadget, all nodes are fresh nodes, except $\n_1'$, $\n_2'$, $\m_1$, and $\m_2$ that are reused (i.e., seven fresh nodes in total). First, we can easily check by inspection that $\sem{\D}(\n) = \sem{\D}(\n_1) \cup \sem{\D}(\n_2)$. Second, $\n'$ is safe. Indeed, it has the form of $\s_4$ in Figure~\ref{fig:safe-nodes} since $\n'$ is a safe union-node where $\ell(\n')$ is a safe output-node and $\odepth_\D(\unode_1) \leq 2$ with $\unode_1 = r(\n')$. Finally, the resulting ECSA is $k$-bounded if we choose $k = 4$: 
nodes $\n'$ and $ \unode_1$ are $2$-bounded, and $\unode_2$ and $r(\unode_2)$ are $4$-bounded and $3$-bounded, respectively. 
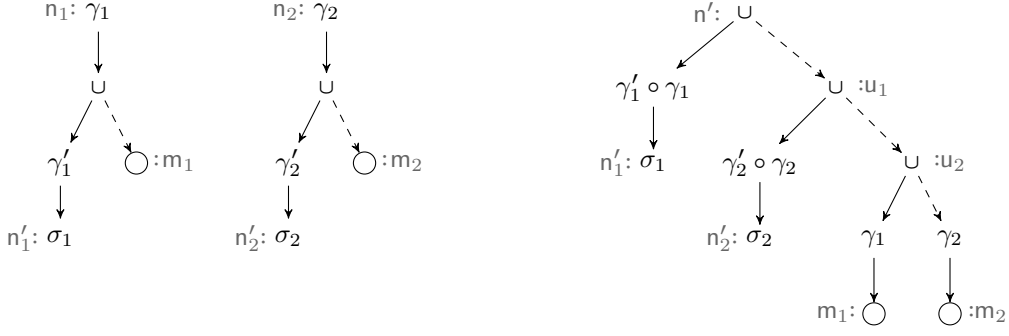
\begin{figure}[t]
	\centering
	\begin{tikzpicture}[->,
		>=stealth',
		initial text= {},
		initial distance= {2.5mm},
		roundnode/.style={circle,inner sep=1pt},
		rootnode/.style={circle,inner sep=1pt},
		anynode/.style={circle,inner sep=2.5pt,draw},
		greynode/.style={circle,inner sep=1pt,black!60, node distance=5mm},
		squarednode/.style={rectangle,inner sep=2pt}]
		
		\node[rootnode] (r) at (0, 0) {$\gamma_1$};
		\node[greynode,left of=r] {$\n_1\colon$};
		\node[roundnode] (u) at (0, -1) {$\cup$};	
		\node[roundnode] (r') at (-0.5, -2) {$\gamma_1'$};
		\node[anynode] (n) at (0.5, -2) {\,};
		\node[greynode,right of=n] {$:\!\m_1$};
		\node[roundnode] (s) at (-0.5, -3) {$\sigma_1$};
		\node[greynode,left of=s] {$\n_1'\colon$};
		\draw[solid] (r) to (u);
		\draw[solid] (u) to (r');
		\draw[dashed] (u) to (n);
		\draw[solid] (r') to (s);
			
		\begin{scope}[xshift=3cm]
			\node[rootnode] (r) at (0, 0) {$\gamma_2$};
			\node[greynode,left of=r] {$\n_2\colon$};
			\node[roundnode] (u) at (0, -1) {$\cup$};	
			\node[roundnode] (r') at (-0.5, -2) {$\gamma_2'$};
			\node[anynode] (n) at (0.5, -2) {\,};
			\node[greynode,right of=n] {$:\!\m_2$};
			\node[roundnode] (s) at (-0.5, -3) {$\sigma_2$};
			\node[greynode,left of=s] {$\n_2'\colon$};
			\draw[solid] (r) to (u);
			\draw[solid] (u) to (r');
			\draw[dashed] (u) to (n);
			\draw[solid] (r') to (s);
		\end{scope}
		
		\begin{scope}[xshift=8.5cm]
			\node[rootnode] (r) at (0, 0) {$\cup$};
			\node[greynode,left of=r] {$\n'\colon$};
			\node[squarednode] (l1) at (-1.2, -1) {$\gamma_1' \circ \gamma_1$};
			\node[squarednode] (o1) at (-1.2, -2) {$\sigma_1$};
			\node[greynode,left of=o1] {$\n_1'\colon$};
			
			\node[roundnode] (u1) at (1.2, -1) {$\cup$};
			\node[greynode,right of=u1] {$\colon\! \unode_1$};
			\node[squarednode] (l2) at (0.2, -2) {$\gamma_2' \circ \gamma_2$};
			\node[squarednode] (o2) at (0.2, -3) {$\sigma_2$};
			\node[greynode,left of=o2] {$\n_2'\colon$};
			
			\node[roundnode] (u2) at (2.2, -2) {$\cup$};
			\node[greynode,right of=u2] {$\colon\!  \unode_2$};
			\node[roundnode] (l1') at (1.7, -3) {$\gamma_1$};
			\node[anynode] (m1) at (1.7, -4) {\,};
			\node[greynode,left of=m1] {$\m_1\colon$};
			\node[roundnode] (l2') at (2.7, -3) {$\gamma_2$};
			\node[anynode] (m2) at (2.7, -4) {\,};
			\node[greynode,right of=m2] {$\colon\! \m_2$};

			\draw[solid] (r) to (l1);
			\draw[dashed] (r) to (u1);
			\draw[solid] (l1) to (o1);
			
			\draw[solid] (u1) to (l2);
			\draw[solid] (l2) to (o2);
			
			\draw[dashed] (u1) to (u2);
			\draw[solid] (u2) to (l1');
			\draw[dashed] (u2) to (l2');
			\draw[solid] (l1') to (m1);
			\draw[solid] (l2') to (m2);

		\end{scope}
		
	\end{tikzpicture}
	\caption{On the left, the structure of input nodes $\n_1$ and $\n_2$ and, on the right, the node $\n'$ that represents $\dunion(\n_1, \n_2)$. In this construction, nodes $\n_1'$, $\n_2'$, $\m_1$, $\m_2$ are reused and all other nodes are fresh nodes.}
	\label{fig:union-op}
\end{figure}

We can conclude that, if we start with a $4$-bounded ECSA $\D$ and we apply $\dadd$, $\dextend$, and $\dunion$ operators between safe nodes, then the operations retrieve safe nodes and the ECSA remains $4$-bounded as well.
If we add that the resulting ECSA is always garbage-free, then $\sem{\D}(\n)$ can be enumerated with output-linear~delay for every $\n$.
Further, we can note that the data structure is fully-persistent~\cite{DriscollSST86} in the sense that $\sem{\D}(\n)$ is immutable after each operation for every node $\n$ in $\D$. This last property will be important for the preprocessing of our algorithm. Summing up, we have proved the following result. 

\begin{theorem}\label{theo:data-structure}
	Assume that $\Reg$ is a fixed set. The operations $\dadd$, $\dextend$, and $\dunion$ take constant time and are fully persistent. Furthermore, if we start from an empty $\D$ and apply these operations over safe nodes, the result node $\n'$ is always a safe node and $\sem{\D}(\n)$ can be enumerated with output-linear delay for every node $\n$.
\end{theorem}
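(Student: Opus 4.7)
The plan is to implement each of the three operations as small gadgets that only add a bounded number of fresh nodes (and never touch existing ones, giving full persistence), verify by a case analysis that the returned node is safe and that the ECSA remains $4$-bounded and garbage-free, and finally invoke Proposition~\ref{prop:enumeration} with $k=4$ to obtain the enumeration claim. The proof is an induction on the sequence of operations: the invariants are (i) $\D$ is $4$-bounded, (ii) $\D$ is garbage-free, and (iii) every node ever returned by an operation is safe.

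For $\dadd(\nu)$, a single fresh $\nu$-node $\n'$ with $\lambda(\n') = \nu$ suffices and is trivially a safe output-node. For $\dextend(\n, \sigma)$, I case-split on whether $\sigma$ is a relabel, and crucially, when both $\sigma$ is a relabel and $\n$ is a $\gamma$-node I fuse them into a single fresh $\gamma$-node storing $\sigma \circ \lambda(\n)$ and pointing to $\ell(\n)$; safety of $\n$ forces $\ell(\n)$ to be a $\nu$- or $\sigma$-node in that case, so the result matches pattern $\s_2$ of Figure~\ref{fig:safe-nodes}. Garbage-freeness is immediate from the prerequisite $\dom(\nu') = \reg(\sigma)$ for every $\nu' \in \sem{\D}(\n)$.

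The main technical obstacle is $\dunion(\n_1, \n_2)$, which requires a case analysis over the six possible shapes of $\n_1$ and $\n_2$ from Figure~\ref{fig:safe-nodes}; the worst case, both $\s_6$, is the seven-node gadget displayed in Figure~\ref{fig:union-op}, while the remaining cases are obtained by deleting or contracting the missing relabels or union branches. For each case I would verify: (i) $\sem{\D}(\n') = \sem{\D}(\n_1) \cup \sem{\D}(\n_2)$ by unfolding the recursive definition of $\sem{\D}$; (ii) the top node $\n'$ matches one of the six safe patterns, which is direct inspection of the gadget; (iii) every fresh node has output-depth at most $4$, which is the argument sketched after Figure~\ref{fig:union-op} ($\n'$ and $\unode_1$ are $2$-bounded, $\unode_2$ is $4$-bounded, and reused subnodes were $\leq 4$ by the inductive hypothesis); and (iv) the prerequisite $\dom(\nu_1) = \dom(\nu_2)$ for $\nu_i \in \sem{\D}(\n_i)$, together with the inductively garbage-free subnodes $\n_i', \m_i$, guarantees that each $\gamma$-node $\gamma_i' \circ \gamma_i$ uses exactly the registers supplied by its left child, preserving garbage-freeness.

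For the complexity bounds: each gadget introduces at most seven new nodes and performs only constantly many relabel compositions, and relabels are partial functions $\Reg \pmap \Reg$ whose composition takes $O(|\Reg|)$ time, i.e.\ constant under the fixed-$\Reg$ assumption; hence each of $\dadd$, $\dextend$, $\dunion$ runs in $O(1)$ time. Since no existing node or edge is ever overwritten, the structure is fully persistent. Combining the three invariants established by the induction with Proposition~\ref{prop:enumeration} (which needs only $k$-boundedness and garbage-freeness) yields output-linear delay enumeration of $\sem{\D}(\n)$ for any node $\n$ reachable after any sequence of operations, completing the proof.
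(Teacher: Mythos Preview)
Your proposal is correct and follows essentially the same route as the paper: implement each operation by a constant-size gadget, maintain the invariants ``$4$-bounded, garbage-free, returned node is safe'' by induction on the operation sequence, and invoke Proposition~\ref{prop:enumeration}. One small slip: in the $\dextend$ case where $\sigma$ is a relabel and $\n$ is a $\gamma$-node, safety of $\n$ does \emph{not} force $\ell(\n)$ to be a $\nu$- or $\sigma$-node---$\n$ may also match patterns $\s_5$ or $\s_6$, in which case $\ell(\n)$ is a safe union-node; your fused $\gamma$-node then matches $\s_5$ or $\s_6$ rather than $\s_2$, but is still safe, so the construction is unaffected.
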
 

\paragraph{The evaluation algorithm} In Algorithm~\ref{alg:evaluation}, we present an evaluation algorithm for $\textsc{EnumNSST}$ when the input NSST is garbage-free. By Theorem~\ref{prop:garbageFree}, we can convert any NSST into an equivalent garbage-free NSST and, therefore, Algorithm~\ref{alg:evaluation} works also for any NSST (after a preprocessing step). 
The main procedure is \textproc{Evaluation} that receives as input a garbage-free NSST $\nsst = (Q, \Sigma, \Omega, \Reg, \Delta, I, F)$ and a string $w =a_1 \ldots a_n$ and enumerates all the strings $\sem{\nsst}(w)$. Recall that $\sem{\nsst}(w)$ is a bag, and that these these may have repetitions during the enumeration~phase. 

\begin{algorithm}[t]
	\caption{The evaluation algorithm for garbage-free NSST. It receives as input a garbage-free NSST $\nsst = (Q, \Sigma, \Omega, \Reg, \Delta, I, F)$ and a string $w = a_1\ldots a_n$, and enumerates all the strings in $\sem{\nsst}(w)$ (with repetitions).}\label{alg:evaluation}
	\smallskip
	\begin{varwidth}[t]{0.6\textwidth}
		\begin{algorithmic}[1]
			\Procedure{Evaluation}{$\nsst, w$}
			\State $\D \gets \emptyset$
			\State $\Told \gets \textproc{InitialStates}(I)$
			\For{$i=1$ \textbf{to} $n$}
			\State $\Tnew \gets \emptyset$
			\ForEach{$p \in \dom(\Told)$}
			\ForEach{$(p,a_i, \sigma, q) \in \Delta$}
			\State $\Tnew \gets \textproc{UpdateLT}(\Tnew,q, \Told[p], \sigma)$
			\EndFor
			\EndFor
			\State $\Told \gets \Tnew$
			\EndFor
			\State $\textproc{Enumerate}(\Told, F)$
			\EndProcedure
			\State
			\Procedure{InitialStates}{$I$}
			\State $T \gets \emptyset$
			\ForEach{$p \in \dom(I)$}
			\State $T[p] \gets \D.\dadd(I(p))$
			\EndFor
			\State \Return $T$
			\EndProcedure
			\algstore{myalg}
		\end{algorithmic}
	\end{varwidth} 
	\begin{varwidth}[t]{0.45\textwidth}
		\begin{algorithmic}[1]
			\algrestore{myalg}
			\Procedure{UpdateLT}{$T, q, \n, \sigma$}
			\State $\n' \gets \D.\dextend(\n, \sigma)$
			\If{$T[q] = \bot$}
			\State $T[q] \gets \n'$
			\Else	
			\State $T[q] \gets \D.\dunion(T[q], \n')$
			\EndIf
			\State \Return $T$
			\EndProcedure
			\State
			\Procedure{Enumerate}{$T, F$}
			\ForEach{$p \in \dom(F) \cap \dom(T)$}
			\ForEach{$\nu \in \sem{\D}(T[p])$}
			\State \textbf{print} $\hat{\nu}(F(p))$
			\EndFor
			\EndFor
			\EndProcedure
		\end{algorithmic}
	\end{varwidth} 
\end{algorithm} 
Intuitively, the algorithm evaluates $\nsst$ over $w$ in one pass, by simulating all runs of $\nsst$ on-the-fly. It reads each symbol $a_i$ sequentially and keeps a set of \emph{active states}. A state $q \in Q$ is \emph{active} after reading letter $a_i$ if there exists a run of $\nsst$ over $a_1\ldots a_i$ that reaches $q$. Then for each active state $q$ we maintain a node $\n_q$ that represents all the valuations of runs that reach $q$. After reading $w$ completely, we can retrieve the outputs from the nodes whose corresponding active states are also final states. In the following, we explain this algorithm more in detail and discuss its correctness. 

In Algorithm~\ref{alg:evaluation}, we use two data structures: (1) an Enumerable Compact Set with Assignments (ECSA) $\D = (\Reg, \Omega, \N, \ell, r, \lambda)$ and (2)~lookup tables for storing nodes in $\N$.
For the ECSA $\D$, we use the methods $\dadd$, $\dextend$, and $\dunion$ explained in the previous subsections. By Theorem~\ref{theo:data-structure}, we assume that each such operations takes constant time. Additionally, we will write $\D \gets \emptyset$ to denote that we initialize $\D$ as empty. 
We will assume that $\D$ is globally available by all procedures and subprocedures of Algorithm~\ref{alg:evaluation}.

To remember the set of active states, we use \emph{lookup tables}, usually denoted by $T$, whose entries are states (i.e., active states) and each entry stores a node $\n$ from $\D$.
Formally, we can define a lookup table $T$ as a partial function $T: Q \pmap \N$. We write $\dom(T)$ to denote all the states that have an entry in $T$ and $\emptyset$ to denote the empty lookup table. If $q\in  \dom(T)$, we write $T[q]$ to retrieve the node at entry $q$. As usual, we write $T[q] = \bot$ when $q \notin \dom(T)$ (i.e., $T[q]$ is not defined). We use $T[q] \gets \n$ to declare that the $q$-entry of $T$ is updated with the node $\n$. Since we assume the RAM computational model, we can assume that each update or query (e.g., $q \in \dom(T)$) to the lookup table takes constant time. 

We give the preprocessing phase of Algorithm~\ref{alg:evaluation} in the $\textproc{Evaluation}$ procedure, specifically, from lines $2$ to $9$. This phase starts by initializing the ECSA $\D$ as empty (line 2) and setting the initial lookup table $\Told$ (line 3). We code the initialization of $\Told$ in the subprocedure $\textproc{InitialStates}$ (lines 12-16). The initialization receives as input the initial function $I$ of $\nsst$ and prepares $T$ with the set of initial states $p \in \dom(I)$ with their corresponding initial valuations $I(p)$. It finally returns $T$ which represents the set of initial active states. 

The main for-loop of the evaluation algorithm proceeds as follows (lines 4-9). It reads the string $w =a_1 \ldots a_n$ sequentially, from $i =1$ to $n$. 
For each $a_i$, it initializes a lookup table $\Tnew$ to empty (line 5), where $\Tnew$ represents the set of new active states after reading $a_i$.
Then for each active state $p \in \dom(\Told)$ and each transition $t$ of the form $t = (p, a_i, \sigma, q) \in \Delta$, it fires $t$ and updates $\Tnew$ accordingly (lines 6-8). To update the lookup table $\Tnew$ with $t$, we use the method \textproc{UpdateLT} (lines 17-23). The method receives a lookup table $T$, a state $q$, a node $\n$, and an assignment $\sigma$, and updates the entry $T[q]$ with $\n$ and $\sigma$. Specifically, if $T[q]$ is not defined, it inserts $\n' = \D.\dextend(\n,\sigma)$ in $T[q]$ (lines 19-20). Otherwise, it updates $T[q]$ with the union between the current node at $T[q]$ and $\n'$. Intuitively, in this step we are extending the outputs in $p$ represented by node $\n$ with the assignment $\sigma$ of the transition $t$ and store it in $\Tnew[q]$, accordingly.
Finally, after all transitions are fired, we swap $\Told$ with $\Tnew$ (line 9) and proceed to the next symbol. 

After we read the last symbol $a_n$, we continue with the enumeration phase, by calling the subprocedure \textproc{Enumerate} (line 10). This enumeration subprocedure (lines 25-28) receives as input the old table $\Told$, denoted by $T$, and the final function $F$.
For the enumeration, it considers each state $p$ that is both final (i.e., $p \in \dom(F)$) and active (i.e., $p\in \dom(T)$). We assume here that $\dom(F) \cap \dom(T)$ is computed once, before starting the iteration for each $p \in \dom(F) \cap \dom(T)$ (line 26). 
The enumeration proceeds by taking each valuation $\nu \in \sem{\D}(T[p])$ and printing $\hat{\nu}(F(p))$ (lines 27-28). By Proposition~\ref{prop:enumeration}, we know that we can enumerate all $\sem{\D}(T[p])$ with output-linear delay and without preprocessing. Then getting the next $\nu \in \sem{\D}(T[p])$ and printing the output string $u = \hat{\nu}(F(p))$ takes time proportional to $O(|u|)$, given that $\D$ is garbage-free (since $\nsst$ is garbage-free). In other words, we can conclude that the enumeration phase runs with output-linear~delay.

We end by discussing the correctness and running time of the evaluation algorithm. For the correctness, one can prove by induction on the length of $w$ that before reading each $a_i$, $\dom(\Told)$ is the set of active states and $\sem{\D}(\Told[p])$ are all valuations of  runs of $\nsst$ over $a_1 \ldots a_{i-1}$. Furthermore, $\D$ is always garbage-free given that $\nsst$ is garbage-free. Then, after reading the last letter $a_n$, $\Told$ contains all active states and its corresponding valuations, and the method \textproc{Enumerate} retrieves all strings in $\sem{\nsst}(w)$ correctly, and with output-linear delay. For the time complexity of the preprocessing phase, we can check that each step takes constant time (i.e., by Theorem~\ref{theo:data-structure}) and in each iteration we perform at most $|\nsst|$ number of steps. Overall, the evaluation algorithm takes at most  $O(|\nsst|\cdot |w|)$ number of steps to finish.

\section{Proofs of Section~\ref{sec:composition}}\label{sec:compAppendix}
	
\subsection{Proofs of Theorem~\ref{theo:nsstcomposition}}

\begin{proof}
Let $\nsst_1 = (Q_1, \Sigma, \Omega, \Reg_1, \Delta_1, I_1, F_1)$ and $\nsst_2 = (Q_2, \Omega, \Gamma, \Reg_2, \Delta_2, I_2, F_2)$ be two NSSTs such that $\Omega \subseteq \Gamma$. By Proposition~\ref{prop:garbageFree}, we can assume for the rest of the proof that $\nsst_1$ and $\nsst_2$ are garbage-free. 

Given sets $A$ and $B$, we define the set $\Asgs(A, B) \subseteq \Asg(A,B)$ of all copyless assignments that are copyless both in $A$ and $B$, simultaneously; namely, every element in $A \cup B$ appears at most once in the image of $\sigma$ for every $\sigma \in \Asgs(A,B)$. Note that $\Asgs(A, B)$ is a finite set. In the sequel, we will use $\zeta$ to denote an assignment in $\Asgs(A,B)$.

We define the NSST $\nsst$ that will work as the composition of $\nsst_1$ and $\nsst_2$ such that:
\[
\nsst \ = \ (Q, \Sigma, \Gamma, \Reg, \Delta, I, F)
\]
where $Q = Q_1 \times \{f\colon \Reg_1 \pmap (Q_2 \times \Asgs(\Reg_2, \Reg) \times Q_2) \}$. Intuitively, the states of $\nsst$ simulates a run of $\nsst_1$ in the first component. In the second component, $\nsst$ remember a \emph{subrun} $f(X) = (q_1, \sigma, q_2)$ over the content of register $X \in \Reg_1$ from $\nsst_1$, where $q_1 \in Q_2$ is the starting state, $q_2 \in Q_2$ is the ending state, and $\sigma$ is a \emph{summary} of the registers content of $\nsst_2$. To define the other components of $\nsst$ (i.e.,  $\Delta$, $I$, and $F$) we need several technical definitions that we introduce next.

\paragraph{Summaries} We start by formalizing the notion of summary, introduced in~\cite{AlurC10}. Given a word $w \in (\Reg_2 \cup \Gamma \cup \Reg)^*$ we define the \emph{summary of $w$ by $\Reg$} as the pair $(\bar{w}, \zeta)$ such that, if $w$ is of the form:
\[
w = u_1 X_1 u_2 X_2 \ldots X_k u_{k+1}
\]
for some $k \geq 0$ such that $X_i \in \Reg_2$ and $u_i \in (\Gamma \cup \Reg)^*$, then $\bar{w} = Y_1 X_1 Y_2 X_2 \ldots X_k Y_{k+1}$ for some pairwise distinct registers $Y_1, \ldots, Y_{k+1} \in \Reg$ and $\zeta \in \Asg(\{Y_1, \ldots, Y_{k+1}\}, \Gamma \cup \Reg)$ satisfies that $\zeta(Y_i) = u_i$ for every $i \leq k+1$. The idea of a summary $(\bar{w}, \zeta)$ is to replace each maximal subword $u_1, \ldots, u_{k+1} \in (\Gamma \cup \Reg)^*$ from $w$ with distinct registers $Y_1, \ldots, Y_{k+1}$, represented by $\bar{w}$, and store them into the registers $Y_1, \ldots, Y_{k+1}$, represented by $\zeta$. In particular, one can check that $\zeta(\bar{w}) = w$ and that $\zeta$ is always copyless.

We can generalize the notion of summary from words to assignments as follows. Let $\sigma \in \Asg(\Reg_2, \Gamma \cup \Reg)$ be an assignment and assume that $\dom(\sigma) = \{X_1,\ldots, X_\ell\}$. 
The summary of $\sigma$ by $\Reg$ is a pair $(\bar{\sigma}, \zeta)$ such that, if $(\bar{w}_1, \zeta_1), \ldots, (\bar{w}_\ell, \zeta_\ell)$ are summaries of $\sigma(X_1), \ldots, \sigma(X_\ell)$, respectively, with $\dom(\zeta_i) \cap \dom(\zeta_{i'}) = \emptyset$ for every $i \neq i'$, then $\bar{\sigma}(X_j) = \bar{w}_j$ for every $j \leq \ell$ and $\zeta = \zeta_1 \cup \ldots \cup \zeta_{\ell}$. Intuitively, we are summarizing every $\sigma(X_i)$ together into a single assignment $\bar{\sigma}$ and $\zeta$ is the disjoint union of $\zeta_i$.
Finally, the summary of a tuple $(\sigma_1, \ldots, \sigma_n)$ with each $\sigma_i \in \Asg(\Reg_2, \Gamma \cup \Reg)$ is a pair $((\bar{\sigma}_1, \ldots, \bar{\sigma}_n), \zeta_1 \cup \ldots \cup \zeta_n)$ such that, for every $i \leq n$, $(\bar{\sigma}_i, \zeta_i)$ is a summary of $\sigma_i$ and $\dom(\zeta_{i}) \cap \dom(\zeta_{j}) = \emptyset$ for every $j \neq i$.  
Notice that $\zeta$ is an assignment in $\Asg(\Reg, \Gamma)$ and it is always copyless, since we always pick disjoint set of registers for each~component.

\paragraph{Semigroup of subruns} The following semigroup will be useful for composing subruns and checking if the composition is valid. Consider the semigroup with domain $\mathcal{S} = (Q_2 \times \Asg(\Reg_2, \Gamma \cup \Reg) \times Q_2) \cup \{\bot\}$ where $\bot$ is a new fresh element. Then we define the operation $\circledast$ of the semigroup such that:
\[
(q_1, \sigma, q_2) \circledast (q_3, \sigma', q_4) \ = \ \left\{ 
\begin{array}{ll}
	(q_1, \sigma \circ \sigma', q_4) & \text{if } q_2 = q_3 \\
	\bot & \text{otherwise}
\end{array}
\right.
\]
Further, $s_1 \circledast s_2 = \bot$ whenever $s_1 = \bot$ or $s_2 = \bot$.
One can easily check that $(\mathcal{S}, \circledast)$ forms a semigroup, given that $\circledast$ is well-defined in $\mathcal{S}$ (i.e., the result is always an element in $\mathcal{S}$), and is associative. 
Below, we will use the semigroup $(\mathcal{S}, \circledast)$ to easily apply and check that subruns match. Indeed, one can check that $s_1 \circledast s_2 = s_3$ and $s_3 \neq \bot$, then $s_3$ is the subrun representing $s_1$ followed by $s_2$. Instead, if $s_1 \circledast s_2 = \bot$ then either $s_1$ or $s_2$ are not valid subruns, or $s_1$ followed by $s_2$ does not form a subrun. 

For using the semigroup $(\mathcal{S}, \circledast)$, the notion of homomorphism between semigroups will be also relevant. Let $A$ be any finite alphabet. An \emph{homomorphism} from $(A^*, \cdot)$ to $(\mathcal{S}, \circledast)$ is a function $H: A^* \to \mathcal{S}$ such that $H(w_1 \cdot w_2) = H(w_1) \circledast H(w_2)$. In the following, we will usually define homomorphisms as a function $H: A \to \mathcal{S}$, that extends naturally to a homomorphism $H: A^* \to \mathcal{S}^*$ such that $H(a_1\ldots a_n) = H(a_1) \circledast \ldots \circledast H(a_n)$.

\paragraph{Assignment indexing} The last technical ingredient is a notation for indexing all letters from an assignment in $\Asg(\Reg_1, \Omega)$. 
Formally, for $c \in \Reg_1 \cup \Omega$, let $\indx(c, X, i)$ be a function such that $\indx(c, X, i) = (c,X,i)$ if $c \in \Omega$ and $\indx(c, X, i) = c$, otherwise. 
Then we define $\indxsigma$ as an assignment in $\Asg(\Reg_1, \Omega \times \Reg_1 \times \bbN)$ such that $\dom(\indxsigma) = \dom(\sigma)$ and, if $\sigma(X) = c_1 \ldots c_n$,~then:
\[
\indxsigma(X) = \indx(c_1, X, 1) \ldots \indx(c_n, X, n).
\]
For the special case that $\sigma(X) = \epsilon$, we define $\indxsigma(X) = (\epsilon, X, 1)$.
In other words, $\indxsigma$ works as $\sigma$ where each letter in $\sigma(X)$ is indexed by $X$ and its position. For example, below we show an assignment $\sigma$ and its corresponding $\indxsigma$.
\[
\begin{array}{ll}
\begin{array}{rl}
	\sigma: \!\! & X := baXab \\
	& Y := X b \\
	& Z := \epsilon 
\end{array} & \ \ \ \ \ 
\begin{array}{rl}
	\indxsigma: \!\! & X := (b,X,1) (a,X,2) Y (a,X,4) (b,X,5) \\
	& Y := X (b,Y,2) \\
	& Z := (\epsilon, Z, 1)
\end{array} 
\end{array}
\]
Finally, define $\indx[\sigma]$ to be all letters of the form $(c,X,i)$ that appear in $\indxsigma$. We call $(c,X,i)$ an \emph{index} of $\sigma$. 

It will be convenient to extend the notation $\indxsigma$ to words. Specifically, for a word $w = c_1 \ldots c_n \in (R_1 \cup \Omega)^*$ we define $\indxw = \indx(c_1, 1) \ldots \indx(c_n, n)$ where $\indx(c,i) = (c,i)$ if $c\in \Omega$, and $\indx(c,i) = c$, otherwise. If $w = \epsilon$, we define $\indxw = (\epsilon, 1)$. For the word case, note that we omit the $X$ component in the index, given that it is unnecessary. 

The purpose of $\indxsigma$ is to identify each letter $a$ (or $\epsilon$) in $\sigma$ and to assign a special code $(a,X,i)$. The set $\indx[\sigma]$ will allow us to collect all these codes for deciding the transitions that we will use when the NSST $\nsst_1$ takes a transition. This will be more clear in the next subsection, when we finally define the transition relation of $\nsst$.

\paragraph{Definition of $\Delta$} We have all the ingredients to define the components $\Delta$, $I$, and $F$ of the NSST $\nsst$. For defining $\Delta$, we repeat the following steps, until all possible decisions are covered.
\begin{enumerate}
	\item Take a state $(p, f) \in Q$ (recall that $f: \Reg_1 \pmap Q_2 \times \Asgs(\Reg_2, \Reg) \times Q_2$).
	\item Take a transition $(p, a, \sigma, q) \in \Delta_1$. 
	\item Let $\indx[\sigma] = \{(a_1, X_1, i_1), \ldots, (a_\ell, X_\ell, i_\ell)\}$ be all indexes of $\sigma$.
	\item Choose transitions $(q_1, a_1, \sigma_1, q_1'), \ldots, (q_\ell, a_\ell, \sigma_\ell, q_\ell') \in \Delta_2$.
	For the special case that $a_i = \epsilon$ for some $i \leq \ell$, choose a pair of the form $(q_i, \epsilon, \sigma_{\operatorname{id}}, q_i)$ where $q_i \in Q_2$ and $\sigma_{\operatorname{id}}$ is the identity function over $\Reg_2$.
	\item Define the homomorphism $H: \Reg_1 \cup \indx[\sigma] \to \mathcal{S}$ from words to the semigroup $(\mathcal{S}, \circledast)$ such that
	 $H(X) = f(X)$ for every $X \in \Reg_1$ and $H((a_j, X_j, i_j)) = (q_j, \sigma_j, q_j')$ for every $j \leq \ell$. 
	\item If $H(\indxsigma(X)) \neq \bot$ for every $X \in \dom(\indxsigma)$, then continue.
	\item Let $\dom(\sigma) = \{Y_1, \ldots, Y_m\}$ and let $H(\indxsigma(Y_i)) = (p_i, \pi_i, p_i')$ for every $i \leq m$. 
	\item Construct the summary $((\bar{\pi}_1, \ldots, \bar{\pi}_m), \zeta)$ of $(\pi_1, \ldots, \pi_m)$.
	\item Define the partial function $f'$ such that $\dom(f') =\{Y_1, \ldots, Y_m\}$ and $f'(Y_i) = (p_i, \bar{\pi}_i, p_i')$.
	\item Add the transition $((p,f), a, \zeta, (q, f'))$ to $\Delta$.  
\end{enumerate}
For understanding the previous construction, recall that $(p, f) \in Q$ represents that $\nsst_1$ is in state $p$ and, for every register $X$ of $\nsst_1$,  $\nsst_2$ has a subrun $f(X)$ over the content of $X$ (step (1)). Then, if we move $\nsst_1$ by using $(p,a,\sigma, q) \in \Delta_1$ (step (2)), then we need to update the moves of $\nsst_2$, namely, update $f$ with $\sigma$ and~$\Delta_2$. The moves of $\nsst_2$ for registers in $\sigma$ are encoded in $f$, but we still have to decide how to move with letters in~$\sigma$. For this reason, we choose one transition $(q_j, a_j, \sigma_j, q_j')$ for each letter $a_j \in \Omega$ appearing in $\sigma$, represented by the index $(a_j, X_j, i_j)$ (step (3) and (4)). Thus, for simulating $\nsst_2$ over $\sigma$, we define the homomorphism $H$, which for every register content $X$ uses $f(X)$ and for every letter indexed as $(a_j, X_j, i_j)$ uses $(q_j, \sigma_j, q_j')$ (step (5)). To check that these guesses of the subruns and the transitions are correct, we check that the homomorphism $H$ applied to each content $\indxsigma(X)$ is different from $\bot$ (step (6)). If this is the case, it means that the subruns and guessed transitions coincide, and we proceed to construct the new transition in $\Delta$ (steps (7) to (10)). Towards this goal, we summarize the subrun $H(\indxsigma(Y_i)) = (p_i, \pi_i, p_i')$ and store it in $f'$ as $f'(Y_i) = (p_i, \bar{\pi}_i, p_i')$ (steps (7) to (9)). The summaries produce the copyless assignment $\zeta$ that stores the partial valuations of the summaries in $\Reg$-registers. Finally, we define the transition $((p,f), a, \zeta, (q, f')) \in \Delta$, which encodes the whole process above.

It is important to note that $\zeta$ in the transition $((p,f), a, \zeta, (q, f')) \in \Delta$ is a copyless assignment. Indeed, given that $H$ works as a homomorphism and there is a single appearance of each register $X$ in $\sigma$, then each register of $\Reg$ appears at most once over all assignments $\pi_1, \ldots, \pi_m$. Therefore, the construction here always produces copyless assignments.

\paragraph{Definition of $I$} For defining the initial partial function $I$, we proceed similar to $\Delta$:
\begin{enumerate}
	\item Take a state $p \in Q_1$ such that $I_1(p) = \nu$.
	\item Let $\indx[\nu] = \{(a_1, X_1, i_1), \ldots (a_\ell, X_\ell, i_\ell)\}$ be the indexes of $\nu$.
	\item Choose transitions $(q_1, a_1, \sigma_1, q_1'), \ldots, (q_\ell, a_\ell, \sigma_\ell, q_\ell') \in \Delta_2$.
	For the special case that $a_i = \epsilon$ for some $i \leq \ell$, choose a pair of the form $(q_i, \epsilon, \sigma_{\operatorname{id}}, q_i)$ where $q_i \in Q_2$ and $\sigma_{\operatorname{id}}$ is the identity function over $\Reg_2$.
	\item Define the homomorphism $H: \indx[\nu] \to \mathcal{S}$ from words to the semigroup $(\mathcal{S}, \circledast)$ such that $H((a_j, X_j, i_j)) = (q_j, \sigma_j, q_j')$ for every~$j \leq \ell$.  
	\item If $H(\indxnu(X)) \neq \bot$ for every $X \in \dom(\indxnu)$, then continue.
	\item Let $\dom(\nu) = \{Y_1, \ldots, Y_m\}$ and let $H(\indxnu(Y_i)) = (p_i, \pi_i, p_i')$ for every $i \leq m$. 
	\item Construct the summary $((\bar{\pi}_1, \ldots, \bar{\pi}_m), \zeta)$ of $(\pi_1, \ldots, \pi_m)$.
	\item Define the partial function $f$ such that $\dom(f) =\{Y_1, \ldots, Y_m\}$ and $f(Y_i) = (p_i, \bar{\pi}_i, p_i')$.
	\item Define $I((p, f)) = \zeta$. 
\end{enumerate}
The definition of $I$ follows the same approach as for the transition relation $\Delta$. Intuitively, if we start with a state $p$ such that $I_1(p)$ is defined and equal to $\nu$, then we run $\nsst_2$ over the content in $\nu$. Given that $\nu$ is an assignment, the definition follows the same ideas as for $\Delta$, by switching $\sigma$~with~$\nu$.  

\paragraph{Definition of $F$} The construction of the final function $F$ is slightly different to $\Delta$ and $I$, although we present it in a similar fashion. 
\begin{enumerate}
	\item Take a state $(p, f) \in Q$ such that $F_1(p)$ is defined. 
	\item Let $\indx[F_1(p)] = \{(a_1, i_1), \ldots (a_\ell, i_\ell)\}$ be the indexes of $F_1(p)$.
	\item Choose transitions $(q_1, a_1, \sigma_1, q_1'), \ldots, (q_\ell, a_\ell, \sigma_\ell, q_\ell') \in \Delta_2$ (not necessarily different).
	For the special case that $F_1(p) = \epsilon$ and $\indx[F_1(p)] = \{(\epsilon, 1)\}$, choose one tuple of the form $(r, \epsilon, \sigma_{\operatorname{id}}, r)$ for some $r \in Q_2$ and $\sigma_{\operatorname{id}}$ is the identity function over $\Reg_2$.
	\item Define the homomorphism $H: \Reg_1 \cup \indx[F_1(p)] \to \mathcal{S}$ from words to the semigroup $(\mathcal{S}, \circledast)$ such that
	$H(X) = f(X)$ for every $X \in \Reg_1$ and $H((a_j, i_j)) = (q_j, \sigma_j, q_j')$ for every $j \leq \ell$. 
	\item If $H(F_1(p)) = (p_0, \pi, p_f)$ and both $I_2(p_0)$ and $F_2(p_f)$ are defined, then define: 
	$$
	F((p,f)) = [I_2(p_0)]\big(\pi(F_2(p_f))\big).
	$$
\end{enumerate}
Here, the goal is to reconstruct the final output of $\nsst_2$, given that $\nsst_1$ reaches a final state where $F_1(p)$ is defined (step (1)). Then, similar to $\Delta$, we choose the transitions for the letters in $F_1(p)$ (steps (2) and (3)), and construct the homomorphism $H$ to simulate $\nsst_2$ over $F_1(p)$ (step (4)). If $H(F_1(p)) = (p_0, \pi, p_f)$, this means that the subruns and transitions of $\nsst_2$ match. However, we have to still check that $p_0$ is an initial state and $p_f$ is a final state. If this is the case, then we can reconstruct the output of $\nsst_2$ with the expression $[I_2(p_0)]\big(\pi(F_2(p_f))\big)$ (step (5)).

\paragraph{Correctness} This concludes the construction of $\nsst$. To show that $\sem{\nsst}(\doc) = \sem{\nsst_1 \circ \nsst_2}(\doc)$ for every document $\doc$, the proofs goes by making a bijection between accepting runs $\rho$ of $\nsst$ over $\doc$ and pair of runs $(\rho_1, \rho_2)$ such that $\rho_1$ is an accepting run of $\nsst_1$ over $\doc$ and $\rho_2$ is an accepting run of $\nsst_2$ over the output of $\rho_1$.
For defining this bijection $F$, we can follow the construction above and, for every run $\rho$ of $\nsst$ over $\D$, construct a unique pair $(\rho_1, \rho_2)$ by induction. For showing that $F$ is injective, the garbage-free assumption over $\nsst_1$ and $\nsst_2$ helps, given that all the content tracked by $\nsst$ is useful, and then always produce a unique pair $(\rho_1, \rho_2)$. For proving that $F$ is surjective, from a pair of runs $(\rho_1, \rho_2)$ we can construct the run $\rho$ over $\doc$ by following the construction above. 
Given that there is a bijection between $\sem{\nsst}(\doc)$ and $\sem{\nsst_1 \circ \nsst_2}(\doc)$ for every $\doc$, we conclude that~$\sem{\nsst} = \sem{\nsst_1\circ \nsst_2}$. 

\paragraph{Size of $\nsst$} To bound the size of the NSST $\nsst$, let $|Q_1| = k$, $|\Reg_1|=\ell$, $|Q_2| = m$, and $|\Reg_2| = n$. Recall that $Q = Q_1 \times \mathcal{F}$ where: 
\[
\mathcal{F} \ = \ \{f: \Reg_1 \pmap Q_2 \times \Asgs(\Reg_2, \Reg) \times Q_2\}.
\]

First, we bound the number of register in $\Reg$ that we need to do the summaries (i.e., $|\Reg|$). For this, we need to bound the number of states to do a summary of one $f \in \mathcal{F}$. For one assignment $\sigma \in \Asgs(\Reg_2, \Reg)$, we will need at most $(n+1)^2$ registers. Given that we have at most $\ell$ of these assignments in $f$, then we need at most $\ell \cdot (n+1)^2$ registers in $\Reg$, namely, $|\Reg| \leq \ell \cdot (n+1)^2$. To simplify the following calculations, let $N = \ell \cdot (n+1)^2$. 

Next, we bound the size of $\Asgs(\Reg_2, \Reg)$. For any $\sigma \in \Asgs(\Reg_2, \Reg)$ and any $X \in \Reg_2$, we can put at most $(n+N+1)!$ different strings at the entry $\sigma(X)$. Therefore, $|\Asgs(\Reg_2, \Reg)| \leq ((n+N+1)!)^n$.

To finish, we bound the size of $Q_1 \times \mathcal{F}$. Given that $\mathcal{F}$ are partial functions we have that the number of possible elements in the image is:
\[
|Q_2 \times \Asgs(\Reg_2, \Reg) \times Q_2| \leq m^2 \cdot ((n+N+1)!)^n
\]
Then $|\mathcal{F}| \leq (m^2 \cdot ((n+N+1)!)^n)^\ell$. We conclude that:
\[
\begin{array}{rcl}
	|Q| & \leq & |Q_1| \cdot |\mathcal{F}| \\
	& \leq & k \cdot (m^2 \cdot ((n+N+1)!)^n)^\ell \\
	& \leq & k \cdot m^{2\ell} \cdot ((n+N+1)!)^{n\ell} \\
	& \leq & k \cdot m^{2\ell} \cdot ((n+\ell \cdot (n+1)^2+1)!)^{n\ell} \\
	& \leq & 2^{\log(k)+2\ell\cdot \log(m)+n \ell \cdot (\ell \cdot (n+1)^2) \cdot \log(\ell \cdot (n+1)^2)}
\end{array}
\]
where for the last equation we are using that $x! \leq x^x$ for $x$ big enough. Then the number of states of $\nsst$ is in $O\big(2^{g(|\nsst_1|, |\nsst_2|)}\big)$ for some polynomial $g(x, y)$. Note that one can similarly bound the number of transitions in $O\big(2^{g(|\nsst_1|, |\nsst_2|)}\big)$.
\end{proof} 	

\end{document}